\newcommand{\database}{{\mathcal{D}^n}}    
\newcommand{\noise}{{X}}    
\newcommand{\KM}{{\mathcal{K}}}    
\newcommand{\e}{{\epsilon}}    
\newcommand{\dlt}{{\delta}}    
\newcommand{\loss}{{\mathcal{L}}}    
\newcommand{\D}{{\Delta}}    
\newcommand{\p}{{\mathcal{P}}}  
\newcommand{\R}{{\mathbb{R}}}  
\newcommand{\N}{{\mathbb{N}}}  
\newcommand{\Z}{{\mathbb{Z}}}  
\newtheorem{theorem}{Theorem}
\newtheorem{definition}{Definition}
\newtheorem{corollary}[theorem]{Corollary}
\begin{document}

\title{Optimal Noise Adding Mechanisms for Approximate Differential Privacy}

\author{
\authorblockN{Quan Geng, and Pramod Viswanath}\\
\authorblockA{Coordinated Science Laboratory and Dept. of ECE \\
University of Illinois, Urbana-Champaign, IL 61801 \\
Email: \{geng5, pramodv\}@illinois.edu} }

\maketitle

\begin{abstract}

We study the (nearly) optimal mechanisms in $(\epsilon,\delta)$-approximate differential privacy for integer-valued query functions and vector-valued (histogram-like) query functions under a utility-maximization/cost-minimization framework. We characterize the tradeoff between $\epsilon$ and $\delta$ in utility and privacy analysis for histogram-like query functions ($\ell^1$ sensitivity), and show that the $(\epsilon,\delta)$-differential privacy is a framework not much more general than the $(\epsilon,0)$-differential privacy and $(0,\delta)$-differential privacy in the context of $\ell^1$ and $\ell^2$ cost functions, i.e., minimum expected noise magnitude and noise power. In the same context of $\ell^1$ and $\ell^2$ cost functions,  we show the near-optimality of uniform noise mechanism and discrete Laplacian mechanism in the high privacy regime (as $(\epsilon,\delta) \to (0,0)$). We conclude that in $(\epsilon,\delta)$-differential privacy,  the optimal noise magnitude and noise power are $\Theta(\min(\frac{1}{\epsilon},\frac{1}{\delta}))$ and $\Theta(\min(\frac{1}{\epsilon^2},\frac{1}{\delta^2}))$, respectively, in the high privacy regime.

\end{abstract}

\section{Introduction} \label{sec:intro}


Differential privacy is a framework to quantify to what extent  individual privacy in a statistical database is preserved while releasing useful statistical information about the database \cite{DMNS06}. The basic idea of differential privacy is that the presence of any individual data in the database should not affect the final released statistical information   significantly, and thus it can give strong privacy guarantees against an adversary with arbitrary auxiliary information. For more background  and motivation of differential privacy, we refer the readers to the survey \cite{DPsurvey}.

The standard approach to preserve $\epsilon$-differential privacy for real-valued query function is to perturb the query output by adding random noise with Laplacian distribution. Recently, Geng and Viswanath \cite{GV13} show that under a general utility-maximization framework, for single real-valued query function, the optimal $\e$-differentially private mechanism is the staircase mechanism, which adds noise with staircase distribution to the query output. The optimality of the staircase mechanism is extended to the multidimensional setting for histogram-like functions in \cite{Geng13}, where the sensitivity of the query functions is defined using the $\ell^1$ metric as in \cite{DMNS06}. A relaxed notion of privacy, $(\e,\delta)$-differential privacy, was introduced by Dwork et al. \cite{DKMMN06}, and the standard approach to preserving   $(\e,\delta)$-differential privacy is to add Gaussian noise to the query output.

In this work, we study the (nearly) optimal mechanisms in $(\e,\delta)$-differential privacy for integer-valued query functions and vector-valued (histogram-like) query functions under a utility-maximization/cost-minimization framework, and characterize the tradeoff between $\epsilon$ and $\delta$ in utility and privacy analysis. $(\e,\delta)$-differential privacy is a relaxed notion of privacy, compared to the standard $\e$-differential privacy introduced in \cite{DMNS06}. $(\e,\delta)$-differential privacy includes as special cases:
\begin{itemize}
\item  $(\e,0)$-differential privacy;  in this standard setting, the optimal mechanism for a general cost minimization framework is the {\em staircase} mechanism as shown in \cite{GV13} and \cite{Geng13}. In the high privacy regime, the standard discrete Laplacian mechanism too performs well.
\item $(0,\delta)$-differential privacy; this setting requires that the total variation of the conditional probability distributions of the query output for neighboring datasets should be bounded by $\delta$. In this paper we show that the uniform noise distribution is  near-optimal in the $(0,\delta)$-differential privacy setting for a general class of cost functions.
\end{itemize}

While the $(\e,\delta)$-differential privacy setting is more general than the two special cases -- $(\e,0)$ and $(0,\delta)$-differential privacy -- our main result in this work is to show that it is only more general by {\em very little}; this is done in the context of $\ell^1$ and $\ell^2$ cost functions. We show the near-optimality of uniform noise and discrete Laplacian mechanisms in the high privacy regime (as $(\e,\delta) \to (0,0)$) for $\ell^1$ and $\ell^2$ cost functions.

Our result is a sharp departure from the setting of $\ell^{\infty}$ sensitivity (modeling adaptive query compositions) where the notion of $(\e,\delta)$-approximate differential privacy provides significant variance reductions (in the dimension of the query output), as compared to the standard $(\e,0)$-differential privacy \cite{DMNS06,DL09,DRV10}. Our main result shows that such gains are not available in the $\ell^1$ sensitivity model -- in fact approximate differential privacy in the usual regime ($\delta << \e$) is nearly the same (up to constants, in added noise magnitude and variance) as regular differential privacy. For completeness, we consider all relationships between $\e$ and $\delta$ in this paper.

The near-optimality of the two mechanisms (designed for the special cases of $(\epsilon,0)$ and $(0,\delta)$ differential privacy settings) is proved by demonstrating a uniform bound on the ratio between the costs of these two mechanisms and that of the optimal cost in the $(\epsilon,\delta)$ differential privacy setting in the high privacy regime, i.e.,  as $(\e,\delta) \to (0,0)$ for $\ell^1$ and $\ell^2$ cost functions.


\subsection{Summary of Our Results}

In this work we consider a very general model for integer-valued and vector-valued (histogram-like) query functions. Unlike previous works on $(\e,\delta)$-differential privacy (e.g., \cite{NTZ12}, \cite{KRSU10}, \cite{De12}), we impose no assumptions on the dataset model and properties of the query functions other than the global sensitivity, which is defined using the $\ell^1$ metric. We implicitly assume that the local sensitivity is equal to the global sensitivity. Due to the optimality of query-output independent perturbation as shown in \cite{GV13}, we consider query-output independent perturbation mechanisms.

We summarize our results in the following.  Let $V_{LB}$ denote the lower bound we derived for the cost under differential privacy constraint. Let  $V_{UB}^{\mbox{Lap}} $  and $V_{UB}^{\mbox{uniform}}$ denote the upper bounds for the cost achieved by discrete Laplacian mechanism and uniform noise mechanism. In this work, we show that
\begin{itemize}
	\item For integer-valued query functions,
	\begin{itemize}
		\item for $(0,\delta)$-differential privacy with the global sensitivity $\D = 1$, the uniform noise mechanism is optimal for all generic cost funtions,
		\item for $(0,\delta)$-differential privacy with arbitrary global sensitivity $\D$, $\lim_{\delta \to 0} \frac{V_{UB}^{\mbox{uniform}}}{V_{LB}} = 1$ for $\ell^1$ and $\ell^2$ cost functions,
		\item for $(\e,\delta)$-differential privacy with $\ell^1$ and $\ell^2$ cost functions, $\lim_{(\e,\delta) \to (0,0)} \frac{\min(V_{UB}^{\mbox{Lap}},V_{UB}^{\mbox{uniform}}  ) }{V_{LB}} \le C$ for some numerical constant $C$.
	\end{itemize}

	\item For vector-valued (histogram-like) query functions,
	\begin{itemize}
		\item  for $(0,\delta)$-differential privacy with the global sensitivity $\D = 1$, the multi-dimensional uniform noise mechanism  is optimal for $\ell^1$ and $\ell^2$ cost funtions,
		\item for $(0,\delta)$-differential privacy with arbitrary global sensitivity $\D$, $\lim_{\delta \to 0} \frac{V_{UB}^{\mbox{uniform}}}{V_{LB}} = 1$ for $\ell^1$ and $\ell^2$ cost functions,
		\item for $(\e,\delta)$-differential privacy with $\ell^1$ and $\ell^2$ cost functions, $\lim_{(\e,\delta) \to (0,0)} \frac{\min(V_{UB}^{\mbox{Lap}},V_{UB}^{\mbox{uniform}}  ) }{V_{LB}} \le C$ for some numerical constant $C$, which is independent of the dimension of the query function.
	\end{itemize}
\end{itemize}

We conclude that in $(\epsilon,\delta)$-differential privacy, the optimal noise magnitude and noise power are $\Theta(\min(\frac{1}{\epsilon},\frac{1}{\delta}))$ and $\Theta(\min(\frac{1}{\epsilon^2},\frac{1}{\delta^2}))$, respectively, in the high privacy regime, and naturally, the total cost grow linearly in terms of the dimension of the query output.

\subsection{Related Work}

Dwork et. al. \cite{DMNS06} introduce $\e$-differential privacy and show that the Laplacian mechanism, which perturbs the query output by adding random noise with Laplace distribution proportional to the global sensitivity of the query function, can preserve $\e$-differential privacy. In \cite{DMNS06}, it is shown that for histogram-like query functions, where the query output has multiple components and the global sensitivity is defined using the $\ell^1$ metric, one can perturb each component independently by adding the Laplacian noise to preserve $\e$-differential privacy.

Nissim, Raskhodnikova and Smith \cite{NRS07} show that for certain nonlinear query functions, one can improve the accuracy by adding data-dependent noise calibrated to the smooth sensitivity of the query function, which is based on the local sensitivity of the query function.
McSherry and Talwar \cite{McSherry07} introduce the \emph{exponential mechanism} to preserve $\e$-differential privacy for general query functions in an abstract setting, where the query function may not be real-valued.
Dwork et. al. \cite{DKMMN06} introduce $(\e,\delta)$-differential privacy and show that adding random noise with Gaussian distribution can preserve $(\e,\delta)$-differential privacy for real-valued query function. Hall, Rinaldo, and Wasserman \cite{Hall13} study how to preserve $(\e,\delta)$-differential privacy for releasing (infinite dimensional) functions, and show that adding Gaussian process noise to the released function can preserve  $(\e,\delta)$-differential privacy.
Kasiviswanathan and Smith \cite{Smith08} study $(\e,\delta)$-semantic privacy under a Bayesian framework. Chaudhuri and Mishra \cite{Chaudhuri06}, and Machanavajjhala et. al. \cite{Kifer08} propose different variants of the standard $\e,\delta$-differential privacy.

Ghosh, Roughgarden, and Sundararajan  \cite{Ghosh09} show that for a single count query with sensitivity $\D = 1$, for a general class of utility functions, to minimize the expected cost under a Bayesian framework the optimal mechanism to preserve $\e$-differential privacy is the geometric mechanism, which adds noise with geometric distribution. Brenner and Nissim  \cite{Nissim10}  show that for general query functions no universally optimal mechanisms exist. Gupte and Sundararajan \cite{minimax10} derive the optimal noise probability distributions for a single count query with sensitivity $\D = 1$ for minimax (risk-averse) users. \cite{minimax10} shows that although there is no universally optimal solution to the minimax optimization problem in \cite{minimax10} for a general class of cost functions, each solution (corresponding to different cost functions) can be derived from the same geometric mechanism by randomly remapping. Geng and Viswanath \cite{GV13} generalize the results of \cite{Ghosh09} and \cite{minimax10} to real-valued (and integer-valued) query functions with arbitrary sensitivity, and show that the optimal query-output independent perturbation mechanism is the staircase mechanism, which adds noise with a staircase-shaped probability density function (or probability mass function for integer-valued query function) to the query output. The optimality of the staircase mechanism is extended to the multidimensional setting for histogram-like functions in \cite{Geng13}, where the sensitivity of the query functions is defined using the $\ell^1$ metric as in \cite{DMNS06}.

Differential privacy for histogram query functions has been widely studied in the literature, e.g.,  \cite{Fang12,Hay10,geometry,Xu2012,Li10,NTZ12}, and many existing works use the Laplacian mechanism as the basic tool. For instance, Li et al. \cite{Li10} introduce the matrix mechanism to answer batches of linear queries over a histogram in a differentially private way with good accuracy guarantees. Their approach is that instead of adding Laplacian noise to the workload query output directly, the matrix mechanism will design an observation matrix which is the input to the database,  from perturbed output (using the standard Laplace mechanism)  estimate the histogram itself, and then compute the query output directly.  \cite{Li10} shows that this two-stage process will preserve differential privacy and increase the accuracy. Hay et al. \cite{Hay10} show that for a general class of histogram queries, by exploiting the consistency constraints on the query output, which is differentially private by adding independent Laplace noises, one can improve the accuracy while still satisfying differential privacy. These existing works study how to efficiently answer a set of linear queries on the histogram, while our work addresses the problem of releasing the histogram itself, which can be viewed as the worst-case query release (without knowing which linear queres will be asked).

Hardt and Talwar \cite{geometry} study the tradeoff between privacy and error for answering a set of linear queries over a histogram under $\epsilon$-differental privacy. The  error is defined as the worst  expectation of the $\ell^2$-norm of the noise. \cite{geometry} derives a lower bound for the error in the high privacy regime by using tools from convex geometry and Markov's inequality, and gives an upper bound by analyzing a differentially private mechanism, $K$-norm mechanism, which is an instantiation of the exponential mechanism and involves randomly sampling from a high dimensional convex body.  The lower bound given in \cite{geometry} depends on the volume of a convex body associated with the lineary query functions, and the lower bound  works for arbitrary linear query functions. In our problem setting, the linear query functions we are studying are the histogram function, which is a speical case of   \cite{geometry}  by setting $d = n$ and setting $F$ to be the identity map function. In this case, the lower bound given in \cite{geometry} is $\Omega(\frac{\sqrt{d}}{\epsilon})$\footnote{Note that for the $d$-dimensional $\ell^1$ unit ball, the volume is $\frac{2^d}{d!}$, and thus in Theorem 3.4 of \cite{geometry}, Vol$(K)^{1/d}= \Theta(\frac{1}{d})$.}, which matches our result, as we show that for $\epsilon$-differental privacy, in the high privacy regime, adding independent Laplacian noises to each component of the histogram is asymptotically optimal in the context of $\ell^1$ and $\ell^2$ cost functions.

Nikolov, Talwar and Zhang \cite{NTZ12} extend the result of \cite{geometry} on answering linear querys over a histogram to the case of  $(\e,\delta)$-differential privacy. Using tools from discrepancy theory, convex geometry and statistical estimation, they derive lower bounds and upper bounds of the error, which are within a multiplicative factor of $O(\log \frac{1}{\delta})$ in terms of $\delta$. Their bounds work for any set of linear query functions over a histogram, while in our work we study only the identity function, i.e., the query output is the histogram itself. Our result shows that in the high privacy regime (as $(\epsilon,\delta) \to (0,0)$), the optimal error scales as $\Theta(\min(\frac{1}{\epsilon},\frac{1}{\delta}))$ and $\Theta(\min(\frac{1}{\epsilon^2},\frac{1}{\delta^2}))$ for $\ell^1$ and $\ell^2$ cost functions, respectively. Therefore, our results significantly improve the bounds in \cite{NTZ12} in terms of $\epsilon$ and $\delta$ in the high privacy regime where both $\epsilon$ and $\delta$ go to zero.

Kasiviswanathan, Rudelson, Smith and Ullman \cite{KRSU10} derive lower bounds on the noise for releasing contingency tables under $(\e,\delta)$-differential privacy constraint, where the lower bounds depend on the size and structure of the database. 
Our lower bounds are tighter and sharper than those of \cite{KRSU10} in terms of $\epsilon$ and $\delta$. For instance, in \cite{KRSU10}, for $(\e,\delta)$-differential privacy the lower bounds are proportional to $(1-\frac{\delta}{\epsilon})$, which are zero whenever $\delta = \epsilon$, while our results show that the lower bound is $\Theta(\min (\frac{1}{\epsilon}, \frac{1}{\delta}))$ as $(\epsilon,\delta) \to (0,0)$.

Anindya De \cite{De12}  studies lower  bound on the additive noise for Lipschitz query functions in $(\e,\delta)$-differential privacy which uses a different metric for the noise, and the lower bound depends on the size of the database. Jain, Kothari, and Thakurta \cite{Jain12} study how to preserve $(\epsilon,\delta)$-differential privacy for online learning algorithms, and show that the approximate differential privacy can be achieved by adding Gaussian noise to each component of the query output. They derive lower bounds on the noise, and the lower bounds can be viewed as an application of the composition theorem in \cite{DRV10} by Dwork, Rothblum, and Vadhan, which has been improved by Oh and Viswanath \cite{Oh13} recently. The difference of \cite{Jain12} and other related works from our work is that the global sensitivity of the query function is defined using $\ell^{\infty}$ metric in \cite{Jain12} and \cite{DRV10}, while in our work we use $\ell^1$ metric.

\subsection{Organization}

This paper is organized as follows. We formulate the utility-maximization/cost-minimization under the $(\e,\delta)$-differential privacy constraint for a single integer-valued query function  as a linear programming problem in Section \ref{sec:formulation}. In Section \ref{sec:zerodelta}, we study $(0,\delta)$-differential privacy, and show the nearly optimality of the simple uniform noise mechanism. In Section \ref{sec:epsilondelta}, we study the optimal mechanisms in $(\e,\delta)$-differential privacy, and show the optimality of uniform noise	mechanism and Laplacian mechanism in the regime $(\e,\delta) \to (0,0)$ in the context of $\ell^1$ and $\ell^2$ cost functions. In Section \ref{sec:zerodeltamulti}, we extend the results to the multidimensional setting for histogram-like query functions, where the query output is a vector of integers.

\section{Problem Formulation} \label{sec:formulation}


Consider an integer-valued query function
\begin{align}
	q: \database \rightarrow \Z,
\end{align}
where $\database$ is the domain of the databases.

The sensitivity of the query function $q$ is defined as
\begin{align}
	\D \triangleq \max_{ D_1,D_2 \subseteq \database: |D_1 - D_2| \le 1} | q(D_1) - q(D_2)|, \label{def:sensitivity}
\end{align}
where the maximum is taken over all possible pairs of neighboring database entries $D_1$ and $D_2$ which differ in at most one element, i.e., one is a proper subset of the other and the larger database contains just one additional element \cite{DPsurvey}. Clearly, $\D$ is an integer in this discrete setting.

\begin{definition}[$(\e,\delta)$-differential privacy \cite{DKMMN06}]
	A randomized mechanism $\KM$ gives $\e$-differential privacy if for all data sets $D_1$ and $D_2$ differing on at most one element, and all $S \subset \text{Range}(\KM)$,
	\begin{align}
	 	\text{Pr}[\KM(D_1) \in S] \le \exp(\e) \;  \text{Pr}[\KM(D_2) \in S] + \delta. \label{eqn:dpgeneral}
	 \end{align}
\end{definition}


\subsection{Operational Meaning of $(\e,\delta)$-differential privacy in the Context of Hypothesis Testing}

As shown by \cite{Zhou08}, one can interpret the   differential privacy constraint  \eqref{eqn:dpgeneral} in the context of hypothesis testing in terms of false alarm probability and missing detection probability. Indeed, consider a binary hypothesis testing problem over two neighboring datasets, $H_0: D_1 $ versus $H_1: D_2$, where an individual's record is in $D_2$ only. Given a decision rule, let $S$ be the decision region such that when the released output lies in $S$, $H_1$ will be rejected, and when the released output lies in $S^C$ (the complement of $S$), $H_0$ will be rejected. The false alarm probability $P_{FA}$ and the missing detection probability $P_{MD}$ can be written as
\begin{align}
	P_{FA} &= P(K(D_1) \in S^C), \\
	P_{MD} &= P(K(D_2) \in S).
\end{align}

Therefore, from \eqref{eqn:dpgeneral} we get
\begin{align}
	1 - P_{FA} \le e^{\e} P_{MD} + \delta.
\end{align}
Thus
\begin{align}
	 e^{\e} P_{MD} + P_{FA} \ge 1 - \delta.
\end{align}

Switch $D_1$ and $D_2$ in \eqref{eqn:dpgeneral}, and we get 
	\begin{align}
	 	\text{Pr}[\KM(D_2) \in S] \le \exp(\e) \;  \text{Pr}[\KM(D_1) \in S] + \delta. 
	 \end{align}

Therefore, 
\begin{align}
	1 - P_{MD} \le e^{\e} P_{FA} + \delta,
\end{align}
and thus
\begin{align}
	 P_{MD} + e^{\e} P_{FA} \ge 1 - \delta.
\end{align}

In conclusion,  we have  
\begin{align}
  e^{\e} P_{MD} + P_{FA} &\ge 1 - \delta, \\
	P_{MD} + e^{\e} P_{FA} &\ge 1 - \delta.
\end{align}

The $(\e,\delta)$-differential privacy constraint implies that in the context of hypothesis testing,  $P_{FA}$ and $P_{MD}$ can not be both too small.

We plot the regions of $P_{FA}$ and $P_{MD}$ under $(\e,\delta)$-differential privacy, and under two special cases: $(\e,0)$ and $(0,\delta)$-differential privacy, in Figure \ref{fig:comparison}.


\begin{figure}[h]

\begin{subfigure}[b]{0.5\linewidth}
\centering
\includegraphics[width=1.1\textwidth]{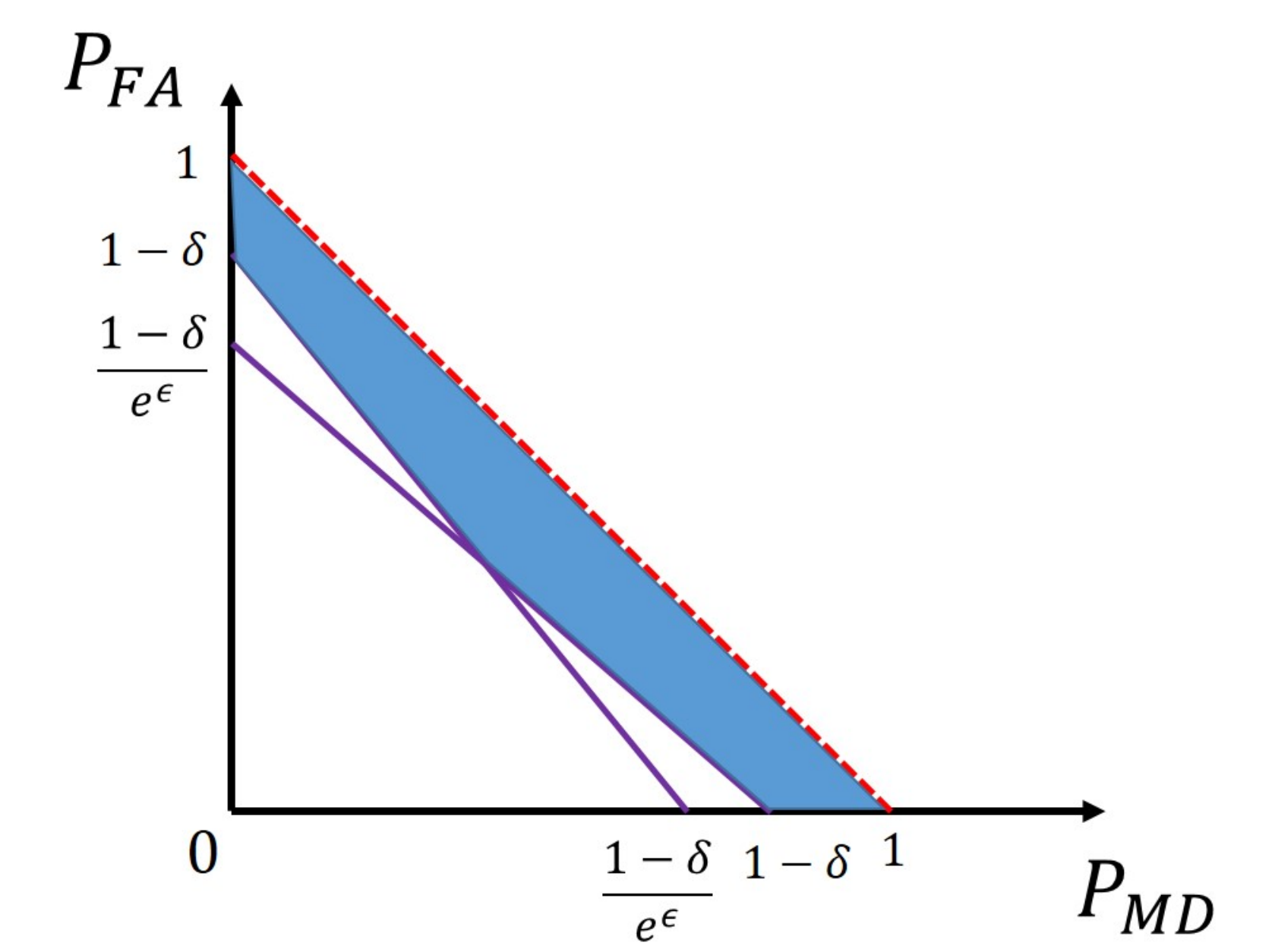}
\caption{$(\e,\delta)$-Differential Privacy}
\label{fig:compare0}
\end{subfigure}

\begin{subfigure}[b]{0.5\linewidth}
\centering
\includegraphics[width=1.1\textwidth]{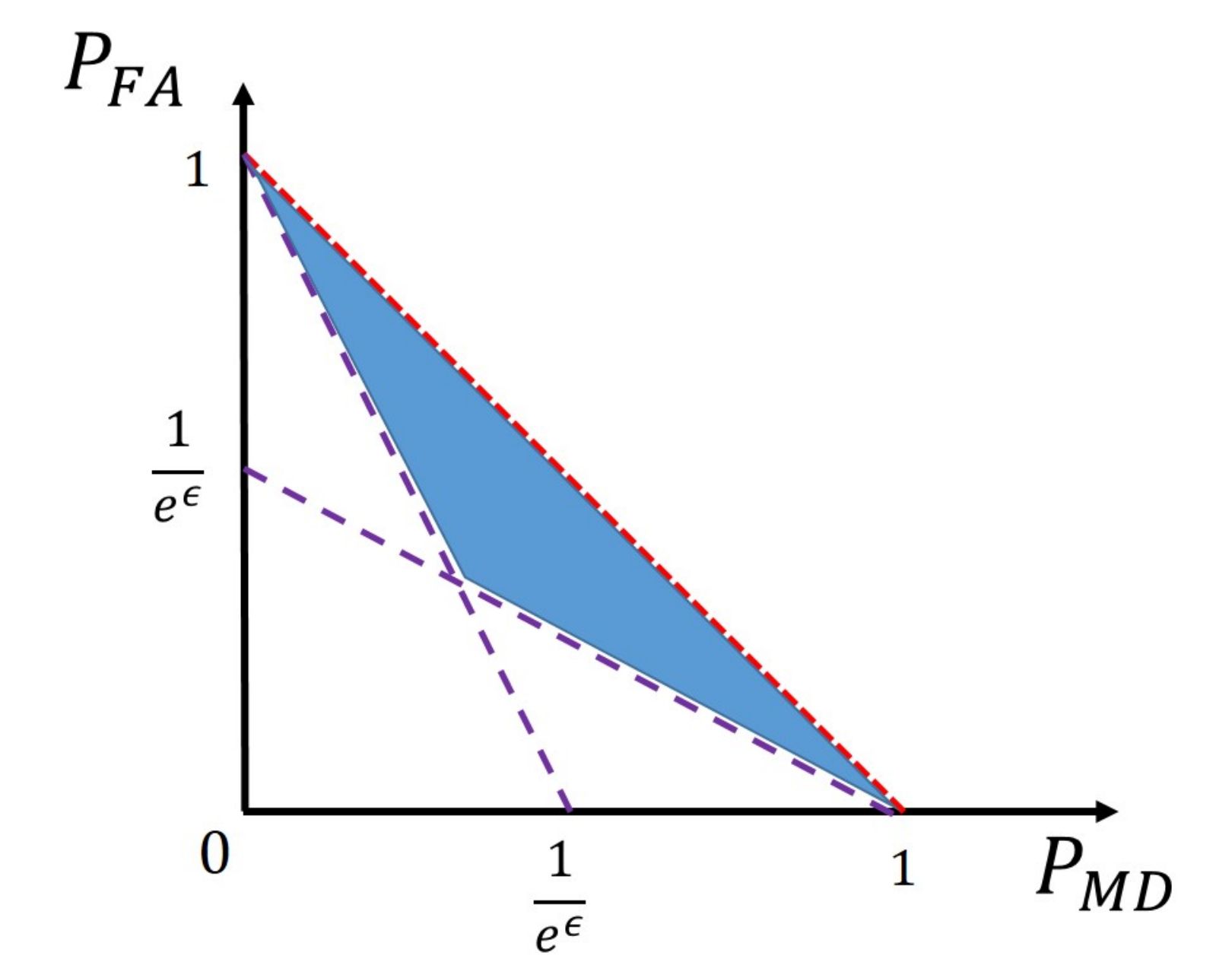}
\caption{$(\e,0)$-Differential Privacy}
\label{fig:compare1}
\end{subfigure}
\begin{subfigure}[b]{0.5\linewidth}
\centering
\includegraphics[width=1.1\textwidth]{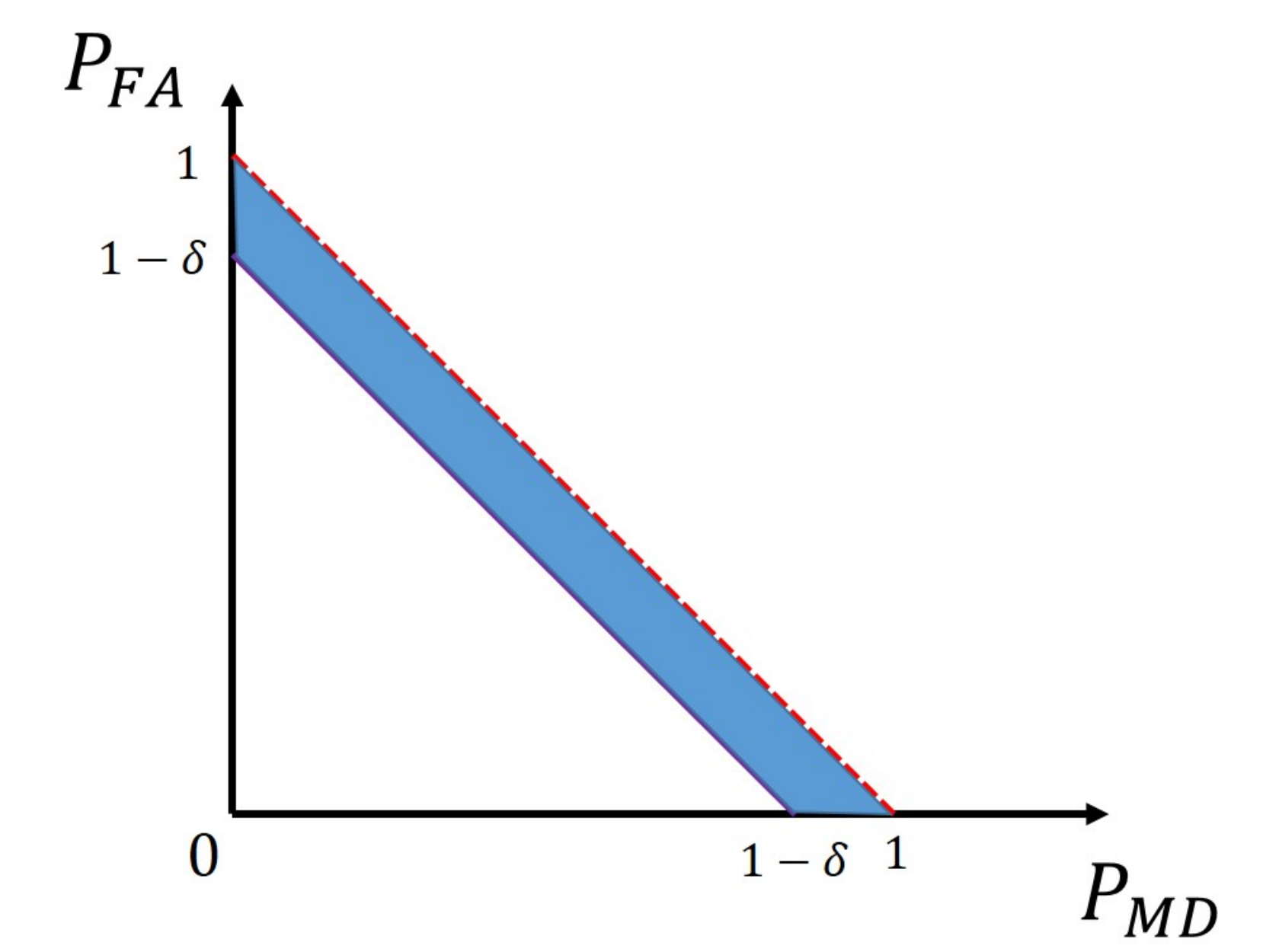}
\caption{ $(0,\delta)$-Differential Privacy}
\label{fig:compare2}
\end{subfigure}
\caption{Regions of $P_{MD}$ and $P_{FA}$ in $(\e,\delta)$, $(\e,0)$ and $(0,\delta)$-Differential Privacy. }
\label{fig:comparison}
\end{figure}


\subsection{Cost-Minimization/Utility-Maximization Formulation}
The standard approach to preserving the differential privacy is to add noise to the output of query function. Let $q(D)$ be the value of the query function evaluated at $D \subseteq \database$, the noise-adding  mechanism $\KM$ will output
\begin{align}
 	\KM(D) = q(D) + \noise,
 \end{align}
where $\noise$ is the noise added by the mechanism to the output of query function.
To make the output of the mechanism be valid, i.e., $q(D) + \noise \in \Z$, $\noise$ can only take integer values.

Let $\p$ be the probability mass function of the noise $\noise$, and use $\p_i $ to denote $\text{Pr}[\noise = i]$. For a set $S \subset \Z$, denote $\text{Pr}[\noise \in S]$ by $\p_S$.

In the following we derive the differential privacy constraint  on the probability distribution of $\noise$ from \eqref{eqn:dpgeneral}.
\begin{alignat}{2}
	 & \; \text{Pr}[\KM(D_1) \in S] & &\le  \exp(\e)\; \text{Pr}[\KM(D_2) \in S] + \delta \\
	 \Leftrightarrow & \; \text{Pr}[ q(D_1) + \noise \in S]  & &\le \exp(\e) \; \text{Pr}[q(D_2) + \noise \in S] + + \delta \\
	 \Leftrightarrow & \; \p_{S - q(D_1)}  & &\le \exp(\e) \;  \p_{S - q(D_2) } + \delta \\
	 \Leftrightarrow & \; \p_{ S'} & &\le \exp(\e) \; \p_{ S' + q(D_1)- q(D_2) } + \delta ,  \label{eqn:dpconstraint2}
\end{alignat}
where $S' \triangleq S - q(D_1) = \{ s-q(D_1) | s \in S \}$.


Since \eqref{eqn:dpgeneral} holds for any set $S \subseteq \Z$, and $|q(D_1)- q(D_2)| \le \D$, from \eqref{eqn:dpconstraint2} we have
\begin{align}
	\p_S \le \exp(\e) \; \p_{S + d} + \delta, \label{eqn:dpdiscrete}
\end{align}
for any set $S \subseteq \Z$ and for all $|d| \le \D$.


Consider a cost function $\loss(\cdot): \Z \rightarrow \R $, which is a function of the added noise $\noise$. Our goal is to minimize the expectation of the cost subject to the $(\e,\delta)$-differential privacy constraint \eqref{eqn:dpdiscrete}:
\begin{align}
	V^* := \mathop{\text{min}}\limits_{ \p} & \ \sum_{i = -\infty}^{+\infty} \loss(i) \p(i) \\
	\text{subject to} & \; \p_S \le \exp(\e) \; \p_{S + d} + \delta, \forall S \subset \Z, d \in \Z, |d| \le |\D|.  \nonumber
\end{align}

In this work, we restrict our attention to the scenario  when the cost function $\loss(k)$ is symmetric (around $k=0$) and monotonically increasing for $k \ge 0$. Furthermore, without loss of generality, we assume $\loss(0) = 0$. Using the same argument in Lemma 28  in \cite{GV13}, we only need to consider symmetric noise probability distributions.

\section{$(0,\delta)$-Differential Privacy} \label{sec:zerodelta}

We first consider the simple case when $\e = 0$, i.e., $(0,\delta)$-differential privacy. The $(0,\delta)$-differential privacy constraint requires that the total variation of the conditional probability distributions of the query output for neighboring datasets should be bounded by $\delta$.

In the differential privacy constraint \eqref{eqn:dpdiscrete}, by choosing the subset $S = S_k :=  \{ \ell : \ell \geq k \} $ for $ k \in \N$ and $d = \D$, we see that the noise probability distribution $\p$ must satisfy the constraints
 \begin{align}
 	\sum_{\ell =0}^{\D-1} \p_{k+\ell}   \leq \dlt, \quad \forall k \in \N. \label{eqn:ddpdiscrete2}
\end{align}

\subsection{$\D = 1$}

In the special case $\D = 1$, the constraints in \eqref{eqn:ddpdiscrete2} are particularly simple:
\begin{align}
p_k \leq \dlt; \quad  \forall k \geq 0.
\end{align}
For symmetric cost functions $\loss(k)$ that are monotonically increasing in $k \geq 0$, we can now
readily argue that the {\em uniform} probability distribution is optimal.

To avoid integer rounding issues, assume $\frac{1}{2\delta}$ is an integer.
\begin{theorem}
  If $\D = 1$, then
  \begin{align}
    V^* = \sum_{k=-\frac{1}{2\delta}}^{\frac{1}{2\delta}-1} \delta \loss(k),
  \end{align}
  and the optimal noise probability distribution is
  \begin{align}
    \p_k = \begin{cases}
     \delta &  -\frac{1}{2\delta} \le k \le \frac{1}{2\delta}-1 \\
  0   & \mbox{otherwise}
  \end{cases}\label{eqn:deffdeltais1}
  \end{align}
\end{theorem}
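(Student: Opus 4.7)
My plan is to prove the theorem by combining an elementary linear-programming relaxation with a direct feasibility check. First I would derive a pointwise upper bound on $\p_k$ from the $(0,\dlt)$-differential privacy constraint, then solve a rearrangement-style LP to obtain a lower bound on $V^*$, and finally verify that the claimed distribution attains this bound while actually satisfying the full $(0,\dlt)$-DP constraint (not merely the pointwise relaxation).

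For the pointwise bound, the paper has already shown by taking $S = \{\ell : \ell \ge k\}$ and $d = 1$ in \eqref{eqn:dpdiscrete} that $\p_k \le \dlt$ for all $k \in \N$. To extend this to negative $k$, I would apply \eqref{eqn:dpdiscrete} with $S = \{\ell : \ell \le k\}$ and $d = -1$; the resulting telescoping difference is exactly $\p_k$, yielding $\p_k \le \dlt$ for every $k \in \Z$.

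Next I would relax the DP constraint to the box constraint $\p_k \in [0,\dlt]$ together with $\sum_k \p_k = 1$ and solve the resulting LP. Because $\loss$ is nonnegative, symmetric and nondecreasing in $|k|$, a greedy/rearrangement argument shows the relaxed LP is minimized by saturating $\p_k = \dlt$ on the $\tfrac{1}{\dlt} = 2N$ integers with the smallest cost values, where $N := \tfrac{1}{2\dlt}$. The sequence $(\loss(k))$ sorted ascendingly is $\loss(0) \le \loss(\pm 1) \le \loss(\pm 2) \le \cdots$, so any set of $2N$ integers containing $\{-(N-1),\ldots,N-1\}$ together with exactly one of the tied values in $\{-N, N\}$ achieves the minimum. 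Picking $-N$ gives the support in \eqref{eqn:deffdeltais1} and the lower bound $V^* \ge \dlt \sum_{k=-N}^{N-1}\loss(k)$.

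Finally I would verify that the distribution $\p^*$ in \eqref{eqn:deffdeltais1} is feasible for the full $(0,\dlt)$-DP constraint, not just the relaxation. For any $S \subset \Z$ and $|d| \le 1$, writing $A := \{-N,\ldots,N-1\}$, I have $\p^*_S - \p^*_{S+d} = \dlt\bigl(|S \cap A| - |S \cap (A-d)|\bigr)$, and this integer-valued difference lies in $\{-|d|,\ldots,|d|\}$ because $A$ and $A-d$ differ by at most $|d|$ elements on each side. Hence $|\p^*_S - \p^*_{S+d}| \le \dlt$, which delivers \eqref{eqn:dpdiscrete} in both directions. Combined with step two, this proves $V^* = \dlt \sum_{k=-N}^{N-1}\loss(k)$ with $\p^*$ optimal. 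The only subtle point is the tie $\loss(-N) = \loss(N)$ in the greedy step, which explains why the stated optimizer is not symmetric about $0$; any ``nearly-centered'' interval of length $2N$ is equally optimal, so the theorem exhibits one optimizer rather than claiming uniqueness.
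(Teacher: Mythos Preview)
Your proposal is correct and follows precisely the approach the paper sketches: the paper derives the pointwise bound $p_k\le\dlt$ and then simply states ``we can now readily argue that the uniform probability distribution is optimal,'' without further detail. Your rearrangement argument for the relaxed LP and your explicit verification that the uniform distribution satisfies the full set-wise constraint \eqref{eqn:dpdiscrete} are exactly the missing steps, and your remark about the tie $\loss(-N)=\loss(N)$ nicely explains the asymmetry of the stated optimizer relative to the paper's earlier reduction to symmetric noise.
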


\subsection{General Lower Bound for $\D \ge 2$}

We now turn to understanding (near) optimal $(0,\dlt)$ privacy mechanisms in terms of minimizing the expected loss when the sensitivity $\D \ge 2$.


Recall that in $(0,\delta)$-differential privacy, the minimum cost  $V^*$ is the result of the following optimization problem, which is a linear program:

\begin{align}
V^* :=  \min &\quad  \sum_{k=-\infty}^{+\infty} \loss(k) \p_k \nonumber \\
\mbox{such that} &\quad   p_k \geq 0 \quad \forall k \in N \nonumber \\
 & \quad \sum_{k=-\infty}^{+\infty}\p_k = 1 \nonumber  \\
  & \quad \p_S \le   \; \p_{S + d} + \delta, \forall S \subset \Z, d \in \Z, |d| \le |\D| \label{eqn:dpconstraint6}.
\end{align}

Since $\loss(\cdot)$ is a symmetric function, we can assume $\p$ is a symmetric probability distribution. In addition, we relax the constraint \eqref{eqn:dpconstraint6} by choosing $d = \D$ and $S = S_k$ for $ k \in \N$. Then we get a relaxed linear program, the solution of which is a lower bound for $V^*$. More precisely,

\begin{align}
V_{LB} :=  \min &\quad  2\sum_{k=1}^\infty \loss(k) \p_k \label{eqn:primal}   \\
\mbox{such that} &\quad   \p_k \geq 0 \quad \forall k \in N  \nonumber \\
 &\quad \frac{\p_0}{2} + \sum_{k=1}^\infty \p_k \geq \frac{1}{2}  \label{eqn:primal111}   \\
   &\quad -\sum_{\ell=0}^{\D -1} \p_{k + \ell} \geq - \dlt, \quad \forall k \in \N  \label{eqn:primal112}.
\end{align}


To avoid integer rounding issues, assume $\frac{1}{2\delta}$ is a positive integer.


\begin{theorem}\label{thm:lowerbound_zerodelta}
If
\begin{align}
   \loss(1+\frac{\D}{2\delta})   \ge  2\left(\loss(1) + \sum_{i=1}^{\frac{1}{2\delta}} (\loss(1+i\D)-\loss(i\D)) \right), \label{eqn:LBcondition1}
 \end{align}
 then
  \begin{align}
      V^* \ge V_{LB} =  2\delta \sum_{i=0}^{\frac{1}{2\delta}-1} \loss(1 + i\D). \label{eqn:lowerbound_zerodelta}
  \end{align}
\end{theorem}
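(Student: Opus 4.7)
The strategy is to prove $V^* \ge V_{LB}$ by exhibiting a feasible solution of the dual of the relaxed linear program \eqref{eqn:primal}--\eqref{eqn:primal112} that attains objective value $V_{LB}$. Introducing $\mu \ge 0$ for the normalization constraint and $\lambda_k \ge 0$ for each privacy constraint at $k \in \N$, the dual LP is to maximize $\mu/2 - \delta \sum_{k \ge 0}\lambda_k$ subject to $\lambda_0 \ge \mu/2$ (from the $\p_0$ column) and $\mu - \sum_{j=\max(0, k-\D+1)}^{k}\lambda_j \le 2\loss(k)$ for every $k \ge 1$ (from the $\p_k$ columns); any feasible dual tuple gives a valid lower bound on $V^*$ via a direct Lagrangian argument.

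Writing $N := 1/(2\delta)$, the candidate primal optimum that places mass $\delta$ at each of the $N$ points $1, 1+\D, \ldots, 1+(N-1)\D$ suggests, via complementary slackness, that the dual constraints at these indices should be tight. My plan is to choose $\mu := 2\loss(1+N\D)$ (the upper bound forced by the dual constraint at $k = N\D + 1$), $\lambda_0 := \mu/2$, and to support the remaining $\lambda$-mass on the positions $\{1\} \cup \{i\D, 1+i\D : i = 1, \ldots, N-1\}$. Tightness at $k = 1$ then forces $\lambda_1 = \mu/2 - 2\loss(1)$, and for each $i = 1, \ldots, N-1$, tightness at $k = 1+i\D$ forces $\lambda_{i\D} + \lambda_{1+i\D} = \mu - 2\loss(1+i\D)$. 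Using $N\delta = 1/2$, a short computation shows that these tight identities alone pin the dual objective down to
\begin{align*}
\frac{\mu}{2} - \delta\sum_{k \ge 0}\lambda_k \;=\; 2\delta \sum_{i=0}^{N-1}\loss(1+i\D) \;=\; V_{LB},
\end{align*}
independently of how each tight block-sum is split.

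The main obstacle is dual feasibility: writing $a_i := \lambda_{1+i\D}$, one must choose $\{a_i\}$ so that every $\lambda_k \ge 0$ and every remaining (non-tight) dual constraint holds. A careful window-by-window accounting shows that the within-block dual constraints at indices $k$ strictly interior to some block reduce, after using the block-sum identities, to inequalities of the form $2\loss(1+(i-1)\D) \le 2\loss(k)$, which hold trivially by monotonicity of $\loss$. The binding non-tight constraints instead appear at the between-block indices $k = \D, 2\D, \ldots, N\D$, where they collapse into a telescoping chain
\begin{align*}
a_{i-1} - a_i \;\ge\; 2\bigl(\loss(1+i\D) - \loss(i\D)\bigr), \quad i = 1, \ldots, N-1,
\end{align*}
together with the endpoint bounds $a_0 \le \mu/2 - 2\loss(1)$ (coming from $\lambda_0 = \mu/2$ and the tight equation at $k=1$) and $a_{N-1} \ge \mu - 2\loss(N\D)$ (coming from $k = N\D$, at which only $a_{N-1}$ sits inside the window). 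Summing the chain inequalities, combining with both endpoints, and substituting $\mu = 2\loss(1+N\D)$ reduces the overall compatibility condition to exactly the hypothesis \eqref{eqn:LBcondition1}. Hence a non-negative choice of $\{a_i\}$ exists, the dual solution is feasible, and weak duality delivers $V^* \ge V_{LB}$.
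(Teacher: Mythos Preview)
Your high-level strategy matches the paper's: both lower-bound $V_{LB}$ by exhibiting a feasible point of the dual LP. The concrete dual solutions differ---the paper supports $y_k$ on every index $0\le k\le (N-1)\D+1$ via the backward recursion $y_k=\loss(k+\D)-\loss(k+\D-1)+y_{k+\D}$, whereas you concentrate mass only on $\{0,1\}\cup\{i\D,1+i\D:1\le i\le N-1\}$---but your block-sum bookkeeping correctly pins the objective to $V_{LB}$ and correctly handles all of the window ($\le$-type) constraints.

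There is one genuine gap: you verify the $\le$-constraints and the non-negativity of $a_i=\lambda_{1+i\D}$, but never the non-negativity of $\lambda_{i\D}=\mu-2\loss(1+i\D)-a_i$. This imposes the extra upper bound $a_i\le 2\bigl(\loss(1+N\D)-\loss(1+i\D)\bigr)$, which an arbitrary non-negative solution of your chain system need not satisfy; for instance, taking the chain at equality from the top (so that $a_0=\mu/2-2\loss(1)$ exactly) violates it already for $\loss(k)=|k|$ once $N$ and $\D$ are both moderately large. The fix is to take the chain at equality \emph{from the bottom}: set $a_i=2\sum_{j=i+1}^{N}\bigl(\loss(1+j\D)-\loss(j\D)\bigr)$ for $0\le i\le N-1$ and then $\lambda_0=\mu-2\loss(1)-a_0$. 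With this choice, $\lambda_0\ge\mu/2$ is precisely the hypothesis~\eqref{eqn:LBcondition1}, and $\lambda_{i\D}\ge 0$ telescopes to $\sum_{j=i+1}^{N}\bigl(\loss(j\D)-\loss(1+(j-1)\D)\bigr)\ge 0$, which holds by monotonicity of $\loss$. With this refinement your proof is complete.
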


\begin{IEEEproof}
  See Appendix \ref{sec:app1}.
\end{IEEEproof}


\subsection{Uniform Noise Mechanism}

Consider the noise with the {\em uniform} probability distribution:
\begin{align}
\p_k = \begin{cases}
     \frac{\delta}{\D} &  \forall -\frac{\D}{2\delta} \le k \le \frac{\D}{2\delta}-1 \\
  0   & \mbox{otherwise}
  \end{cases} \label{eqn:uniformp}
\end{align}

It is readily verified that this noise probability distribution satisfies the $(0,\dlt)$ differential privacy constraint. Therefore, an upper bound for $V^*$ is

\begin{theorem}
   \begin{align}
     V^* \le V_{UB} \triangleq 2\sum_{i=1}^{\frac{\D}{2\delta}-1} \frac{\delta}{\D}\loss(i) + \frac{\delta}{\D}\loss(\frac{\D}{2\delta}). \label{eqn:UBzerodelta}
   \end{align}
\end{theorem}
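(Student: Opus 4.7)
The plan is to prove this theorem in two parts: first verify that the uniform distribution in \eqref{eqn:uniformp} is a valid $(0,\delta)$-differentially private mechanism, then compute its expected cost to obtain the stated upper bound $V_{UB}$.

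For the feasibility check, I need to show that the distribution $\p$ defined in \eqref{eqn:uniformp} is (i) a probability distribution and (ii) satisfies \eqref{eqn:dpdiscrete} with $\e = 0$. Part (i) is immediate: the support has exactly $\D/\delta$ integers, each assigned mass $\delta/\D$, so the total mass is $1$. For part (ii), fix any $S \subseteq \Z$ and any $d$ with $|d| \le \D$, and write
\begin{align}
\p_S - \p_{S+d} = \sum_{s \in S}\bigl(\p_s - \p_{s+d}\bigr).
\end{align}
Because $\p_k \in \{0, \delta/\D\}$ and is supported on a contiguous block of length $\D/\delta$, each difference $\p_s - \p_{s+d}$ equals $\delta/\D$, $-\delta/\D$, or $0$, and the number of indices $s$ at which it equals $+\delta/\D$ is at most $|d| \le \D$ (these are the indices near one end of the support where $s$ lies in the support but $s+d$ falls outside it). Hence $\p_S - \p_{S+d} \le \D \cdot (\delta/\D) = \delta$, which is exactly the $(0,\delta)$-DP constraint.

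For the cost computation, I plug the distribution into the objective and use the symmetry assumption $\loss(-k) = \loss(k)$ together with $\loss(0) = 0$:
\begin{align}
\sum_{k=-\infty}^{+\infty} \loss(k)\,\p_k
= \frac{\delta}{\D}\sum_{k=-\frac{\D}{2\delta}}^{\frac{\D}{2\delta}-1}\loss(k)
= \frac{\delta}{\D}\loss\!\left(\tfrac{\D}{2\delta}\right) + \frac{2\delta}{\D}\sum_{i=1}^{\frac{\D}{2\delta}-1}\loss(i),
\end{align}
where the asymmetry comes from the fact that the support contains the endpoint $-\D/(2\delta)$ but not $\D/(2\delta)$. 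Rearranging gives precisely \eqref{eqn:UBzerodelta}, and combining with feasibility yields $V^* \le V_{UB}$.

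I do not expect any real obstacle here; the only subtlety is being careful with the boundary index $\D/(2\delta)$ in the symmetrization step, which is what produces the lone unpaired term $\frac{\delta}{\D}\loss(\D/(2\delta))$ in the bound, and keeping the count of boundary indices in the feasibility argument at most $|d|$ so that it tightly matches $\D \cdot (\delta/\D) = \delta$.
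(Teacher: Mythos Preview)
Your proposal is correct and follows exactly the approach the paper takes: the paper simply asserts that the uniform distribution \eqref{eqn:uniformp} ``is readily verified'' to satisfy the $(0,\delta)$-DP constraint and then states the resulting cost as $V_{UB}$, while you have supplied the straightforward details of both the feasibility check and the cost computation. There is nothing to add; your boundary-index count (at most $|d|\le\D$ indices with positive contribution) and the symmetrization handling of the unpaired endpoint $-\D/(2\delta)$ are exactly right.
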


\subsection{Comparison of $V_{LB}$ and $V_{UB}$}

We first apply the lower bound \eqref{eqn:lowerbound_zerodelta} and upper bound \eqref{eqn:UBzerodelta} to $\ell^1$ and $\ell^2$ cost functions, i.e., $\loss(i) = |i|$ and $\loss(i) = i^2$, in which $V^*$ corresponds to the minimum expected noise amplitude and minimum noise power, respectively.

Note that in the case $\loss(i) = |i|$, the condition \eqref{eqn:LBcondition1} in Theorem \ref{thm:lowerbound_zerodelta} is
\begin{align}
    \frac{\D}{2\delta} \ge \frac{1}{\delta} + 1.
\end{align}
When $\D \ge 3$, \eqref{eqn:LBcondition1} holds.

\begin{corollary} \label{cor:l1cost0delta}
  For the cost function $\loss(i) = |i|$,
  \begin{align}
     V_{LB} &= \frac{\D}{4\delta} +  1 - \frac{\D}{2}, \\
     V_{UB} &= \frac{\D}{4\delta},
  \end{align}
  and thus the additive gap
  \begin{align}
    V_{UB} - V_{LB} = \frac{\D}{2} - 1
  \end{align}
  is a constant independent of $\delta$.
\end{corollary}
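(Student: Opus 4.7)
The proof is essentially a direct substitution of $\loss(i)=|i|$ into the lower bound formula \eqref{eqn:lowerbound_zerodelta} and the upper bound formula \eqref{eqn:UBzerodelta}, followed by two arithmetic sum evaluations. The plan breaks into three small steps.

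First, I would verify that the hypothesis \eqref{eqn:LBcondition1} of Theorem \ref{thm:lowerbound_zerodelta} is satisfied when $\loss(i)=|i|$ and $\D\ge 3$. In this case each summand $\loss(1+i\D)-\loss(i\D)$ equals $1$, so the right-hand side of \eqref{eqn:LBcondition1} becomes $2(1+\frac{1}{2\delta})=2+\frac{1}{\delta}$, while the left-hand side is $1+\frac{\D}{2\delta}$. The required inequality is thus $\D\ge 2+2\delta$, which holds whenever $\D\ge 3$ (note that $\delta\le 1/2$ since $\frac{1}{2\delta}$ is assumed to be a positive integer). This is exactly the remark already made in the text preceding the corollary.

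Next, for the lower bound, I substitute $\loss(1+i\D)=1+i\D$ into \eqref{eqn:lowerbound_zerodelta}. Writing $N:=\frac{1}{2\delta}$, the sum is an arithmetic progression:
\begin{align}
V_{LB} = 2\delta\sum_{i=0}^{N-1}(1+i\D) = 2\delta N + 2\delta\D\cdot\frac{N(N-1)}{2} = 1 + \delta\D\, N(N-1).
\end{align}
Re-substituting $N=\frac{1}{2\delta}$ gives $V_{LB} = 1 + \frac{\D}{2}\bigl(\frac{1}{2\delta}-1\bigr) = \frac{\D}{4\delta} + 1 - \frac{\D}{2}$, as claimed. For the upper bound I substitute $\loss(i)=i$ into \eqref{eqn:UBzerodelta}. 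Letting $M:=\frac{\D}{2\delta}$, another arithmetic-series calculation yields
\begin{align}
V_{UB} = \frac{2\delta}{\D}\sum_{i=1}^{M-1} i + \frac{\delta}{\D} M = \frac{\delta}{\D}\bigl(M(M-1)+M\bigr) = \frac{\delta}{\D}M^2 = \frac{\D}{4\delta}.
\end{align}

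Finally, subtracting the two closed-form expressions gives $V_{UB}-V_{LB} = \frac{\D}{4\delta} - \bigl(\frac{\D}{4\delta}+1-\frac{\D}{2}\bigr) = \frac{\D}{2}-1$, which is independent of $\delta$. There is no real obstacle here — the only point that requires a moment's thought is checking the hypothesis of Theorem \ref{thm:lowerbound_zerodelta}, and even that reduces to the trivial inequality $\D\ge 2+2\delta$. The rest of the proof is a mechanical evaluation of two finite arithmetic series.
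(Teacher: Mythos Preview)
Your proposal is correct and is exactly the approach the paper intends: the corollary is stated without an explicit proof precisely because it follows from substituting $\loss(i)=|i|$ into \eqref{eqn:lowerbound_zerodelta} and \eqref{eqn:UBzerodelta} and summing the resulting arithmetic progressions, together with the hypothesis check already noted in the text just before the corollary.
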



In the case $\loss(i) = i^2$, the condition \eqref{eqn:LBcondition1} in Theorem \ref{thm:lowerbound_zerodelta} is
\begin{align}
    \frac{\D}{2\delta^2}(\frac{\D}{2}-1) \ge \frac{1}{\delta} + 1. \label{eqn:tttemp1}
\end{align}
When $\D \ge 3$, \eqref{eqn:tttemp1} holds.

\begin{corollary} \label{cor:l2cost0delta}
  For the cost function $\loss(i) = i^2$,
  \begin{align}
     V_{LB} &=  \frac{\D^2}{12\delta^2} - \frac{\D^2}{4\delta}   + \D(\frac{1}{2\delta}-1)    + \frac{\D^2}{6} + 1 , \\
     V_{UB} &= \frac{\D^2}{12\delta^2} + \frac{1}{6},
  \end{align}
  and thus the multiplicative gap
  \begin{align}
    \lim_{\delta \to 0 }\frac{V_{UB}}{V_{LB}} = 1.
  \end{align}

\end{corollary}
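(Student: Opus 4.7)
The plan is to apply Theorem \ref{thm:lowerbound_zerodelta} and the uniform-noise upper bound \eqref{eqn:UBzerodelta} directly with $\loss(i) = i^2$, and then match the leading $1/\delta^2$ asymptotics to get the limit. First, I would verify the hypothesis \eqref{eqn:LBcondition1} of Theorem \ref{thm:lowerbound_zerodelta}. For $\loss(i) = i^2$, the inner telescoping-type sum in \eqref{eqn:LBcondition1} is $\sum_{i=1}^{1/(2\delta)}\bigl((1+i\D)^2-(i\D)^2\bigr)=\sum_{i=1}^{1/(2\delta)}(1+2i\D)$, which is a closed-form arithmetic sum. Substituting and simplifying algebraically reduces condition \eqref{eqn:LBcondition1} to $\frac{\D}{2\delta^2}(\frac{\D}{2}-1) \ge \frac{1}{\delta}+1$, which is exactly \eqref{eqn:tttemp1}, and this holds whenever $\D \ge 3$ (since the left side grows like $1/\delta^2$ and the right like $1/\delta$ as $\delta \to 0$).

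Next, I would evaluate $V_{LB} = 2\delta \sum_{i=0}^{1/(2\delta)-1}(1+i\D)^2$ by expanding the square and using the standard identities for $\sum i$ and $\sum i^2$. Setting $N := \frac{1}{2\delta}$, the three resulting pieces are $2\delta N = 1$, $2\delta\D\cdot \frac{N(N-1)}{2} = \D(\frac{1}{2\delta}-1)$, and $2\delta\D^2\cdot \frac{(N-1)N(2N-1)}{6}$, which after substituting $N=\frac{1}{2\delta}$ and collecting terms gives the claimed expression $\frac{\D^2}{12\delta^2}-\frac{\D^2}{4\delta}+\D(\frac{1}{2\delta}-1)+\frac{\D^2}{6}+1$.

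For the upper bound, I would write $V_{UB}= \frac{\delta}{\D}\bigl(2\sum_{i=1}^{M-1} i^2 + M^2\bigr)$ with $M := \frac{\D}{2\delta}$. Using $2\sum_{i=1}^{M-1}i^2 = \frac{M(M-1)(2M-1)}{3}$, this combines to $\frac{\delta}{\D}\cdot \frac{M(2M^2+1)}{3}=\frac{2M^2+1}{6}=\frac{\D^2}{12\delta^2}+\frac{1}{6}$, matching the stated formula.

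Finally, for the limit, both $V_{LB}$ and $V_{UB}$ share the same leading term $\frac{\D^2}{12\delta^2}$ as $\delta \to 0$, while every remaining term in each expression is $O(1/\delta)$ or $O(1)$. Dividing $V_{UB}$ by $V_{LB}$ and letting $\delta \to 0$ therefore yields $1$. The only real obstacle is bookkeeping: one must be careful to track both the $O(1/\delta)$ terms and the constant terms to confirm the exact closed forms (as opposed to just the asymptotic leading behavior), but no non-trivial estimation is needed since all sums have exact polynomial closed forms.
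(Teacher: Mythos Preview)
Your proposal is correct and follows essentially the same approach as the paper's proof: expand $(1+i\D)^2$, apply the closed-form formulas for $\sum i$ and $\sum i^2$ to evaluate $V_{LB}$ and $V_{UB}$, and compare leading $\frac{\D^2}{12\delta^2}$ terms. The only addition is your explicit re-verification of condition \eqref{eqn:LBcondition1}, which the paper already records as \eqref{eqn:tttemp1} just before the corollary; also, in your linear-term piece you dropped a factor of $2$ (the cross term is $2i\D$, so the contribution is $2\delta\cdot 2\D\cdot \tfrac{N(N-1)}{2}$), though your stated result $\D(\tfrac{1}{2\delta}-1)$ is correct.
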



\begin{IEEEproof}
  See Appendix \ref{sec:app2}.
\end{IEEEproof}



\begin{corollary} \label{cor:polynomial}
Given a positive integer $m$, consider the cost function $\loss(i) = |i|^m$. Then
\begin{align}
   \lim_{\delta \to 0} \frac{V_{UB}}{V_{LB}} = 1.
 \end{align}

\end{corollary}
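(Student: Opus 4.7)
The plan is to establish the limit by extracting the leading-order asymptotic of both $V_{LB}$ and $V_{UB}$ as $\delta \to 0$ and observing that they coincide. Throughout I set $M := \frac{1}{2\delta}$ and $N := \D M = \frac{\D}{2\delta}$, and (as in the preceding corollaries) take $\D \ge 3$ so that Theorem~\ref{thm:lowerbound_zerodelta} will eventually apply; the case $\D = 1$ is already handled exactly by the $\D=1$ theorem (uniform is optimal), and the case $\D=2$ can be reduced to a direct packing argument as noted below.

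First I would specialize the explicit upper bound \eqref{eqn:UBzerodelta} to $\loss(i) = i^m$, obtaining
\[
V_{UB} = \frac{2\delta}{\D}\sum_{i=1}^{N-1} i^m + \frac{\delta}{\D} N^m,
\]
which by Faulhaber's formula $\sum_{i=1}^{N-1} i^m = \tfrac{N^{m+1}}{m+1} + O(N^m)$ simplifies to $V_{UB} = \frac{\D^m}{(m+1)(2\delta)^m}\bigl(1 + O(\delta)\bigr)$. Next, to apply Theorem~\ref{thm:lowerbound_zerodelta}, I would verify the precondition \eqref{eqn:LBcondition1} for $\loss(i) = i^m$. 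By the binomial theorem $(1+i\D)^m - (i\D)^m = m(i\D)^{m-1} + O((i\D)^{m-2})$, so $\sum_{i=1}^{M}\bigl[(1+i\D)^m - (i\D)^m\bigr] = \D^{m-1} M^m(1 + o(1))$; hence the right-hand side of \eqref{eqn:LBcondition1} is asymptotically $2\D^{m-1} M^m$, while the left-hand side $(1+\D M)^m$ is $\D^m M^m(1+o(1))$. Their ratio tends to $\D/2$, which is strictly greater than $1$ for $\D \ge 3$, so the precondition holds for all sufficiently small $\delta$. Theorem~\ref{thm:lowerbound_zerodelta} then gives $V_{LB} = 2\delta \sum_{i=0}^{M-1}(1+i\D)^m$; expanding $(1+i\D)^m = (i\D)^m + O(i^{m-1}\D^{m-1})$ and applying Faulhaber once more yields $V_{LB} = \frac{\D^m}{(m+1)(2\delta)^m}\bigl(1 + O(\delta)\bigr)$, which matches the leading coefficient of $V_{UB}$ exactly, so $V_{UB}/V_{LB} \to 1$.

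The main obstacle is the careful bookkeeping of subleading binomial terms: the condition \eqref{eqn:LBcondition1} must be verified strictly (not merely in the limit) for all small enough $\delta$, and the coefficients of the $\delta^{-m}$ leading terms of $V_{LB}$ and $V_{UB}$ must be shown to be \emph{identical} rather than merely of the same order. A secondary wrinkle is the edge case $\D = 2$, where the LHS/RHS ratio in \eqref{eqn:LBcondition1} tends to exactly $1$ and Theorem~\ref{thm:lowerbound_zerodelta} may not apply; in that case I would replace it with a direct greedy-packing lower bound, grouping the non-negative-side probabilities into disjoint length-$\D$ blocks (each of total mass at most $\delta$ by \eqref{eqn:ddpdiscrete2}), bounding the within-block cost from below by $(j\D)^m r_j$ where $r_j$ is the mass in the $j$-th block, and optimizing the resulting finite LP to recover the same leading asymptotic $\frac{\D^m}{(m+1)(2\delta)^m}$.
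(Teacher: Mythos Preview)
Your proposal is correct and follows essentially the same route as the paper: both extract the leading-order asymptotics of $V_{UB}$ and $V_{LB}$ via the power-sum estimate $\sum_{i=1}^n i^m \sim n^{m+1}/(m+1)$ and observe that the leading coefficients $\frac{\D^m}{(m+1)(2\delta)^m}$ coincide. Your version is in fact more careful than the paper's own proof, which simply plugs into the formula \eqref{eqn:lowerbound_zerodelta} without re-verifying the hypothesis \eqref{eqn:LBcondition1} for general $m$ and does not separately address the edge cases $\D\in\{1,2\}$.
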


\begin{proof}
  By induction, it is easy to show that $\sum_{i=1}^n i^m = \Theta(\frac{n^{m+1}}{m+1})$, and
  \begin{align}
    \lim_{n \to +\infty} \frac{\sum_{i=1}^n i^m}{\frac{n^{m+1}}{m+1}} = 1.
  \end{align}

  Therefore,
  \begin{align}
    \lim_{\delta \to 0} \frac{V_{UB}}{V_{LB}} &= \lim_{\delta \to 0} \frac{ 2\frac{\delta}{\D}\sum_{i=1}^{\frac{\D}{2\delta}-1} i^m + \frac{\delta}{\D} \frac{\D^m}{(2\delta)^m}}{ 2\delta \sum_{i=0}^{\frac{1}{2\delta}-1} (1+i\D)^m} \\
    &= \lim_{\delta \to 0} \frac{  2\frac{\delta}{\D} \frac{\frac{\D^{m+1}}{(2\delta)^{m+1}}}{m+1} }  {   2\delta \D^m \frac{ (\frac{1}{2\delta})^{m+1}  }{m+1}  } \\
    &= 1.
  \end{align}

\end{proof}

For general cost functions, we have the following bound on the multiplicative gap between the lower bound and upper bound.

\begin{corollary}\label{cor:generalzerodelta}
  Given a cost function $\loss(\cdot)$ satisfying
  \begin{align}
     \sup_{k \ge T} \frac{\loss(k)}{\loss(k - \D + 1)} \le C,
  \end{align}
for some integer $T\in \N$, and some positive number $C \in \R$,
then
\begin{align}
  \lim_{\delta \to 0} \frac{V_{UB}}{V_{LB}} \le  1 + (1+\frac{1}{2\D})C.
\end{align}
\end{corollary}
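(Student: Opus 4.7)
The plan is a block-decomposition argument that compares $V_{UB}$ against $V_{LB}$ via a grouping of indices. Recall
\[
V_{UB} = \frac{2\delta}{\D}\sum_{i=1}^{\D/(2\delta)-1}\loss(i) + \frac{\delta}{\D}\loss\!\left(\frac{\D}{2\delta}\right), \qquad V_{LB} = 2\delta\sum_{j=0}^{1/(2\delta)-1}\loss(1+j\D).
\]
Since the weight $\tfrac{1}{2}$ on $\loss(\D/(2\delta))$ is at most $1$, I will upper bound $V_{UB}\le \frac{2\delta}{\D}\sum_{i=1}^{\D/(2\delta)}\loss(i)$ and partition the index set $\{1,2,\ldots,\D/(2\delta)\}$ into $1/(2\delta)$ consecutive blocks $B_j := \{j\D+1,\ldots,(j+1)\D\}$ of size $\D$, each aligned with one $V_{LB}$-term $\loss(1+j\D)$. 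The numerator of $V_{UB}$ carries $\Theta(\D/\delta)$ values of $\loss$ while $V_{LB}$ carries only $\Theta(1/\delta)$, so the block size $\D$ is the natural matching.

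The key per-block estimate: for any $j$ with $(j+1)\D \ge T$, monotonicity gives $\loss(j\D+m)\le \loss((j+1)\D)$ for all $m\in\{1,\ldots,\D\}$, and the hypothesis applied at $k=(j+1)\D$ gives $\loss((j+1)\D)\le C\,\loss(1+j\D)$. Thus $\sum_{m=1}^{\D}\loss(j\D+m)\le \D C\,\loss(1+j\D)$, and summing over all valid $j$ and scaling by $2\delta/\D$ contributes at most $C\,V_{LB}$ to $V_{UB}$. The at most $\lceil T/\D\rceil$ leading blocks (for which $(j+1)\D<T$, so the ratio hypothesis is unavailable) are handled by monotonicity alone: each contributes at most $\D\,\loss(T)$, so their combined contribution to $V_{UB}$ is $O(\delta)$, which vanishes against $V_{LB}\ge 2\delta\cdot\frac{1}{2\delta}\loss(1)=\loss(1)>0$ as $\delta\to 0$.

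Combining these pieces yields $V_{UB}\le C\,V_{LB}+O(\delta)$, so $\lim_{\delta\to 0}V_{UB}/V_{LB}\le C \le 1+(1+\tfrac{1}{2\D})C$, as claimed. (Keeping the half-term separate and bounding $\loss(\D/(2\delta))\le C\,\loss(1+(1/(2\delta)-1)\D)\le C\,V_{LB}/(2\delta)$ tightens the estimate to $(1+\tfrac{1}{2\D})C$, still within the stated bound.) The main obstacle is the non-uniform applicability of the ratio hypothesis---valid only for $k\ge T$---which forces a separate treatment of the leading blocks; the uniform lower bound $V_{LB}\ge \loss(1)$ is what guarantees that this residual is asymptotically negligible and that the limit exists in the stated form.
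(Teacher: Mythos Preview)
Your argument is correct and in fact yields the sharper bound $\limsup_{\delta\to 0} V_{UB}/V_{LB}\le C$, strictly better than the paper's $1+(1+\tfrac{1}{2\D})C$. Both proofs group $\{1,\dots,\D/(2\delta)\}$ into $\D$-blocks, but the alignment differs. The paper compares the block $\{(j-1)\D+1,\dots,j\D\}$ to the $V_{LB}$-sample $\loss(1+j\D)$ lying just \emph{beyond} it, so monotonicity alone gives $\frac{1}{\D}\sum_{i\in\text{block}}\loss(i)\le \loss(1+j\D)$; the ratio hypothesis is then invoked only once, on the leftover boundary term $\loss(\D/(2\delta))$ relative to the last $V_{LB}$-sample, producing the ``$1+$'' and the factor $(1+\tfrac{1}{2\D})C$. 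You instead compare the block $\{j\D+1,\dots,(j+1)\D\}$ to the $V_{LB}$-sample $\loss(1+j\D)$ at its \emph{left endpoint}, paying the factor $C$ on every block via $\loss((j+1)\D)\le C\,\loss(1+j\D)$; there is then no residual, and the initial $O(T/\D)$ blocks where the hypothesis fails contribute $O(\delta)$, which is absorbed since $V_{LB}$ stays bounded below. Your route is more direct and gives a tighter constant; the paper's route shows that the difference $V_{UB}-V_{LB}$ is governed by a single boundary term. Two small remarks: your parenthetical ``tightens the estimate to $(1+\tfrac{1}{2\D})C$'' is misstated---that quantity is \emph{larger} than $C$, so separating the half-term does not help once you already have $C$; and the step $V_{LB}\ge \loss(1)>0$ tacitly assumes $\loss(1)>0$, which is harmless (if $\loss(1)=0$ but $\loss\not\equiv 0$, replace $\loss(1)$ by any $\loss(M)>0$ and note that eventually most $V_{LB}$-terms exceed it).
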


\begin{proof}
  See Appendix \ref{sec:app3}.
\end{proof}

 \section{$(\e,\delta)$-Differential Privacy} \label{sec:epsilondelta}
 
Recall that since $\loss(\cdot)$ is a symmetric function, without loss of generality, we can  restrict ourselves to symmetric noise probability distributions, i.e.,
\begin{align}
\p_k = \p_{-k}, \forall k \in \Z.    \label{eqn:symmtry}
\end{align}

The differential privacy constraint in \eqref{eqn:dpdiscrete} can be understood in some detail by choosing the subset $S = S_k := \{ \ell : \ell \geq k \} $ for $ k \in \N$. In this case we see that the noise probability distribution  must satisfy the following constraints. For $k=0$ and $d =\D$,
 \begin{align}
  \p_{S_0} &\le e^{\e} \p_{S_{\D}} + \dlt. \label{eqn:zeroconstraint}
\end{align}

By using the symmetry condition in \eqref{eqn:symmtry} and the fact that $\sum_{\ell=-\infty}^{+\infty} \p_\ell = 1$, from \eqref{eqn:zeroconstraint} we get
\begin{align}
  \p_0  \frac{1+e^{\e}}{2}  + e^{\e} \sum_{\ell = 1}^{\D-1} \p_\ell \leq \dlt + \frac{e^\e - 1}{2}.
\end{align}

For $k = 1$ and $d = \D$, we  have
\begin{align}
\p_{S_1} \leq e^{\e} \p_{S_{\D+1}} + \dlt ,
\end{align}
and thus
\begin{align}
  \p_0  \frac{e^{\e}-1}{2}  +  e^{\e} \sum_{\ell = 1}^{ \D } \p_{\ell} \le  \dlt + \frac{e^{\e}-1}{2} .
\end{align}

For general $k \geq 2$ and $d = \D$, we  have
\begin{align}
\p_{S_k} \leq e^{\e} \p_{S_{\D+k}} + \dlt ,
\end{align}
and thus
\begin{align}
  \p_0  \frac{e^{\e}-1}{2}  +  ( e^{\e}-1 ) \sum_{\ell=1}^{k-1} \p_\ell + e^{\e}
\sum_{\ell = k}^{k+\D-1} \p_{\ell} \le  \dlt + \frac{e^{\e}-1}{2} .
\end{align}

\subsection{Lower Bound}

By restricting the set $S$ in \eqref{eqn:dpdiscrete} to be $S_k := \{ \ell : \ell \geq k \}$ for $k \in \Z$ and restricting $d$ to be $\D$, we get the following relaxed linear program, the solution of which is a lower bound for $V^*$:

\begin{align}
V_{LB} :=  \min &\quad  2\sum_{k=1}^\infty \loss(k) \p_k  \nonumber \\
\mbox{such that} &\quad   \p_k \geq 0 \quad \forall k \in N \nonumber \\
 &\quad \frac{\p_0}{2} + \sum_{k=1}^\infty \p_k \geq \frac{1}{2} \label{eqn:primalaa1}\\
 &\quad \p_0 \frac{1+\e^{\e}}{2} + e^{\e} \sum_{k=1}^{\D-1} \p_k \le \delta + \frac{e^{\e}-1}{2} \label{eqn:primalaa2} \\
 &\quad \p_0 \frac{e^{\e}-1}{2} + e^{\e} \sum_{k=1}^{\D} \p_{k} \le \delta + \frac{e^{\e}-1}{2} \label{eqn:primalaa3}  \\
 &\quad \p_0 \frac{e^{\e}-1}{2} + (e^{\e}-1) \sum_{k=1}^{i-1} \p_k + e^{\e} \sum_{k=i}^{i+\D-1} \p_k \le \delta + \frac{e^{\e}-1}{2}, \forall i \ge 2. \label{eqn:primalaa4}
\end{align}

 Define
\begin{align}
    a &\triangleq \frac{\delta + \frac{e^{\e}-1}{2}}{e^\e},\\
    b &\triangleq e^{-\e}.
\end{align}
To avoid integer rounding issues, assume that there exists an integer $n$ such that
\begin{align}
  \sum_{k=0}^{n-1} a b^k = \frac{1}{2}.
\end{align}

\begin{theorem}\label{thm:generaledloss}
If
\begin{align}
  \sum_{i=1}^{n-1} e^{-i\e} (2 \loss(i\D) - \loss(1+(i-1)\D) - \loss(1 + i\D) ) \ge \loss(1),
\end{align}
then we have
  \begin{align}
    V^* \ge V_{LB} = 2 \sum_{k=0}^{n-1} \frac{\delta + \frac{e^{\e}-1}{2}}{e^\e} e^{-k\e} \loss(1 + k\D). \label{eqn:LBed}
  \end{align}
\end{theorem}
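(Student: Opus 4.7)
The plan is to prove $V^* \ge V_{LB}$ by LP duality on the relaxed primal \eqref{eqn:primalaa1}--\eqref{eqn:primalaa4}, exhibiting a matching primal-dual pair whose common value equals the right-hand side of \eqref{eqn:LBed}. For the primal witness I would take $\p^*_{1+j\D} = a e^{-j\e}$ for $j = 0, 1, \ldots, n-1$, with $\p^*_k = 0$ for all other $k \ge 0$, completed symmetrically by $\p^*_{-k} = \p^*_k$. The hypothesis $\sum_{k=0}^{n-1} a b^k = \tfrac{1}{2}$ makes $\sum_k \p^*_k = 1$, and a direct substitution will show that the constraints indexed $i = 0, 1, \ldots, (n-1)\D + 1$ all collapse to $e^\e a = \delta + \tfrac{e^\e - 1}{2}$ and hence hold with equality, while those for $i \ge (n-1)\D + 2$ reduce to $\tfrac{e^\e - 1}{2} \le \delta + \tfrac{e^\e - 1}{2}$ and hold with slack. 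The objective value at $\p^*$ comes out to exactly $2a \sum_{j=0}^{n-1} e^{-j\e} \loss(1 + j\D)$, the right-hand side of \eqref{eqn:LBed}.

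For the dual I would introduce variables $\alpha \ge 0$ for \eqref{eqn:primalaa1} and $\mu_i \ge 0$ for each of the privacy constraints \eqref{eqn:primalaa2}--\eqref{eqn:primalaa4}, with dual objective $\tfrac{\alpha}{2} - (\delta + \tfrac{e^\e - 1}{2})\sum_{i \ge 0} \mu_i$. Because a whole block of primal constraints is tight at $\p^*$, there is considerable redundancy in where to place the dual mass; motivated by complementary slackness, I would concentrate the $\mu_i$'s around the indices $i \in \{0, 1\} \cup \{j\D + 1 : 1 \le j \le n-1\}$. Enforcing tightness of the dual inequalities at the $n$ primal-active indices $\p^*_{1+j\D}$ will yield a backward triangular recursion---$\nu_{n-1} = e^{-\e}(\alpha - 2\loss(1+(n-1)\D))$ and $\nu_j = e^{-\e}(\alpha - 2\loss(1+j\D)) - (1-e^{-\e})\sum_{\ell > j}\nu_\ell$ for $j \le n-2$---which determines $\nu_j := \mu_{j\D+1}$ in closed form as a function of $\alpha$, and the equation at $\p_1$ together with the choice $\mu_0 = 2\loss(1)$ will pin down $\alpha$. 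A direct algebraic simplification using the identity $\sum_{k=0}^{n-1} e^{-k\e} = \tfrac{1}{2a}$ then shows that the resulting dual objective collapses to exactly $2a \sum_{j=0}^{n-1} e^{-j\e} \loss(1+j\D) = V_{LB}$ (stated).

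The hard part will be verifying dual feasibility for every non-support primal variable $\p_k$. For $k$ in a block $((j-1)\D, j\D]$ between two consecutive support points, the dual LHS equals that at the nearest active index $k = 1 + (j-1)\D$, so feasibility there follows from monotonicity of $\loss$. The remaining nontrivial conditions---at $\p_0$, at $\p_k$ for $k$ equal to one of the intermediate multiples $\D, 2\D, \ldots, (n-1)\D$, and at $\p_k$ for $k > n\D$, together with the nonnegativity $\mu_i, \nu_j \ge 0$ and the upper bound $\alpha \le 2\loss(n\D + 1)$---give a system of inequalities on $\alpha$. Telescoping the backward recursion and substituting the value of $\alpha$ forced by the $\p_1$ equation, the most restrictive of these conditions should reduce, after algebraic simplification, to precisely the displayed hypothesis $\sum_{i=1}^{n-1} e^{-i\e}(2\loss(i\D) - \loss(1+(i-1)\D) - \loss(1+i\D)) \ge \loss(1)$; the $\loss(i\D)$ terms enter through the dual-feasibility check at the intermediate lattice points $k = i\D$, which are not in the support of $\p^*$ and are not separated from the active support indices by monotonicity alone. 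Once dual feasibility is confirmed, weak LP duality will give $V_{LB} \ge$ dual value $= $ stated formula, and combined with the trivial bound $V^* \ge V_{LB}$ from the relaxation, the theorem follows. I expect the main obstacle to be this last combinatorial accounting step, where one must identify precisely which dual inequality is binding and confirm that the theorem's hypothesis is exactly the condition that makes the dual construction feasible.
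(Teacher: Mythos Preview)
Your overall strategy---exhibit the same primal witness as the paper and then certify optimality by a matching dual---is exactly what the paper does, and your primal is identical to the paper's. The substantive divergence is in the dual construction, and that is where the proposal runs into trouble.

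You concentrate the dual mass on the sparse set $\{0,1\}\cup\{j\D+1:1\le j\le n-1\}$ and aim to make the dual inequality tight only at the primal-support indices $k=1+j\D$. But with this sparse support the dual left-hand side is \emph{constant} on each block $[(j-1)\D+1,j\D]$ for $j\ge 2$ (as you yourself note), so the feasibility checks at the ``intermediate lattice points'' $k=i\D$ for $i\ge 2$ reduce, by monotonicity of $\loss$, to the already-tight constraint at $k=1+(i-1)\D$. They do not produce any condition involving $\loss(i\D)$. The only new condition from that family is at $k=\D$, which gives an upper bound $y_0\le e^{-\e}(\loss(\D)-\loss(1))$, while the $k=0$ constraint still gives $y_0-y_1\ge\loss(1)$. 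Combining these with $y_0+y_1$ fixed by the $k=1$ equation yields a feasibility condition that involves only $\loss(\D)$ and the values $\loss(1+j\D)$---\emph{not} $\loss(i\D)$ for $i\ge 2$. So your sparse dual does not lead to the hypothesis of the theorem; you would be proving a variant of the theorem with a different sufficient condition. (Relatedly, your claim that $\mu_0=2\loss(1)$ together with the $k=1$ equation ``pins down $\alpha$'' is off by one degree of freedom: after the $n$ tight constraints you still have $\alpha$ free.)

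The paper's dual is different in exactly the way needed to produce the stated hypothesis. It sets $\mu=\loss(1+(n-1)\D)$ and defines $y_k$ for \emph{every} $k\in[2,\,1+(n-2)\D]$ by the backward recursion
\[
y_k \;=\; e^{-\e}\bigl(y_{k+\D}+\loss(k+\D)-\loss(k+\D-1)\bigr),
\]
which forces the dual constraint to be tight at every $k$ in $[1,\,1+(n-1)\D]$, not just at the primal-support indices. The values $\loss(i\D)$ then enter through the \emph{explicit formulas} for $y_0$ and $y_1$ (telescoping the recursion), namely $y_0=\sum_i e^{-i\e}(\loss(i\D)-\loss(1+(i-1)\D))$ and $y_1=\sum_i e^{-i\e}(\loss(1+i\D)-\loss(i\D))$. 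The only nontrivial feasibility check is the $k=0$ constraint (the one dual to $\p_0$), and the paper shows it reduces to $y_0-y_1\ge\loss(1)$, which is exactly the theorem's hypothesis. So the $\loss(i\D)$ terms come not from separate feasibility checks at $k=i\D$, but from spreading the dual mass across all indices via the recursion; if you want the stated hypothesis to be the operative condition, you need this denser dual, not the sparse one.
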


\begin{proof}
  See Appendix \ref{sec:app4}.
\end{proof}

\subsection{ Upper Bound: Uniform Noise Mechanism and Discrete Laplacian Mechanism}

Since $(0,\delta)$-differential privacy implies $(\epsilon,\delta)$-differential privacy, we can use the uniform noise mechanism with noise probability distribution defined in \eqref{eqn:uniformp} to preserve $(\epsilon,\delta)$-differential privacy, and the corresponding upper bound is

\begin{theorem}
  For $(\epsilon,\delta)$-differential privacy, we have
  \begin{align}
    V^* \le V_{UB}^{\mbox{uniform}} = 2\sum_{i=1}^{\frac{\D}{2\delta}-1} \frac{\delta}{\D}\loss(i) + \frac{\delta}{\D}\loss(\frac{\D}{2\delta}). \label{eqn:UBuni}
  \end{align}
\end{theorem}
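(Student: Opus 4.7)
The statement is essentially a direct consequence of the preceding analysis of the uniform noise mechanism in the $(0,\delta)$ setting. My plan is to package three observations: the implication $(0,\delta)\Rightarrow(\epsilon,\delta)$, the admissibility of the uniform distribution in $(\epsilon,\delta)$-differential privacy, and a symmetry-based evaluation of the expected cost.

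First, I would recall that the uniform noise distribution defined in \eqref{eqn:uniformp} has already been verified to satisfy the $(0,\delta)$-differential privacy constraint $\p_S \le \p_{S+d} + \delta$ for every $S\subset\Z$ and $|d|\le \D$. Since for any $\epsilon\ge 0$ we have $1 \le e^\epsilon$, multiplying the right-hand side by $e^\epsilon$ only weakens the constraint. Thus the same distribution automatically satisfies
\begin{align}
\p_S \le e^\epsilon \p_{S+d} + \delta, \quad \forall S \subset \Z, \; |d|\le \D,
\end{align}
which is exactly the $(\epsilon,\delta)$-differential privacy constraint derived in \eqref{eqn:dpdiscrete}. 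Hence the uniform mechanism is a feasible candidate for the $(\epsilon,\delta)$-cost minimization problem, so $V^*$ is upper bounded by its expected cost.

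Second, I would evaluate this expected cost explicitly. Writing out the expectation and splitting the sum around $k=0$ gives
\begin{align}
\sum_{k=-\frac{\D}{2\delta}}^{\frac{\D}{2\delta}-1} \frac{\delta}{\D}\loss(k)
=
\frac{\delta}{\D}\loss\!\left(-\tfrac{\D}{2\delta}\right)
+ \frac{\delta}{\D}\loss(0)
+ \frac{\delta}{\D}\!\!\sum_{k=1}^{\frac{\D}{2\delta}-1}\!\!\bigl(\loss(-k)+\loss(k)\bigr).
\end{align}
Using the standing assumptions that $\loss$ is symmetric with $\loss(0)=0$, the term $\loss(0)$ drops out and $\loss(-k)=\loss(k)$. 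After simplification this is exactly $2\sum_{i=1}^{\frac{\D}{2\delta}-1}\tfrac{\delta}{\D}\loss(i)+\tfrac{\delta}{\D}\loss(\tfrac{\D}{2\delta})$, matching \eqref{eqn:UBuni}.

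There is essentially no real obstacle here: the theorem is a one-line corollary of the $(0,\delta)$ analysis, and the only thing that requires care is the bookkeeping of the asymmetric endpoint (the range $-\tfrac{\D}{2\delta}$ to $\tfrac{\D}{2\delta}-1$ is not perfectly centered), which explains the lone $\tfrac{\delta}{\D}\loss(\tfrac{\D}{2\delta})$ term in the final expression. Therefore I would present the argument compactly, without any new machinery, simply citing \eqref{eqn:uniformp} for feasibility and performing the symmetry reduction above.
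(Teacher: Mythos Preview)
Your proposal is correct and follows exactly the paper's reasoning: the paper simply remarks that $(0,\delta)$-differential privacy implies $(\epsilon,\delta)$-differential privacy, so the uniform mechanism from \eqref{eqn:uniformp} is feasible and its cost \eqref{eqn:UBzerodelta} carries over unchanged. You have merely spelled out the symmetry bookkeeping that the paper leaves implicit.
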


On the other hand, if we simply ignore the  parameter $\delta $ (i.e., set $\delta = 0$), we can use a discrete variant of Laplacian distribution to satisfy the $(\e,0)$-differential privacy, which implies $(\e,\delta)$-differential privacy.

More precisely, define $\lambda \triangleq  e^{-\frac{\e}{\D}}$.
\begin{theorem}
  The probability distribution $\p$ with
  \begin{align}
    p_k \triangleq \frac{1-\lambda}{1+\lambda} \lambda^{|k|}, \forall k \in \Z,
  \end{align}
  satisfies the $(\epsilon, \delta)$-differential privacy constraint, and the corresonding cost is
  \begin{align}
     \sum_{k=-\infty}^{+\infty} p_k \loss(k) = 2 \sum_{k=1}^{+\infty} \frac{1-\lambda}{1+\lambda} \lambda^k \loss(k).
  \end{align}
\end{theorem}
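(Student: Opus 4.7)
The statement has two parts: verifying that the proposed $\p$ satisfies the $(\epsilon,\delta)$-DP constraint in \eqref{eqn:dpdiscrete}, and computing its expected cost. Both are essentially direct verifications; there is no real obstacle, but one should isolate the right intermediate inequality.

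The plan is to first check normalization and then prove the stronger statement that the discrete Laplacian satisfies $(\epsilon,0)$-differential privacy, which trivially implies $(\epsilon,\delta)$-DP for any $\delta \ge 0$. For normalization, I would use the geometric series identity
\begin{align}
\sum_{k=-\infty}^{+\infty} \lambda^{|k|} = 1 + 2\sum_{k=1}^{\infty}\lambda^k = \frac{1+\lambda}{1-\lambda},
\end{align}
so that $\sum_k p_k = 1$.

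For the privacy constraint, the key observation is the pointwise ratio bound: for any $k \in \Z$ and any integer $d$ with $|d| \le \D$,
\begin{align}
\frac{p_k}{p_{k+d}} = \lambda^{|k| - |k+d|} \le \lambda^{-|d|} \le \lambda^{-\D} = e^{\epsilon},
\end{align}
where I used the reverse triangle inequality $|k|-|k+d| \le |d|$ and the fact that $\lambda = e^{-\epsilon/\D} \in (0,1]$. Summing this pointwise inequality over any $S \subseteq \Z$ gives
\begin{align}
\p_S = \sum_{k \in S} p_k \le e^{\epsilon}\sum_{k\in S} p_{k+d} = e^{\epsilon}\,\p_{S+d},
\end{align}
which is \eqref{eqn:dpdiscrete} with $\delta = 0$, hence a fortiori with any $\delta \ge 0$.

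Finally, the cost formula follows from the symmetry $p_k = p_{-k}$ together with the standing assumption $\loss(0)=0$ made in Section~\ref{sec:formulation}:
\begin{align}
\sum_{k=-\infty}^{+\infty} p_k \loss(k) = p_0 \loss(0) + 2\sum_{k=1}^{\infty} p_k \loss(k) = 2\sum_{k=1}^{\infty}\frac{1-\lambda}{1+\lambda}\lambda^k \loss(k),
\end{align}
which is exactly the claimed expression. The only subtlety worth flagging is that the pointwise-ratio argument actually gives the stronger conclusion $(\epsilon,0)$-DP for all $|d|\le \D$, which is strictly stronger than what is being claimed; no use of $\delta$ is needed in the argument.
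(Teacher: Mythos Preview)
Your proof is correct and follows exactly the approach the paper indicates (though the paper does not spell out a formal proof): show that the discrete Laplacian satisfies the stronger $(\epsilon,0)$-constraint via the pointwise ratio bound $p_k \le e^{\epsilon} p_{k+d}$, which then implies $(\epsilon,\delta)$-DP for any $\delta \ge 0$, and use symmetry together with $\loss(0)=0$ for the cost identity. There is nothing to add.
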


\begin{corollary}
  \begin{align}
    V^* \le  V_{UB}^{\mbox{Lap}} \triangleq   2 \sum_{k=1}^{+\infty}  \frac{1-\lambda}{1+\lambda} \lambda^k \loss(k) \label{eqn:UBlap}
  \end{align}
\end{corollary}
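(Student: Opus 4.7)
The plan is to observe that this Corollary is an immediate consequence of the preceding Theorem combined with the defining property of $V^*$. Recall that $V^*$ is the infimum of the objective $\sum_{i} \loss(i)\p_i$ over all noise probability mass functions $\p$ satisfying the $(\e,\delta)$-differential privacy constraint \eqref{eqn:dpdiscrete}. The preceding Theorem exhibits one such feasible distribution, namely the discrete Laplacian $p_k = \frac{1-\lambda}{1+\lambda}\lambda^{|k|}$, and evaluates its cost to be exactly $2\sum_{k=1}^{+\infty}\frac{1-\lambda}{1+\lambda}\lambda^k \loss(k)$. Since this is one admissible value of the objective, $V^*$ cannot exceed it, giving $V^* \le V_{UB}^{\mbox{Lap}}$.

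For completeness I would first verify the two claims packed into the preceding Theorem, which the Corollary assumes as a black box. Normalization of $p_k$ follows from the geometric sum $\sum_{k \in \Z}\lambda^{|k|} = \frac{1+\lambda}{1-\lambda}$. The cost computation is then a symmetry-based rearrangement: since $\loss(\cdot)$ is symmetric with $\loss(0) = 0$, and $p_k = p_{-k}$, the expectation telescopes into $2\sum_{k \ge 1} p_k \loss(k)$, which is the asserted sum (with convergence tacitly required of the pair $(\loss, \lambda)$ for the upper bound to be non-vacuous).

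The only substantive check is the privacy constraint. Here I would argue the stronger $(\e,0)$-bound, which trivially implies $(\e,\delta)$ since $\delta \ge 0$. For any integer $d$ with $|d| \le \D$ and any $k \in \Z$, the reverse triangle inequality gives $|k+d| \ge |k| - |d|$, hence
\begin{align}
\frac{p_k}{p_{k+d}} = \lambda^{|k| - |k+d|} \le \lambda^{-|d|} \le \lambda^{-\D} = e^{\e}.
\end{align}
Summing $p_k \le e^{\e} p_{k+d}$ over $k \in S$ yields $\p_S \le e^{\e}\p_{S+d}$ for every $S \subseteq \Z$, which is the $(\e, 0)$-differential privacy constraint in the form \eqref{eqn:dpdiscrete}.

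There is no real obstacle here: the Corollary is a one-line consequence of feasibility plus minimality, and the only place where care is needed is the pointwise ratio bound used to verify the privacy constraint of the discrete Laplacian. In particular, note that this upper bound uses only the $\e$-parameter and completely ignores the slack afforded by $\delta$, which is precisely why it will have to be combined with the uniform-noise upper bound \eqref{eqn:UBuni} to obtain the near-optimality claim $\lim_{(\e,\delta)\to(0,0)} \min(V_{UB}^{\mbox{Lap}}, V_{UB}^{\mbox{uniform}})/V_{LB} \le C$ advertised in the introduction.
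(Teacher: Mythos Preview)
Your proposal is correct and matches the paper's approach: the paper states this Corollary without proof, as it is an immediate consequence of the preceding Theorem (feasibility of the discrete Laplacian distribution) together with the definition of $V^*$ as an infimum. Your additional verification of normalization, the cost computation, and the pointwise ratio bound for the $(\e,0)$-privacy constraint is more detail than the paper provides but is exactly the intended justification.
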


\subsection{Comparison of Lower Bound and Upper Bound}

In this section, we compare the lower bound \eqref{eqn:LBed} and the upper bounds $V_{UB}^{\mbox{uniform}} $ and $V_{UB}^{\mbox{Lap}}$ for $(\e,\delta)$-differential privacy for $\ell^1$ and $\ell^2$ cost functions, i.e., $\loss(i) = |i|$ and $\loss(i) = i^2$, in which $V^*$ corresponds to the minimum expected noise amplitude and minimum noise power, respectively. We show that the multiplicative gap between the lower bound and upper bound is bounded by a constant as $(\e,\delta) \to (0,0)$.

\subsubsection{ $\e \le \delta$ regime}

We first compare the gap between the lower bound $V_{LB}$ and the upper bound $V_{UB}^{\mbox{uniform}} $ in the regime $\e \le \delta$ as $\delta \to 0$.

\begin{corollary}\label{cor:l1edelta}
  For the cost function  $\loss(k) = |k|$, in the regime $\epsilon \le \delta$, we have
  \begin{align}
    \lim_{\delta \to 0} \frac{V_{UB}^{\mbox{uniform}}}{V_{LB}} \le \frac{1}{4(1 - 2 \log \frac{3}{2})} \approx 1.32
  \end{align}
\end{corollary}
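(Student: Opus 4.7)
The plan is to specialize the upper bound \eqref{eqn:UBuni} and the lower bound from Theorem \ref{thm:generaledloss} to $\loss(k) = |k|$, parametrize $\e = c\delta$ with $c \in (0,1]$, and then maximize the resulting asymptotic ratio over $c$.

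First I would specialize the two bounds. The uniform-mechanism formula telescopes exactly as in Corollary \ref{cor:l1cost0delta} to $V_{UB}^{\mbox{uniform}} = \D/(4\delta)$. For the lower bound, a direct substitution yields $2\loss(i\D) - \loss(1+(i-1)\D) - \loss(1+i\D) = \D - 2$, so the hypothesis of Theorem \ref{thm:generaledloss} reduces to $(\D-2)\sum_{i=1}^{n-1} e^{-i\e} \ge 1$, which is satisfied once $\D \ge 3$ and $\delta$ is small enough (since $n \to \infty$). Using the defining identity $2a\sum_{k=0}^{n-1}b^k = 1$ together with the arithmetic-geometric sum formula, the lower bound simplifies to
\begin{align*}
V_{LB} \;=\; 2a\sum_{k=0}^{n-1} b^k + 2a\D \sum_{k=0}^{n-1} k b^k \;=\; 1 + \frac{\D}{1-b}\bigl(b - 2 a n b^n\bigr).
\end{align*}

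Next I would take the asymptotics as $\delta \to 0$ with $\e = c\delta$ fixed, $c \in (0,1]$. From $b^n = 2\delta/(2\delta + e^\e - 1) \to 2/(2+c)$, $n\e \to \log(1+c/2)$, $a \sim \delta(1+c/2)$ and $1 - b \sim c\delta$, we get $2anb^n \to (2/c)\log(1+c/2)$, hence
\begin{align*}
V_{LB} \;\sim\; \frac{\D}{\delta}\cdot\frac{c - 2\log(1+c/2)}{c^2},\qquad \frac{V_{UB}^{\mbox{uniform}}}{V_{LB}} \;\longrightarrow\; \frac{c^2}{4\bigl[c - 2\log(1+c/2)\bigr]} \;=:\; \frac{f(c)}{4}.
\end{align*}
This asymptotic step is the main obstacle: the parameters $a$, $b$, $n$ are coupled through the normalization $2a(1-b^n)=1-b$, so one must expand them consistently to isolate the leading $1/\delta$ coefficient in $V_{LB}$ and verify that the constant $1$ and other subleading terms are negligible in the ratio.

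Finally I would show $f$ is nondecreasing on $[0,\infty)$, so that the supremum over the regime $c \in (0,1]$ is attained at $c = 1$. Writing $g(c) := c - 2\log(1+c/2)$, so that $g'(c) = c/(2+c)$, the sign of $f'(c)$ agrees with that of $h(c) := 2g(c) - c g'(c)$; a direct calculation gives $h'(c) = c^2/(2+c)^2 \ge 0$ and $h(0) = 0$, so $h \ge 0$ and $f$ is nondecreasing. Therefore the supremum of $f(c)/4$ on $(0,1]$ equals $f(1)/4 = 1/\bigl[4(1-2\log(3/2))\bigr] \approx 1.32$, which is the claimed bound.
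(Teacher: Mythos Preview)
Your argument is correct and arrives at the same constant, but the route differs from the paper's in one key respect. The paper observes that, for fixed $\delta$, the lower bound $V_{LB}$ is monotone decreasing in $\epsilon$; since $V_{UB}^{\mbox{uniform}}$ depends only on $\delta$, this immediately reduces the entire regime $\epsilon \le \delta$ to the single boundary case $\epsilon = \delta$, and only the asymptotic at $c=1$ needs to be computed. You instead compute the limiting ratio for every fixed $c = \epsilon/\delta \in (0,1]$ and then prove by a direct calculus argument that $f(c) = c^2/(c - 2\log(1+c/2))$ is nondecreasing, so the supremum is attained at $c=1$. Your approach extracts more information (the exact asymptotic ratio for each $c$, and in particular that it degenerates to $1$ as $c\to 0$, consistent with the $(0,\delta)$ case), but the paper's monotonicity observation is shorter and, because it holds for every finite $\delta$ rather than only in the limit, it handles all sequences with $\epsilon \le \delta$ uniformly without any concern about whether $\epsilon/\delta$ stays fixed as $\delta \to 0$.
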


\begin{proof}
  See Appendix \ref{sec:app5}.
\end{proof}


\begin{corollary}\label{cor:l2edelta}
  For the cost function  $\loss(k) = k^2$, in the regime $\epsilon \le \delta$, we have
  \begin{align}
      \lim_{\delta \to 0} \frac{V_{UB}^{\mbox{uniform}}}{V_{LB}} \le  \frac{1}{12 (2 - 4 \log (\frac{3}{2})  -   2 (\log (\frac{3}{2}))^2 )} \approx \frac{5}{3}.
  \end{align}
\end{corollary}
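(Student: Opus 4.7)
The plan is to combine the explicit upper bound from Corollary~\ref{cor:l2cost0delta}, namely $V_{UB}^{\mbox{uniform}} = \frac{\D^2}{12\delta^2} + \frac{1}{6}$ (still valid here since any $(0,\delta)$-DP mechanism is $(\epsilon,\delta)$-DP), with the lower bound of Theorem~\ref{thm:generaledloss} specialized to $\loss(k) = k^2$, and then analyze the ratio asymptotically in the single parameter $t := \epsilon/\delta \in [0,1]$ as $\delta \to 0$. Before invoking the theorem I must verify its hypothesis: expanding gives $2(i\D)^2 - (1 + (i-1)\D)^2 - (1 + i\D)^2 = (2i-1)\D(\D-2) - 2$, which is positive and linear in $i$ for $\D \geq 3$, so $\sum_{i=1}^{n-1} e^{-i\epsilon}\bigl((2i-1)\D(\D-2) - 2\bigr) \geq 1 = \loss(1)$ for all sufficiently small $(\epsilon,\delta)$ since $n \to \infty$.

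Next I would carry out the asymptotic analysis of $V_{LB}$. From $\sum_{k=0}^{n-1} a b^k = 1/2$ with $a = (\delta + (e^{\epsilon}-1)/2)/e^{\epsilon}$ and $b = e^{-\epsilon}$, the identity $e^{-n\epsilon} = 1 - (1-b)/(2a)$ together with $\epsilon = t\delta$ gives, to leading order, $a/\delta \to 1 + t/2$ and $n\epsilon \to c(t) := \log(1 + t/2)$. In the sum $V_{LB} = 2a \sum_{k=0}^{n-1} e^{-k\epsilon}(1+k\D)^2$, the quadratic term $k^2\D^2$ dominates, and the substitution $y = k\epsilon$ recognises it as a Riemann sum converging to $\epsilon^{-3}\int_0^{c(t)} y^2 e^{-y}\,dy$. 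Using the antiderivative $\int_0^c y^2 e^{-y}\,dy = 2 - e^{-c}(c^2 + 2c + 2)$ together with $e^{-c(t)} = 2/(2+t)$ yields, after simplification,
\begin{align*}
V_{LB} \;\sim\; \frac{2\D^2}{t^3 \delta^2}\,g(t), \qquad g(t) \;:=\; t - 2\log(1 + t/2) - (\log(1 + t/2))^2,
\end{align*}
so the asymptotic ratio is $f(t) := V_{UB}^{\mbox{uniform}}/V_{LB} \to t^3/(24\, g(t))$.

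Finally I would optimize $f$ over $t \in (0,1]$. Taylor expansion gives $g(t) = t^3/24 + O(t^4)$, so $f(0^+) = 1$. To conclude that the supremum is attained at $t = 1$, I would show $f$ is monotonically increasing on $(0,1]$, equivalently (since $g > 0$) that $3g(t) - t g'(t) \geq 0$. After computing $g'(t) = (t/2 - \log(1+t/2))/(1+t/2)$ and substituting $u = \log(1+t/2)$, this reduces to verifying that $4e^{2u} - 2e^u - 4u e^u - 3u^2 e^u - 2 - 2u \geq 0$ for $u \in [0, \log(3/2)]$; its Taylor expansion begins at $(5/12)u^4$, and nonnegativity on this finite interval follows from a short derivative check. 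Evaluating at $t = 1$ then gives $f(1) = 1/\bigl(24(1 - 2\log(3/2) - (\log(3/2))^2)\bigr) = 1/\bigl(12(2 - 4\log(3/2) - 2(\log(3/2))^2)\bigr)$, as claimed. The main obstacle is making the Riemann-sum asymptotic uniform in $t$: one must verify that the lower-order terms $1$ and $2k\D$ in $(1+k\D)^2$ contribute only $o(\D^2/\delta^2)$ to $V_{LB}$ uniformly for $t \in (0,1]$, so that the quadratic approximation is valid all the way through the optimization step, and likewise control the Riemann-approximation error uniformly up to $t = 1$.
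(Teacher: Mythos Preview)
Your argument is correct and lands on the same value as the paper, but the paper's route is considerably shorter. The key simplification you miss (used in the companion Corollary~\ref{cor:l1edelta} and then invoked verbatim here) is that for fixed $\delta$ the lower bound $V_{LB}$ is monotonically decreasing in $\epsilon$: enlarging $\epsilon$ relaxes every constraint $\p_{S_k} \le e^{\epsilon}\p_{S_{k+\D}} + \delta$, hence enlarges the feasible set of the relaxed linear program, hence can only decrease its minimum. Since $V_{UB}^{\mbox{uniform}}$ depends only on $\delta$, the ratio $V_{UB}^{\mbox{uniform}}/V_{LB}$ is increasing in $\epsilon$, so in the regime $\epsilon \le \delta$ one may simply set $\epsilon = \delta$ and compute the single limit there. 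This collapses your parametric family $t = \epsilon/\delta \in (0,1]$ to the point $t=1$, making the monotonicity analysis of $f(t) = t^3/(24\,g(t))$ unnecessary and also dissolving the uniformity issue you rightly flag at the end. Your approach does buy something the paper's shortcut does not: an explicit asymptotic ratio $f(t)$ for every $t$, with the limiting value $f(0^+)=1$ recovering the tight $(0,\delta)$ result of Corollary~\ref{cor:l2cost0delta} as a special case.
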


\begin{proof}
  See Appendix \ref{sec:app6}.
\end{proof}


\subsubsection{ $\delta \le \epsilon$ regime}

We then compare the gap between the lower bound $V_{LB}$ and the upper bound $V_{UB}^{\mbox{Lap}} $ in the regime $\delta \le \e$ as $\e \to 0$.

\begin{corollary}\label{cor:l1deltae}
  For the cost function  $\loss(k) = |k|$, in the regime $\delta \le \e$, we have
  \begin{align}
    \lim_{\e \to 0} \frac{V_{UB}^{\mbox{Lap}}}{V_{LB}} \le \frac{1}{1 - 2 \log \frac{3}{2}} \approx 5.29.
  \end{align}
\end{corollary}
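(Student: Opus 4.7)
The plan is to apply Theorem~\ref{thm:generaledloss} to $\loss(k)=k$ (assuming $\D\ge 3$) and compute $V_{UB}^{\mbox{Lap}}$ in closed form. For $\loss(k)=k$, the telescoping gives $2i\D-(1+(i-1)\D)-(1+i\D)=\D-2$, so the hypothesis of Theorem~\ref{thm:generaledloss} reduces to $(\D-2)\sum_{i=1}^{n-1}e^{-i\e}\ge 1$, which holds for $\D\ge 3$ and sufficiently small $\e$ since $n\to\infty$ in this regime. On the upper bound side, $\sum_{k=1}^{\infty}k\lambda^k=\lambda/(1-\lambda)^2$ yields
\begin{align}
V_{UB}^{\mbox{Lap}}=\frac{2\lambda}{1-\lambda^2},\qquad \lambda=e^{-\e/\D},
\end{align}
so that $V_{UB}^{\mbox{Lap}}=\D/\e+O(1)$ as $\e\to 0$.

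Next I would eliminate $n$ from $V_{LB}$ using the defining relation $\sum_{k=0}^{n-1}ab^k=1/2$, which gives $b^n=2\delta/(2\delta+e^\e-1)$ and $n=\e^{-1}\log((2\delta+e^\e-1)/(2\delta))$. Combined with the standard identity $\sum_{k=0}^{n-1}kb^k=b(1-b^n)/(1-b)^2-nb^n/(1-b)$ and the simplification $2ab^n=2\delta/e^\e$, the lower bound \eqref{eqn:LBed} collapses to
\begin{align}
V_{LB}=1+\frac{\D e^{-\e}}{1-e^{-\e}}-\frac{2\delta\D}{\e\,e^\e(1-e^{-\e})}\log\frac{2\delta+e^\e-1}{2\delta}.
\end{align}

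For the asymptotic analysis, I would parameterize $\delta=r\e$ with $r\in(0,1]$ and Taylor expand using $1-e^{-\e}=\e+O(\e^2)$ and $e^\e-1=\e+O(\e^2)$. The two diverging contributions $\D e^{-\e}/(1-e^{-\e})$ and the log-term combine into a single $\D/\e$ coefficient, yielding $V_{LB}=(\D/\e)\bigl(1-2r\log((2r+1)/(2r))\bigr)+O(1)$. Therefore, for each fixed $r\in(0,1]$, $V_{UB}^{\mbox{Lap}}/V_{LB}\to 1/(1-f(r))$ where $f(r):=2r\log(1+1/(2r))$.

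The final step is to maximize $f$ on $(0,1]$. Differentiation gives $f'(r)=2\log((2r+1)/(2r))-2/(2r+1)$, and applying the elementary inequality $\log(1+x)\ge x/(1+x)$ with $x=1/(2r)$ shows $f'(r)\ge 0$ on $(0,\infty)$. Hence $f$ is strictly increasing and $\sup_{r\in(0,1]}f(r)=f(1)=2\log(3/2)$. Since every sequence $(\e_n,\delta_n)$ with $\e_n\to 0$ and $\delta_n\le\e_n$ admits (by Bolzano--Weierstrass) a subsequence along which $\delta_n/\e_n\to r\in[0,1]$, the limsup of the ratio is bounded by $1/(1-2\log(3/2))\approx 5.29$, as claimed. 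The main obstacle is the bookkeeping in the expansion of $V_{LB}$: two separately $\Theta(1/\e)$-size terms must be combined with uniform (in $r$) $O(1)$ control to produce the correct coefficient $1-f(r)$ in the leading order, and this uniformity is what justifies passing to limsup over arbitrary admissible sequences rather than fixed-ratio paths.
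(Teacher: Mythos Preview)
Your proposal is correct and arrives at the same bound, but it takes a noticeably longer route than the paper. The paper's key simplification is a monotonicity observation you do not use: for fixed $\e$, the relaxed LP defining $V_{LB}$ only loosens as $\delta$ increases, so $V_{LB}(\e,\delta)\ge V_{LB}(\e,\e)$ whenever $\delta\le\e$. This immediately reduces the problem to the single boundary case $\delta=\e$, after which one just computes $V_{LB}$ and $V_{UB}^{\mbox{Lap}}$ at $\delta=\e$ (reusing the calculation from Corollary~\ref{cor:l1edelta}) and takes the ratio. No parameterization by $r=\delta/\e$, no optimization of $f(r)=2r\log(1+1/(2r))$, and no uniformity-in-$r$ control of the $O(1)$ remainders is needed.

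Your approach, by contrast, derives a closed-form expression for $V_{LB}$ valid for all $(\e,\delta)$, expands it along each ray $\delta=r\e$, and then shows via $f'(r)\ge 0$ that the worst ray is $r=1$. This is more informative---you recover the exact limiting ratio $1/(1-f(r))$ for every fixed $r$---but it is also where your acknowledged ``main obstacle'' lives: to pass from fixed-$r$ limits to an arbitrary admissible sequence you need the error terms to be uniform in $r$, and this is genuinely delicate near $r=0$ where $\log((2r+1)/(2r))$ blows up. That gap is fillable, but the paper's monotonicity argument sidesteps it entirely. Incidentally, your verification that the hypothesis of Theorem~\ref{thm:generaledloss} needs $\D\ge 3$ is a point the paper leaves implicit.
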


\begin{proof}
  See Appendix \ref{sec:app7}.
\end{proof}


\begin{corollary}\label{cor:l2deltae}
  For the cost function  $\loss(k) = k^2$, in the regime $\epsilon \le \delta$, we have
  \begin{align}
      \lim_{\delta \to 0} \frac{V_{UB}^{\mbox{Lap}}}{V_{LB}} \le  \frac{2}{(2 - 4 \log (\frac{3}{2})  -   2 (\log (\frac{3}{2}))^2 )} \approx 40.
  \end{align}
\end{corollary}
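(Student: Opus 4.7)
The plan is to compute sharp asymptotics for both $V_{UB}^{\mbox{Lap}}$ and the lower bound of Theorem~\ref{thm:generaledloss} specialized to $\loss(k)=k^{2}$, then evaluate the limiting ratio along each path $(\e,\alpha\e)\to(0,0)$ indexed by $\alpha := \delta/\e \in [0,1]$, and finally take the supremum over $\alpha$. This mirrors the strategy of Corollary~\ref{cor:l2edelta}, but now the free ratio lives in $[0,1]$ rather than being a limit along one direction; I read the regime in the statement as $\delta\le\e$ (to match the subsection heading and Corollary~\ref{cor:l1deltae}).

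For the upper bound, the identity $\sum_{k\ge 1}k^{2}\lambda^{k}=\lambda(1+\lambda)/(1-\lambda)^{3}$ with $\lambda=e^{-\e/\D}$ gives the closed form $V_{UB}^{\mbox{Lap}}=2\lambda/(1-\lambda)^{2}\sim 2\D^{2}/\e^{2}$. For the lower bound I would first verify the sufficient condition of Theorem~\ref{thm:generaledloss}: the bracket $2(i\D)^{2}-(1+(i-1)\D)^{2}-(1+i\D)^{2}$ simplifies to $(2i-1)\D(\D-2)-2$, which grows linearly in $i$ for $\D\ge 3$, so the geometric-weighted sum comfortably dominates $\loss(1)=1$ once $\e$ is small. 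Writing $V_{LB}=2a\sum_{k=0}^{n-1}b^{k}(1+k\D)^{2}$ with $a=(\delta+(e^{\e}-1)/2)e^{-\e}$ and $b=e^{-\e}$, the substitution $u=k\e$ turns the sum into a Riemann integral and yields the leading-order expression $V_{LB}\sim (2a\D^{2}/\e^{3})\int_{0}^{T}u^{2}e^{-u}\,du$, where the cutoff $T:=n\e$ is pinned down by the normalization $a(1-b^{n})/(1-b)=1/2$. Passing to the limit gives $e^{-T}\to 2\alpha/(2\alpha+1)$, i.e.\ $T(\alpha)=\log((2\alpha+1)/(2\alpha))$.

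Combining, the factor $\D^{2}$ cancels and the limiting ratio reduces to $g(\alpha) = 2 / [(2\alpha+1)\,I(\alpha)]$, where $I(\alpha) := \int_{0}^{T(\alpha)} u^{2} e^{-u}\,du = 2 - \tfrac{2\alpha}{2\alpha+1}\bigl(T(\alpha)^{2}+2T(\alpha)+2\bigr)$ via the antiderivative $-(u^{2}+2u+2)e^{-u}$. Direct substitution at $\alpha=1$ (where $T=\log(3/2)$) reproduces the advertised constant $2/(2-4\log(3/2)-2(\log(3/2))^{2})\approx 40$, while $\alpha=0$ recovers the pure $\e$-DP ratio of $1$. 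The main obstacle is to show that $g$ is monotonically non-decreasing on $[0,1]$, so that its supremum over the regime is attained at $\alpha=1$. After the change of variable $s=2\alpha+1\in[1,3]$, a direct differentiation $(d/ds)[sI(s)]$ shows that monotonicity is equivalent to the elementary inequality $h(s) := sT(s)^{2} + 2(s-1)T(s) - 2 > 0$ on $[1,3]$ with $T(s)=\log(s/(s-1))$, which I would verify by checking that $h'(s)$ is negative on $(1,3]$ (a short calculation using $T'(s)=-1/[s(s-1)]$) and that $h(3)=3(\log(3/2))^{2}+4\log(3/2)-2>0$ numerically. If the direct monotonicity step proves awkward, the weaker uniform bound $g(\alpha)\le g(1)$ on $[0,1]$ already suffices for the stated inequality and can be obtained by a crude continuity-compactness comparison of $I(\alpha)$ to $I(1)$.
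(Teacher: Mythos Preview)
Your argument is correct, and your reading of the intended regime as $\delta\le\e$ is right. However, your route is more laborious than the paper's. The paper's proof exploits a structural monotonicity that you bypass: since $V_{UB}^{\mbox{Lap}}$ depends only on $\e$, and since increasing $\delta$ (for fixed $\e$) only relaxes the constraints~\eqref{eqn:primalaa2}--\eqref{eqn:primalaa4} of the LP defining $V_{LB}$, the lower bound $V_{LB}$ is non-increasing in $\delta$. Hence the ratio $V_{UB}^{\mbox{Lap}}/V_{LB}$ is maximized over the wedge $\delta\le\e$ at the boundary $\delta=\e$, for every fixed $\e$. This reduces the problem immediately to the single diagonal $\e=\delta$, where the asymptotics of $V_{LB}$ were already computed in Corollary~\ref{cor:l2edelta}, and the asymptotic $V_{UB}^{\mbox{Lap}}\sim 2\D^{2}/\e^{2}$ follows from the closed form $2\lambda/(1-\lambda)^{2}$ that you also derive.

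Your approach instead parameterizes all rays by $\alpha=\delta/\e\in[0,1]$, derives the explicit limiting ratio $g(\alpha)=2/[(2\alpha+1)I(\alpha)]$ via a Riemann-sum argument, and then proves $g$ is non-decreasing on $[0,1]$ through the auxiliary inequality $h(s)=sT(s)^{2}+2(s-1)T(s)-2>0$ on $(1,3]$. This is sound (and your derivative reduction and the endpoint check $h(3)>0$ are correct), but it replaces a one-line LP-monotonicity observation with a page of calculus. What your method buys in return is the exact value of the limiting ratio along every ray, not merely the bound at $\alpha=1$; the paper's shortcut gives only the supremum.
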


\begin{proof}
  See Appendix \ref{sec:app8}.
\end{proof}

\section{$(\e,\delta)$-Differential Privacy in the Multi-dimensional Setting} \label{sec:zerodeltamulti}

In this section we consider the $(\e,\delta)$-differential privacy in the multi-dimensional setting, where the query output has multiple components and the global sensitivity $\D$ is defined as the maximum $\ell^1$ norm of the difference of the query outputs over two neighboring datasets.

Let $d$ be the dimension of the query output. Hence, the query output $q(D) \in \Z^d$.
Let $\p$ be the probability mass function of the additive noise over the domain $\Z^d$. Then the $(\e,\delta)$-differential privacy constraint on $\p$ in the multi-dimensional setting is that
\begin{align}
  \p_S \le \p_{S+\mathbf{v}} + \delta, \forall S \subset \Z^d, \mathbf{v} \in \Z^d,  \|\mathbf{v}\|_1 \le \D. \label{eqn:multidp}
\end{align}

Consider a cost function $\loss(\cdot): \Z^d \rightarrow \R $, which is a function of the added noise $\noise$. Our goal is to minimize the expectation of the cost subject to the $(\e,\delta)$-differential privacy constraint \eqref{eqn:multidp}:
\begin{align}
  V^* := \mathop{\text{min}}\limits_{ \p} & \ \sum_{\mathbf{v} \in \Z^d}  \loss(\mathbf{v}) \p(\mathbf{v}) \label{eqn:optdis}\\
  \text{subject to} & \;   \p_S \le \p_{S+\mathbf{v}} + \delta, \forall S \subset \Z^d, \mathbf{v} \in \Z^d,  \|\mathbf{v}\|_1 \le \D.  \nonumber
\end{align}

\subsection{$(0,\delta)$-differential privacy}

We first consider the simple case when $\e = 0$, i.e., $(0,\delta)$-differential privacy. The $(0,\delta)$-differential privacy constraint requires that the total variation of the conditional probability distributions of the query output for neighboring datasets should be bounded by $\delta$.

In the differential privacy constraint \eqref{eqn:multidp}, by choosing the subset
\begin{align}
  S = S_k^m :=  \{ (i_1,i_2,\dots,i_d) \in \Z^d \; | \; i_m \geq k \}
\end{align}
for $ k \in \N$, $m \in \{1,2,\dots,d\}$, and choosing $\mathbf{v}$ such that only one compoment is $\D$ and all other components are zero, we see that the noise probability distribution $\p$ must satisfy the constraints
 \begin{align}
  \sum_{ (i_1,i_2,\dots,i_d) \in \Z^d: k \le i_m \le k+\D-1  }  \p(i_1,i_2,\dots,i_d)   \leq \delta, \quad \forall k \in \N, \forall m \in \{1,2,\dots,d\}.
\end{align}

To avoid integer-rounding issues, we assume that $\frac{1}{2\delta}$ is an integer.

\subsubsection{Lower Bound on $V^*$}

We relax the  constraint \eqref{eqn:multidp}  by choosing $S$ to be $S_k^m$ and choosing $\mathbf{v}$ such that only one compoment is $\D$ and all other components are zero. Then we get a relaxed linear program, the solution of which is a lower bound for $V^*$. More precisely,

\begin{align}
V^* \ge V_{LB} :=  \min &\quad  \sum_{ \mathbf{i} \in \Z^d} \p(\mathbf{i}) \loss(\mathbf{i})  \label{eqn:primalmulti} \\
\mbox{such that} &\quad   \p(\mathbf{i}) \geq 0 \quad \forall \mathbf{i} \in \Z^d \nonumber \\
&\quad \sum_{ \mathbf{i} \in \Z^d} \p(\mathbf{i}) \geq 1  \nonumber  \\
&\quad  \sum_{ (i_1,i_2,\dots,i_d) \in \Z^d: k \le i_m \le k+\D-1  }  \p(i_1,i_2,\dots,i_d)   \leq \delta , \quad \forall k \in \N, \forall m \in \{1,2,\dots,d\} . \nonumber
\end{align}

\begin{theorem}\label{thm:l1multilbzerodelta}
  In the case $\loss(\mathbf{i}) = \|\mathbf{i}\|_1, \forall \mathbf{i} \in \Z^d$, we have
  \begin{align}
    V_{LB} \ge \frac{d\D}{4\delta} - \frac{\D-1}{2}d.
  \end{align}
\end{theorem}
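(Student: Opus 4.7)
The plan is to reduce the $d$-dimensional LP \eqref{eqn:primalmulti} to a sum of $d$ one-dimensional problems via the coordinate marginals of $\p$, and then invoke the one-dimensional lower bound in Corollary \ref{cor:l1cost0delta}. This reduction works because the $\ell^1$ cost $\|\mathbf{i}\|_1 = \sum_{m=1}^d |i_m|$ is additive across coordinates, and the axis-aligned instances of the privacy constraint (take $\mathbf{v} = \pm \D e_m$ and $S = \{\mathbf{i} : i_m \ge k\}$ in \eqref{eqn:multidp}) decouple into one-dimensional constraints on each marginal.

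First I would symmetrize. Given any feasible $\p$, form
\begin{equation}
\tilde{\p}(\mathbf{i}) := \frac{1}{2^d} \sum_{\mathbf{s} \in \{-1,+1\}^d} \p(s_1 i_1, \ldots, s_d i_d).
\end{equation}
Since both the cost $\|\cdot\|_1$ and the set $\{\mathbf{v} \in \Z^d : \|\mathbf{v}\|_1 \le \D\}$ of shift vectors in \eqref{eqn:multidp} are invariant under coordinate-wise sign flips, $\tilde{\p}$ is still feasible, has the same expected cost as $\p$, and is symmetric in every coordinate. So I may restrict attention to coordinate-wise symmetric $\p$ throughout.

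Next I would pass to the marginals $\pi_m(k) := \sum_{\mathbf{i} \in \Z^d :\, i_m = k} \p(\mathbf{i})$. Each $\pi_m$ is a symmetric probability distribution on $\Z$, linearity gives
\begin{equation}
\sum_{\mathbf{i} \in \Z^d} \p(\mathbf{i}) \|\mathbf{i}\|_1 = \sum_{m=1}^d \sum_{k \in \Z} |k|\, \pi_m(k),
\end{equation}
and the axis-aligned version of \eqref{eqn:multidp} yields $\sum_{\ell=k}^{k+\D-1} \pi_m(\ell) \le \dlt$ for every $k \in \Z$ (in particular for $k \in \N$). Hence each $\pi_m$ is feasible for the one-dimensional relaxed LP \eqref{eqn:primal}--\eqref{eqn:primal112}, and Corollary \ref{cor:l1cost0delta} gives $\sum_{k \in \Z} |k|\, \pi_m(k) \ge \frac{\D}{4\dlt} + 1 - \frac{\D}{2}$.

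Summing over $m = 1, \ldots, d$,
\begin{equation}
V_{LB} \ge d \Bigl( \frac{\D}{4\dlt} + 1 - \frac{\D}{2} \Bigr) = \frac{d\D}{4\dlt} - \frac{(\D-2)d}{2} \ge \frac{d\D}{4\dlt} - \frac{(\D-1)d}{2},
\end{equation}
which is the claim (in fact tighter by $d/2$). I anticipate essentially no serious obstacle; the only delicate point is the symmetrization step, which amounts to checking that the feasible set in \eqref{eqn:multidp} is invariant under the coordinate-reflection group $\{-1,+1\}^d$, an immediate consequence of the $\ell^1$-symmetry of both $\loss$ and the shift ball. Everything else is just unpacking definitions and invoking a result already established in the one-dimensional setting.
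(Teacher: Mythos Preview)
Your marginal-reduction approach is correct and genuinely different from the paper's. The paper attacks the dual of the $d$-dimensional relaxed LP \eqref{eqn:primalmulti} directly: it exhibits an explicit dual-feasible point $(\mu,\{y_i^{(m)}\}_{i\in\Z,\,1\le m\le d})$, essentially $d$ independent copies of the one-dimensional dual variables (one block per coordinate), and evaluates the dual objective to obtain exactly $\frac{d\D}{4\delta} - \frac{(\D-1)d}{2}$. Your route is more conceptual---pass to coordinate marginals, invoke the one-dimensional $V_{LB}$ already computed in Corollary~\ref{cor:l1cost0delta}, and sum---and it even yields the slightly sharper constant $\frac{d\D}{4\delta} - \frac{(\D-2)d}{2}$. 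The paper's direct dual construction buys two things you should be aware of: it does not inherit the side hypothesis \eqref{eqn:LBcondition1} of Theorem~\ref{thm:lowerbound_zerodelta} (which, as you are invoking it through Corollary~\ref{cor:l1cost0delta}, needs $\D\ge 3$), and it literally lower-bounds $V_{LB}$ as defined. On that last point, note that your symmetrization step appeals to the full constraint \eqref{eqn:multidp}, so as written you are lower-bounding $V^*$ rather than the relaxed $V_{LB}$; to run your argument inside \eqref{eqn:primalmulti} itself you need its slab constraints to be closed under coordinate reflections, which holds once the index set is read as $k\in\Z$ (as the paper's own dual in Appendix~\ref{sec:app9} implicitly does, with $y_i^{(m)}$ supported on negative $i$) rather than only $k\in\N$ as literally typeset.
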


\begin{proof}
  See Appendix \ref{sec:app9}.
\end{proof}

\begin{theorem}\label{thm:l2multilbzerodelta}
  In the case $\loss(\mathbf{i}) = \|\mathbf{i}\|_2^2 = \sum_{m=1}^d i_m^2, \forall \mathbf{i} = (i_1,\dots,i_d)\in \Z^d$, we have
  \begin{align}
    V_{LB} \ge \frac{d\D^2}{12\delta^2}  + (\frac{1}{\D}-1) \frac{d\D^2}{4\delta} + \frac{1-\D}{2}d + \frac{d\D^2}{6}.
  \end{align}
\end{theorem}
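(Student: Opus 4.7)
The natural approach is to reduce the $d$-dimensional problem to $d$ decoupled one-dimensional instances via marginalization, and then invoke the one-dimensional $\ell^2$ lower bound from Corollary~\ref{cor:l2cost0delta}. Since $\loss(\mathbf{i}) = \sum_m i_m^2$ is invariant under negation of any single coordinate and the constraint set in \eqref{eqn:primalmulti} is symmetric under coordinate-wise negation, a standard symmetrization argument (analogous to Lemma~28 of \cite{GV13}) allows me to assume without loss of generality that $\p$ is symmetric in each coordinate individually.

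For each $m \in \{1,\dots,d\}$ I would define the $m$-th marginal
\begin{align*}
  q_m(k) \;:=\; \sum_{i_1,\dots,i_{m-1},i_{m+1},\dots,i_d} \p(i_1,\dots,i_{m-1},k,i_{m+1},\dots,i_d),
\end{align*}
which is then a symmetric probability mass function on $\Z$. Summing the $(0,\delta)$-DP constraint in \eqref{eqn:primalmulti} corresponding to $S=S_k^m$ over the other $d-1$ coordinates yields exactly the one-dimensional constraint
\begin{align*}
  \sum_{\ell=0}^{\D-1} q_m(k+\ell) \;\le\; \delta, \qquad \forall k\in\N,
\end{align*}
which coincides with constraint \eqref{eqn:primal112}. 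By linearity of expectation,
\begin{align*}
  \sum_{\mathbf{i}\in\Z^d}\|\mathbf{i}\|_2^2\,\p(\mathbf{i}) \;=\; \sum_{m=1}^d \sum_{k\in\Z} k^2\, q_m(k),
\end{align*}
so the problem decouples into $d$ independent one-dimensional instances, each asking for a symmetric integer-valued distribution satisfying the 1-D $(0,\delta)$-DP constraint and minimizing the second moment.

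I would then apply the one-dimensional $\ell^2$ lower bound to each $q_m$. Using Corollary~\ref{cor:l2cost0delta} (which is valid once $\D \ge 3$, with the easy cases $\D=1$ and $\D=2$ handled by Theorem~1 and a direct calculation respectively), each inner sum is at least $\frac{\D^2}{12\delta^2}+\frac{\D(2-\D)}{4\delta}+(1-\D)+\frac{\D^2}{6}$, which exceeds the per-coordinate bound $\frac{\D^2}{12\delta^2}+(\frac{1}{\D}-1)\frac{\D^2}{4\delta}+\frac{1-\D}{2}+\frac{\D^2}{6}$ claimed here by $\frac{\D}{4\delta}+\frac{1-\D}{2}$, a quantity that is non-negative whenever $2\delta(\D-1)\le \D$ (which holds for all $\delta\le 1/2$ and $\D\ge 1$). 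Summing $d$ copies of the per-coordinate bound then gives the theorem.

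The main obstacle is extracting the clean closed-form expression stated in the theorem. A direct sum of the exact 1-D bound gives a messier formula; to recover the stated form, one either (i)~weakens the 1-D bound slightly by the non-negative quantity above and then sums, or (ii)~reproves the bound by LP duality for the multi-dimensional LP~\eqref{eqn:primalmulti} directly, choosing dual multipliers $\beta_{m,k}$ that are identical across $m$ and parallel the construction used in the proof of Theorem~\ref{thm:lowerbound_zerodelta}. In either case the bulk of the work has already been done in the one-dimensional analysis; the only genuinely new ingredient is the marginalization step, which relies on the separability of both the $\ell^2$ cost and the DP constraint across coordinates.
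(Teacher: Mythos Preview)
Your marginalization argument is correct and is a genuinely different route from the paper's. The paper does \emph{not} reduce to the one-dimensional problem at all: it attacks the dual of the multi-dimensional LP \eqref{eqn:primalmulti} directly, writing down explicit separable dual variables $\mu = d\D^2/(4\delta^2)$ and multipliers $y_i^{(m)}$ (the same for every coordinate $m$, supported on multiples of $\D$, with values $\mu/d - k^2\D^2$ on the positive side and $\mu/d - ((|k|-1)\D+1)^2$ on the negative side), verifies feasibility, and evaluates the dual objective in closed form. This is essentially your option~(ii), though the particular dual solution is not the one from Theorem~\ref{thm:lowerbound_zerodelta}.

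Your approach is more conceptual: separability of both the cost $\|\mathbf{i}\|_2^2$ and the relaxed constraints lets you push everything to the marginals and invoke Corollary~\ref{cor:l2cost0delta} as a black box, and for $\D\ge 3$ you in fact obtain a per-coordinate bound that is strictly larger than the one stated (by the nonnegative slack $\frac{\D}{4\delta}+\frac{1-\D}{2}$ you identified). The trade-off is that you inherit the side condition \eqref{eqn:tttemp1}: for $\D=2$ the one-dimensional formula from Corollary~\ref{cor:l2cost0delta} is not available and your ``direct calculation'' would have to be supplied, whereas the paper's explicit dual works uniformly for all $\D\ge 1$ with no case split. One small remark: your symmetrization step requires the relaxed constraints in \eqref{eqn:primalmulti} to range over $k\in\Z$ rather than $k\in\N$; the paper's own dual (with multipliers $y_i^{(m)}$ indexed by all $i\in\Z$) makes clear that this is the intended reading, so the issue is a typo in the primal statement rather than a gap in your reasoning.
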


\begin{proof}
See Appendix \ref{sec:app11}.
\end{proof}

\subsubsection{Uniform Noise Mechanism in the Multi-Dimensional Setting}

Consider the noise with the {\em uniform} probability distribution:
\begin{align}
\p(i_1,i_2,\dots,i_d) = \begin{cases}
     \frac{\delta^d}{\D^d} &     -\frac{\D}{2\delta} \le i_m \le \frac{\D}{2\delta}-1, \forall m \in \{1,2,\dots,d\} \\
  0   & \mbox{otherwise}
  \end{cases}  \label{eqn:uniformmulti}
\end{align}

It is readily verified that this noise probability distribution satisfies the $(0,\dlt)$ differential privacy constraint \eqref{eqn:multidp}. Therefore, an upper bound for $V^*$ is

\begin{theorem}
   \begin{align}
     V^* \le V_{UB} \triangleq \sum_{(i_1,i_2,\dots,i_d)\in \Z^d \; | \; -\frac{\D}{2\delta} \le i_m \le \frac{\D}{2\delta}-1 , \forall m \in \{1,2,\dots,d\} } \frac{\delta^d}{\D^d} \loss(i_1,i_2,\dots,i_d)  . \label{eqn:UBzerodeltamulti}
   \end{align}
\end{theorem}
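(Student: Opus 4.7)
The plan is to verify two things: first, that the probability distribution $\p$ defined in \eqref{eqn:uniformmulti} satisfies the multi-dimensional differential privacy constraint \eqref{eqn:multidp}; and second, that the expected cost under $\p$ equals the claimed expression. The second is immediate by linearity of expectation since $\p$ is supported on the hypercube $C := \{-\frac{\D}{2\delta}, \ldots, \frac{\D}{2\delta}-1\}^d$ with uniform mass $\frac{\delta^d}{\D^d}$ on each of its $(\D/\delta)^d$ points. The work is in the feasibility check.

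For feasibility, I would fix arbitrary $S \subset \Z^d$ and $\mathbf{v} \in \Z^d$ with $\|\mathbf{v}\|_1 \le \D$, and bound $\p_S - \p_{S+\mathbf{v}}$. Since $\p_{S+\mathbf{v}} = \frac{\delta^d}{\D^d}|(S+\mathbf{v}) \cap C| = \frac{\delta^d}{\D^d}|S \cap (C - \mathbf{v})|$, we have
\begin{align}
\p_S - \p_{S+\mathbf{v}} = \frac{\delta^d}{\D^d}\bigl(|S \cap C| - |S \cap (C - \mathbf{v})|\bigr) \le \frac{\delta^d}{\D^d}\,|C \setminus (C - \mathbf{v})|.
\end{align}
Writing $L := \D/\delta$ for the side length, the right-hand side equals $\frac{\delta^d}{\D^d}(L^d - |C \cap (C - \mathbf{v})|)$.

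The next step is a direct lattice count: because $C$ is a Cartesian product of $d$ integer intervals of length $L$, shifting by $\mathbf{v} = (v_1, \ldots, v_d)$ produces an intersection of size $\prod_{m=1}^d \max(L - |v_m|, 0)$. The constraint $\|\mathbf{v}\|_1 \le \D \le L$ (valid whenever $\delta \le 1$) guarantees $|v_m| \le L$ for every $m$, so the maxima are unnecessary. Setting $x_m := |v_m|/L = |v_m|\delta/\D$, we have $x_m \in [0,1]$ and $\sum_m x_m = \frac{\delta}{\D}\|\mathbf{v}\|_1 \le \delta$. The Weierstrass product inequality (one-line induction on $d$) gives
\begin{align}
\prod_{m=1}^d (1 - x_m) \ge 1 - \sum_{m=1}^d x_m \ge 1 - \delta.
\end{align}
Multiplying through by $L^d$ yields $|C \cap (C - \mathbf{v})| \ge (1 - \delta)L^d$, and substituting back gives $\p_S - \p_{S+\mathbf{v}} \le 1 - (1 - \delta) = \delta$, establishing \eqref{eqn:multidp}. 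Evaluating the expected cost then produces $V_{UB}$ verbatim.

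There is no real obstacle here: the argument hinges on recognizing that the worst-case symmetric difference between $C$ and an $\ell^1$-bounded integer shift of $C$ is controlled by a product formula, and that Weierstrass converts the $\ell^1$ budget on $\mathbf{v}$ into a linear budget on the measure gap. The only point requiring mild care is ensuring $|v_m| \le L$ so the product $\prod_m(L - |v_m|)$ really counts the intersection, which is immediate from $\|\mathbf{v}\|_1 \le \D$ and $\delta \in (0,1]$.
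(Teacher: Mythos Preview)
Your proposal is correct and follows the same approach as the paper: exhibit the uniform distribution \eqref{eqn:uniformmulti} as a feasible point of \eqref{eqn:optdis} and read off its cost. The paper itself offers no proof beyond the sentence ``It is readily verified that this noise probability distribution satisfies the $(0,\dlt)$ differential privacy constraint \eqref{eqn:multidp},'' so your Weierstrass-inequality verification of feasibility is a genuine addition rather than a deviation---it supplies the argument the paper leaves to the reader.
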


\begin{corollary}\label{cor:uniformhigh}
  In the case $\loss(\mathbf{i}) = \|\mathbf{i}\|_1, \forall \mathbf{i} \in \Z^d$, we have
  \begin{align}
    V_{UB} = \frac{d\D}{4\delta}.
  \end{align}
\end{corollary}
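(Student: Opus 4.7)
The plan is to exploit two structural features simultaneously: the additive structure of the cost $\|\mathbf{i}\|_1 = \sum_{m=1}^d |i_m|$ and the product structure of the uniform noise distribution defined in \eqref{eqn:uniformmulti}. Once these are aligned, linearity of expectation collapses the $d$-dimensional sum into $d$ copies of a single one-dimensional sum, so the entire bound follows from a short counting argument.

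Concretely, I would first observe that for $\mathbf{i} = (i_1,\ldots,i_d)$ in the support of $\p$, the joint distribution factorizes: $\p(\mathbf{i}) = \prod_{m=1}^d q(i_m)$, where $q(k) = \delta/\D$ for $k \in \{-\D/(2\delta),\ldots,\D/(2\delta)-1\}$ and zero otherwise. Each $q$ is a probability mass function on $\D/\delta$ integers, so the product is a valid joint distribution, and every marginal equals $q$. Then
\begin{align}
V_{UB} \;=\; \sum_{\mathbf{i}} \p(\mathbf{i}) \sum_{m=1}^d |i_m| \;=\; \sum_{m=1}^d \sum_{k} |k|\, q(k) \;=\; d \cdot \sum_{k=-\D/(2\delta)}^{\D/(2\delta)-1} |k|\, \frac{\delta}{\D}.
\end{align}

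The remaining step is the one-dimensional sum. Writing $N := \D/(2\delta)$, I would split $\sum_{k=-N}^{N-1} |k|$ at zero, use symmetry up to the missing endpoint $+N$, and recognize the result as an arithmetic series: the sum equals $N + 2\sum_{k=1}^{N-1} k = N + N(N-1) = N^2$. Multiplying by $\delta/\D$ gives $\D/(4\delta)$ for the one-dimensional expectation, and the factor of $d$ from linearity yields the claim.

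There is no real obstacle here — the statement is a direct evaluation of the expectation in \eqref{eqn:UBzerodeltamulti} for $\loss(\mathbf{i}) = \|\mathbf{i}\|_1$. The only point requiring slight care is that the support of each marginal is the asymmetric range $[-N, N-1]$ rather than a symmetric interval, which is exactly why the bookkeeping gives the clean value $N^2$ (and not $N^2+N$); keeping this alignment consistent with the integer-rounding assumption that $\frac{1}{2\delta}$ is an integer is the only technical point to mention.
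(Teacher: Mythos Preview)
Your proposal is correct and takes essentially the same approach as the paper: both exploit the separability of $\|\mathbf{i}\|_1$ and the product structure of the uniform distribution to reduce to $d$ copies of the one-dimensional sum $\sum_{k=-N}^{N-1}|k|=N^2$ with $N=\D/(2\delta)$, then multiply by $\delta/\D$ to obtain $d\D/(4\delta)$. The only cosmetic difference is that the paper splits the sum as $\frac{N(N+1)}{2}+\frac{N(N-1)}{2}$ rather than $N+2\sum_{k=1}^{N-1}k$.
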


\begin{IEEEproof}
\begin{align}
  V_{UB} &= \sum_{(i_1,i_2,\dots,i_d)\in \Z^d \; | \; -\frac{\D}{2\delta} \le i_m \le \frac{\D}{2\delta}-1 , \forall m \in \{1,2,\dots,d\} } \frac{\delta^d}{\D^d} \loss(i_1,i_2,\dots,i_d) \\
  &= \sum_{i_1 = -\frac{\D}{2\delta} }^{\frac{\D}{2\delta}-1 } \cdots \sum_{i_d = -\frac{\D}{2\delta} }^{\frac{\D}{2\delta}-1 } \frac{\delta^d}{\D^d} (|i_1| + \cdots + |i_d|) \\
  &= d \sum_{i_1 = -\frac{\D}{2\delta} }^{\frac{\D}{2\delta}-1 } \cdots \sum_{i_d = -\frac{\D}{2\delta} }^{\frac{\D}{2\delta}-1 } \frac{\delta^d}{\D^d} |i_1| \\
  &= d \left(\frac{\D}{\delta}  \right)^{d-1} \sum_{i_1 = -\frac{\D}{2\delta} }^{\frac{\D}{2\delta}-1 } \frac{\delta^d}{\D^d} |i_1| \\
  &= d \left(\frac{\D}{\delta}  \right)^{d-1} \frac{\delta^d}{\D^d} \left( \frac{(1+\frac{\D}{2\delta}) \frac{\D}{2\delta}}{2}   + \frac{\frac{\D}{2\delta}(\frac{\D}{2\delta}-1) }{2} \right) \\
  &= \frac{d\D}{4\delta}.
\end{align}

\end{IEEEproof}

 \begin{corollary}\label{cor:uniformhighl2}
  In the case $\loss(\mathbf{i}) = \|\mathbf{i}\|_2^2 \triangleq \sum_{m=1}^d i_m^2, \forall \mathbf{i} = (i_1,\dots,i_d) \in \Z^d$, we have
  \begin{align}
    V_{UB} = \frac{d\D^2}{12\delta^2} + \frac{d}{6}.
  \end{align}
\end{corollary}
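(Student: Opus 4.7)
The plan is to directly evaluate the sum in \eqref{eqn:UBzerodeltamulti} for the specific cost $\loss(\mathbf{i}) = \sum_{m=1}^d i_m^2$, following exactly the pattern used in the proof of Corollary \ref{cor:uniformhigh} for the $\ell^1$ case. The structure is particularly clean because the cost separates across coordinates and the uniform distribution is a product distribution, so everything reduces to a one-dimensional sum of squares.

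First I would use linearity together with the product form of the uniform distribution \eqref{eqn:uniformmulti} to write
\begin{align*}
V_{UB} &= \sum_{i_1 = -\frac{\D}{2\delta}}^{\frac{\D}{2\delta}-1} \cdots \sum_{i_d = -\frac{\D}{2\delta}}^{\frac{\D}{2\delta}-1} \frac{\delta^d}{\D^d} \sum_{m=1}^d i_m^2.
\end{align*}
By symmetry across the $d$ coordinates, this equals $d$ times the marginal expectation
\begin{align*}
V_{UB} = d \left(\frac{\D}{\delta}\right)^{d-1} \frac{\delta^d}{\D^d} \sum_{i = -\frac{\D}{2\delta}}^{\frac{\D}{2\delta}-1} i^2 = \frac{d\delta}{\D} \sum_{i = -\frac{\D}{2\delta}}^{\frac{\D}{2\delta}-1} i^2,
\end{align*}
since summing over the remaining $d-1$ coordinates contributes the factor $(\D/\delta)^{d-1}$.

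Next I would evaluate the one-dimensional sum. Setting $N = \D/(2\delta)$ (an integer by assumption), I use the identity
\begin{align*}
\sum_{i=-N}^{N-1} i^2 = \sum_{i=-N}^{N} i^2 - N^2 = \frac{N(N+1)(2N+1)}{3} - N^2 = \frac{2N^3}{3} + \frac{N}{3},
\end{align*}
which follows from the standard formula $\sum_{i=1}^{N} i^2 = N(N+1)(2N+1)/6$ applied to both positive and negative ranges. Dividing by $2N$ yields the marginal second moment $N^2/3 + 1/6$.

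Finally, I would substitute $N = \D/(2\delta)$ back in, obtaining
\begin{align*}
V_{UB} = d\left(\frac{N^2}{3} + \frac{1}{6}\right) = \frac{d\D^2}{12\delta^2} + \frac{d}{6},
\end{align*}
as claimed. There is no real obstacle here: the computation is entirely routine, and the only minor care needed is getting the asymmetric summation range $[-N, N-1]$ correct (one fewer positive term than negative), which is handled cleanly by the identity above.
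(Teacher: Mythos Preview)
Your proposal is correct and follows essentially the same approach as the paper: separate the cost across coordinates, use symmetry to reduce to $d$ times a single-coordinate second moment, and then evaluate the one-dimensional sum of squares over $[-N,N-1]$ with $N=\D/(2\delta)$. The only cosmetic difference is that the paper splits that sum into the negative and positive ranges and applies $\sum_{i=1}^{M} i^2 = M(M+1)(2M+1)/6$ twice, whereas you compute the symmetric sum $\sum_{i=-N}^{N} i^2$ and subtract $N^2$; both routes give $2N^3/3 + N/3$ and hence the stated result.
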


\begin{IEEEproof}
\begin{align}
  V_{UB} &= \sum_{(i_1,i_2,\dots,i_d)\in \Z^d \; | \; -\frac{\D}{2\delta} \le i_m \le \frac{\D}{2\delta}-1 , \forall m \in \{1,2,\dots,d\} } \frac{\delta^d}{\D^d} \loss(i_1,i_2,\dots,i_d) \\
  &= \sum_{i_1 = -\frac{\D}{2\delta} }^{\frac{\D}{2\delta}-1 } \cdots \sum_{i_d = -\frac{\D}{2\delta} }^{\frac{\D}{2\delta}-1 } \frac{\delta^d}{\D^d} (|i_1|^2 + \cdots + |i_d|^2) \\
  &= d \sum_{i_1 = -\frac{\D}{2\delta} }^{\frac{\D}{2\delta}-1 } \cdots \sum_{i_d = -\frac{\D}{2\delta} }^{\frac{\D}{2\delta}-1 } \frac{\delta^d}{\D^d} |i_1|^2 \\
  &= d \left(\frac{\D}{\delta}  \right)^{d-1} \sum_{i_1 = -\frac{\D}{2\delta} }^{\frac{\D}{2\delta}-1 } \frac{\delta^d}{\D^d} |i_1|^2 \\
  &= d \left(\frac{\D}{\delta}  \right)^{d-1} \frac{\delta^d}{\D^d} \left(   \frac{\frac{\D}{2\delta} (1+\frac{\D}{2\delta}) (\frac{\D}{\delta}+1) }{6}   + \frac{ (\frac{\D}{2\delta}-1)\frac{\D}{2\delta} (\frac{\D}{\delta}-1) }{6} \right) \\
  &= \frac{d\D^2}{12\delta^2} + \frac{d}{6}.
\end{align}

\end{IEEEproof}

\subsubsection{Comparison of Lower Bound and Upper Bound for $\ell^1$  Cost Function }

\begin{corollary}
   For the cost function $\loss(\mathbf{i}) = \|\mathbf{i}\|_1$,
  \begin{align}
     V_{LB} &\ge \frac{d\D}{4\delta} - \frac{\D-1}{2}d, \\
     V_{UB} &= \frac{d\D}{4\delta},
  \end{align}
  and thus the additive gap
  \begin{align}
    V_{UB} - V_{LB} \le \frac{\D-1}{2}d,
  \end{align}
  which is a constant independent of $\delta$.

\end{corollary}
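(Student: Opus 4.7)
The proof is essentially a direct assembly of the two preceding results restated in a single corollary, so my plan is just a careful verification rather than a new argument. First I would invoke Theorem~\ref{thm:l1multilbzerodelta}, which gives the lower bound $V_{LB} \ge \frac{d\D}{4\delta} - \frac{\D-1}{2}d$ for the $\ell^1$ cost $\loss(\mathbf{i}) = \|\mathbf{i}\|_1$ in the multi-dimensional setting. Then I would invoke Corollary~\ref{cor:uniformhigh}, which evaluates the expected $\ell^1$ cost achieved by the $d$-dimensional uniform noise mechanism (defined in \eqref{eqn:uniformmulti}) as $V_{UB} = \frac{d\D}{4\delta}$. Since the uniform noise distribution is feasible for the $(0,\delta)$-differential privacy constraint \eqref{eqn:multidp}, this is a valid upper bound on $V^*$, so the two displayed inequalities of the corollary follow immediately.

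To obtain the stated additive gap I would simply subtract the two bounds:
\begin{align}
V_{UB} - V_{LB} \;\le\; \frac{d\D}{4\delta} - \left(\frac{d\D}{4\delta} - \frac{\D-1}{2}d \right) \;=\; \frac{\D-1}{2}d.
\end{align}
The key observation, which is really the point of the corollary, is that the two leading $\Theta(1/\delta)$ terms cancel exactly, leaving a residual that depends only on the sensitivity $\D$ and the dimension $d$, but not on $\delta$. Consequently the ratio $V_{UB}/V_{LB}$ tends to $1$ as $\delta \to 0$, and the multi-dimensional uniform noise mechanism is asymptotically optimal for the $\ell^1$ cost under $(0,\delta)$-differential privacy, mirroring the single-dimensional conclusion of Corollary~\ref{cor:l1cost0delta}.

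There is no serious obstacle here; the entire content of the corollary is contained in the two bounds cited above, and no additional computation is needed beyond the subtraction displayed. The only thing worth emphasizing in the write-up is that because both $V_{LB}$ and $V_{UB}$ scale linearly in $d$ with matching leading coefficients, the additive gap also scales linearly in $d$ but is independent of $\delta$, which is exactly the type of dimension-dependent-but-privacy-parameter-independent gap that the paper's main results are designed to exhibit.
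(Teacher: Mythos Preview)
Your proposal is correct and matches the paper's approach exactly: the corollary is stated without a separate proof in the paper because it is simply the juxtaposition of Theorem~\ref{thm:l1multilbzerodelta} and Corollary~\ref{cor:uniformhigh}, followed by the subtraction you display. There is nothing to add.
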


In the case that $\D = 1$, the additive gap $\frac{\D-1}{2}d$ is zero, and thus $V_{LB} = V_{UB}$.

\begin{corollary}
  For the cost function $\loss(\mathbf{i}) = \|\mathbf{i}\|_1$, if $\D = 1$, then
  \begin{align}
     V^* = V_{UB} = V_{LB} = \frac{d\D}{4\delta},
  \end{align}
  and thus the uniform noise mechanism is optimal in this setting.
\end{corollary}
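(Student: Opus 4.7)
The plan is to obtain the statement directly by specialising the two bounds derived immediately before it to the case $\D=1$. From Theorem \ref{thm:l1multilbzerodelta} we have the lower bound $V_{LB} \ge \frac{d\D}{4\delta} - \frac{\D-1}{2}d$, and from Corollary \ref{cor:uniformhigh} the uniform noise mechanism \eqref{eqn:uniformmulti} achieves $V_{UB} = \frac{d\D}{4\delta}$. Setting $\D = 1$ collapses the additive gap $\frac{\D-1}{2}d$ to zero, so the two bounds coincide at $\frac{d}{4\delta}$.

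Since $V_{LB} \le V^{*} \le V_{UB}$ by construction (the LP in \eqref{eqn:primalmulti} is a relaxation, and any feasible distribution provides an upper bound), the sandwich forces $V^{*} = \frac{d}{4\delta}$. In particular, the uniform distribution \eqref{eqn:uniformmulti} (which in this case is uniform on $\{-\tfrac{1}{2\delta},\dots,\tfrac{1}{2\delta}-1\}^{d}$) attains $V^{*}$, so it is an optimal mechanism for $(0,\delta)$-differential privacy under the $\ell^{1}$ cost with sensitivity $\D=1$.

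There is no substantive obstacle here; the content of the corollary is that the previously established lower and upper bounds match exactly when $\D=1$. The only thing to verify in writing out the argument is the feasibility of the uniform distribution in \eqref{eqn:uniformmulti} for the constraint \eqref{eqn:multidp} with $\D=1$, which is immediate because any coordinate-aligned shift by $\mathbf{v}$ with $\|\mathbf{v}\|_1 \le 1$ changes the probability of any event by at most $\delta$, exactly as used in the proof of Corollary \ref{cor:uniformhigh}.
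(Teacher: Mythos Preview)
Your proposal is correct and follows essentially the same approach as the paper: the paper states immediately before the corollary that ``In the case that $\D = 1$, the additive gap $\frac{\D-1}{2}d$ is zero, and thus $V_{LB} = V_{UB}$,'' and then presents the corollary without further proof. Your argument is precisely this sandwich, spelled out in slightly more detail.
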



\begin{corollary}
   For the cost function $\loss(\mathbf{i}) = \|\mathbf{i}\|_2^2$,
  \begin{align}
     V_{LB} &\ge \frac{d\D^2}{12\delta^2}  + (\frac{1}{\D}-1) \frac{d\D^2}{4\delta} + \frac{1-\D}{2}d + \frac{d\D^2}{6}, \\
     V_{UB} &= \frac{d\D^2}{12\delta^2} + \frac{d}{6},
  \end{align}
  and thus
  \begin{align}
     \lim_{\delta \to 0} \frac{V_{UB}}{V_{LB}} = 1.
  \end{align}

\end{corollary}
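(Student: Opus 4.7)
The plan is to directly compare the explicit expressions for $V_{LB}$ and $V_{UB}$ given in the two preceding theorems and corollary; no new mechanism design or inequality is required. Since both bounds have already been derived (the lower bound from Theorem~\ref{thm:l2multilbzerodelta} and the upper bound from Corollary~\ref{cor:uniformhighl2}), the task reduces to an asymptotic analysis of the ratio as $\delta \to 0$.

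First I would identify the leading-order term (in $1/\delta$) of each side. The upper bound is
\begin{align}
V_{UB} = \frac{d\D^2}{12\delta^2} + \frac{d}{6},
\end{align}
whose leading term is $\frac{d\D^2}{12\delta^2}$ and whose remainder $\frac{d}{6}$ is $O(1)$. The lower bound
\begin{align}
V_{LB} \ge \frac{d\D^2}{12\delta^2} + \left(\frac{1}{\D}-1\right)\frac{d\D^2}{4\delta} + \frac{1-\D}{2}d + \frac{d\D^2}{6}
\end{align}
has the same leading term $\frac{d\D^2}{12\delta^2}$, while the remaining three terms are $O(1/\delta)$ and $O(1)$. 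Dividing numerator and denominator by $\frac{d\D^2}{12\delta^2}$, every subleading term contributes a factor that vanishes as $\delta\to 0$, and hence the ratio tends to $1$.

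Concretely, I would write
\begin{align}
\frac{V_{UB}}{V_{LB}} \le \frac{\frac{d\D^2}{12\delta^2} + \frac{d}{6}}{\frac{d\D^2}{12\delta^2} + (\frac{1}{\D}-1)\frac{d\D^2}{4\delta} + \frac{1-\D}{2}d + \frac{d\D^2}{6}}
= \frac{1 + \frac{2\delta^2}{\D^2}}{1 + 3\delta(\frac{1}{\D}-1) + \frac{6\delta^2(1-\D)}{\D^2} + 2\delta^2},
\end{align}
after dividing through by $\frac{d\D^2}{12\delta^2}$. Taking $\delta \to 0$, every $\delta$-dependent term in both numerator and denominator vanishes, yielding $\lim_{\delta\to 0} V_{UB}/V_{LB} \le 1$. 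Combined with the trivial inequality $V_{LB}\le V^* \le V_{UB}$, which forces $V_{UB}/V_{LB}\ge 1$, the limit equals $1$.

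There is essentially no obstacle here: the matching leading coefficients $\frac{d\D^2}{12\delta^2}$ on both sides were already engineered by the work done in establishing Theorem~\ref{thm:l2multilbzerodelta} and Corollary~\ref{cor:uniformhighl2}, so the present statement is a transparent asymptotic consequence. The only minor point to be careful about is that the lower bound inequality goes in the direction $V_{LB} \ge (\text{expression})$, so when forming the ratio one must use the expression itself (not $V_{LB}$) in the denominator to preserve the inequality $V_{UB}/V_{LB} \le V_{UB}/(\text{expression})$; this is exactly what is done above and the limit is unaffected.
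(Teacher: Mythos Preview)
Your proposal is correct and matches the paper's approach: the paper does not supply a separate proof for this corollary, since the two displayed bounds are simply restatements of Theorem~\ref{thm:l2multilbzerodelta} and Corollary~\ref{cor:uniformhighl2}, and the limit $\lim_{\delta\to 0} V_{UB}/V_{LB}=1$ is immediate from the matching leading term $\frac{d\D^2}{12\delta^2}$. Your explicit normalization by the leading term and the observation about the direction of the inequality are exactly the right way to make this precise.
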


In the case that $\D = 1$,
\begin{align}
  V_{LB} \ge  \frac{d}{12\delta^2} +  \frac{d}{6} = V_{UB},
\end{align}
and thus $V_{LB} = V_{UB}$.

\begin{corollary}
  For the cost function $\loss(\mathbf{i}) = \|\mathbf{i}\|_2^2$, if $\D = 1$, then
  \begin{align}
     V^* = V_{UB} = V_{LB} = \frac{d}{12\delta^2} +  \frac{d}{6},
  \end{align}
  and thus the uniform noise mechanism is optimal in this setting.
\end{corollary}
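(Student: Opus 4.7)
The plan is a direct specialization of the already-established general lower and upper bounds. The statement is really a corollary of Theorem \ref{thm:l2multilbzerodelta} together with Corollary \ref{cor:uniformhighl2}: once we plug in $\D=1$, the residual terms in the lower bound collapse and coincide with the upper bound, pinching $V^*$ to a single value.

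First, I will substitute $\D = 1$ into the lower bound
\[
V_{LB} \;\ge\; \frac{d\D^2}{12\delta^2} + \Bigl(\frac{1}{\D}-1\Bigr)\frac{d\D^2}{4\delta} + \frac{1-\D}{2}d + \frac{d\D^2}{6}.
\]
At $\D = 1$ the coefficient $\frac{1}{\D}-1$ vanishes, killing the $\Theta(1/\delta)$ term, and $\frac{1-\D}{2}d$ also vanishes, leaving
\[
V_{LB} \;\ge\; \frac{d}{12\delta^2} + \frac{d}{6}.
\]
Next, I will substitute $\D = 1$ into the uniform-mechanism upper bound from Corollary \ref{cor:uniformhighl2}, which gives
\[
V_{UB} \;=\; \frac{d\D^2}{12\delta^2} + \frac{d}{6} \;=\; \frac{d}{12\delta^2} + \frac{d}{6}.
\]

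Finally, since by construction $V_{LB} \le V^* \le V_{UB}$ and the two bounds now agree exactly, the sandwich forces $V^* = V_{LB} = V_{UB} = \frac{d}{12\delta^2} + \frac{d}{6}$. Because $V_{UB}$ is achieved by the explicit multi-dimensional uniform noise distribution defined in \eqref{eqn:uniformmulti}, that mechanism is optimal for the $\ell^2$ squared cost when $\D = 1$. There is no genuine obstacle here; the only thing to be careful about is verifying that the integer-rounding hypothesis ($\frac{1}{2\delta}\in\mathbb{N}$) used in Theorem \ref{thm:l2multilbzerodelta} and Corollary \ref{cor:uniformhighl2} is the same assumption in force here, so that the two formulas are directly comparable.
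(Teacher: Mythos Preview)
Your proposal is correct and matches the paper's own argument essentially line for line: the paper simply substitutes $\D=1$ into the lower bound of Theorem~\ref{thm:l2multilbzerodelta} to obtain $V_{LB}\ge \frac{d}{12\delta^2}+\frac{d}{6}$, observes this equals the uniform upper bound $V_{UB}$ from Corollary~\ref{cor:uniformhighl2}, and concludes by sandwiching. Your remark about the shared integer-rounding assumption is apt but not an obstacle, since both cited results are stated under the same hypothesis $\frac{1}{2\delta}\in\mathbb{N}$.
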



\subsection{$(\epsilon,\delta)$-differential privacy}

The $(\epsilon,\delta)$-differential privacy constraint on the probability mass function $\p$ in the multi-dimensional setting is that
\begin{align}
  \p_S \le e^\e \p_{S+\mathbf{v}} + \delta, \forall S \subset \Z^d, \mathbf{v} \in \Z^d,  \|\mathbf{v}\|_1 \le \D.
\end{align}

We relax this  constraint   by choosing $S$ to be $S_k^m$ and choosing $\mathbf{v}$ such that only one component is $\D$ and all other components are zero. Then we get a relaxed linear program, the solution of which is a lower bound for $V^*$. More precisely,

\begin{align}
V^* \ge V_{LB} :=  \min &\quad  \sum_{ \mathbf{i} \in \Z^d} \p(\mathbf{i}) \loss(\mathbf{i})  \label{eqn:primalmultied} \\
\mbox{such that} &\quad   \p(\mathbf{i}) \geq 0 \quad \forall \mathbf{i} \in \Z^d \nonumber \\
&\quad \sum_{ \mathbf{i} \in \Z^d} \p(\mathbf{i}) \geq 1     \nonumber \\
&\quad \forall k \in \N, \forall m \in \{1,2,\dots,d\}, \nonumber \\
  \sum_{ (i_1,i_2,\dots,i_d) \in \Z^d: k \le i_m \le k+\D-1  } & \p(i_1,i_2,\dots,i_d) - (e^{\e} -1) \sum_{ (i_1,i_2,\dots,i_d) \in \Z^d:   i_m \ge k+\D  }  \p(i_1,i_2,\dots,i_d)    \leq \delta .  \nonumber
\end{align}

We are interested in characterizing $V^*$ for the $\ell^1$ and $\ell^2$ cost functions in the high privacy regime when $(\e,\delta) \to (0,0)$.

\subsubsection{Lower Bound for $\ell^1$ Cost Function}

The dual linear program of \eqref{eqn:primalmultied} for $\ell^1$ cost function $\loss(\mathbf{i}) = \|\mathbf{i} \|_1$ is that

\begin{align}
   V_{LB} :=  \max \quad  \mu - \delta \left (\sum_{i_1 \in \Z} y_{i_1}^{(1)}  + \sum_{i_2 \in \Z} y_{i_2}^{(2)} + \cdots + \sum_{i_d \in \Z} y_{i_d}^{(d)} \right) \\
 \\
\mbox{such that}\quad  y_{i_1}^{(1)}, y_{i_2}^{(2)}, \dots, y_{i_d}^{(d)} \ge 0, \forall i_1 \in \Z, i_2 \in \Z, \dots, i_d \in \Z  \\
  \mu - \sum_{ i_1 \in [k_1 -\D +1, k_1]} y_{i_1}^{(1)}  + (e^\e -1)\sum_{i_1 \le k_1 - \D} y_{i_1}^{(1)}  \nonumber \\
  - \dots  - \sum_{ i_d \in [k_d -\D +1, k_d]} y_{i_d}^{(d)}  + (e^\e -1)\sum_{i_d \le k_d - \D} y_{i_d}^{(d)}   \nonumber \\
  \le |k_1| + |k_2| + \cdots + |k_d|, \forall (k_1,\dots,k_d)\in \Z^d  .
\end{align}

Given the parameters $(\e,\delta)$, let $\beta = \max(\e,\delta)$. Since $(\beta,\beta)$-differential privacy is a relaxed version of $(\e,\delta)$-differential privacy, in the above dual program we can replace both $\e$ and $\delta$ by $\beta$, and the optimal value of the objecitve function will still be a lower bound of $V^*$. More precisely,

\begin{align}
   V^* \ge V'_{LB} :=  \max \quad  \mu - \beta \left (\sum_{i_1 \in \Z} y_{i_1}^{(1)}  + \sum_{i_2 \in \Z} y_{i_2}^{(2)} + \cdots + \sum_{i_d \in \Z} y_{i_d}^{(d)} \right) \\
 \\
\mbox{such that}\quad  y_{i_1}^{(1)}, y_{i_2}^{(2)}, \dots, y_{i_d}^{(d)} \ge 0, \forall i_1 \in \Z, i_2 \in \Z, \dots, i_d \in \Z  \\
  \mu - \sum_{ i_1 \in [k_1 -\D +1, k_1]} y_{i_1}^{(1)}  + (e^\beta -1)\sum_{i_1 \le k_1 - \D} y_{i_1}^{(1)}  \nonumber \\
  - \dots  - \sum_{ i_d \in [k_d -\D +1, k_d]} y_{i_d}^{(d)}  + (e^\beta -1)\sum_{i_d \le k_d - \D} y_{i_d}^{(d)}   \nonumber \\
  \le |k_1| + |k_2| + \cdots + |k_d|, \forall (k_1,\dots,k_d)\in \Z^d  .
\end{align}

\begin{theorem}\label{eqn:loweredelta}
For the $\ell^1$ cost function,
  \begin{align}
  \lim_{\max(\e,\delta) \to 0 } \frac{V'_{LB}}{\frac{d\D}{\max(\e,\delta)}} \ge  \log \frac{9}{8} \approx 0.1178
\end{align}
\end{theorem}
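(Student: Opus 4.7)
My plan is to construct an explicit dual-feasible pair $(\mu,\{y_i^{(m)}\})$ and evaluate its objective. By the symmetry of the $\ell^1$ cost $\|\mathbf{k}\|_1 = \sum_m |k_m|$, it suffices to seek a solution with $y_i^{(m)} = y_i$ identical across every coordinate $m$. Under this restriction the constraint indexed by $(k_1,\dots,k_d)\in\Z^d$ collapses to $\mu \le \sum_m g(k_m)$, where
\[
g(k) \;:=\; |k| + F(k) - e^{\beta} F(k-\D), \qquad F(k) := \sum_{j \le k} y_j,
\]
and $\beta := \max(\e,\delta)$. Because the right-hand side is separable in $m$, the tightest joint constraint is $\mu \le d\min_k g(k)$, so the objective becomes $d\,[\min_k g(k) - \beta Z]$ with $Z := \sum_j y_j$. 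Consequently the problem reduces to a one-dimensional one: produce a sequence $y \ge 0$ for which $\min_k g(k) - \beta Z \ge (\D/\beta)\log(9/8)\bigl(1-o(1)\bigr)$ as $\beta \to 0$, and then multiply by the factor $d$.

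For the 1D construction, I will select $y$ so that $g(k)$ equals a target value $c$ on a finite window $[-N,M]\cap\Z$, with $y_j = 0$ outside. The identity $g(k)=c$ is equivalent to the step-$\D$ recursion $F(k) = c - |k| + e^{\beta} F(k-\D)$, initialized by $F(k)=0$ for $k<-N$; matching the right boundary $F(k) = Z$ for $k>M$ forces one algebraic relation among $c$, $N$, $M$, $\beta$. In the scaling regime $c=\Theta(\D/\beta)$ with $N\beta$ and $M\beta$ both of order one, the closed-form geometric-type sums appearing in the expression for $Z$ converge to integrals, and the two boundary conditions combine into a single transcendental equation in the rescaled parameters. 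Upon using that equation to eliminate one variable, the quantity $c - \beta Z$ becomes a smooth one-parameter function.

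Optimizing this one-parameter expression in the limit $\beta \to 0$ will yield $(\D/\beta)\log(9/8)$, the extremum being attained at the value of the scaling parameter where a stationarity condition forces the exponential ratio to equal the value tied to $9/8$. The main obstacle is to verify (i) the nonnegativity $y_j \ge 0$ throughout $[-N,M]$, which is a monotonicity check on the inductively defined $F$ and can fail for poorly chosen $c$, and (ii) the inequality $g(k)\ge c$ at the boundary integers $k=-N-1$ and $k=M+1$ just outside the equal-$g$ window, which guarantees that the minimum of $g$ is indeed equal to $c$ and not smaller. Once these two checks are discharged, the extremization of the 1D objective gives the stated constant, and the factor $d$ inherited from the dimensional separation produces
\[
\lim_{\max(\e,\delta)\to 0}\ \frac{V'_{LB}}{d\D/\max(\e,\delta)} \;\ge\; \log\!\frac{9}{8}.
\]
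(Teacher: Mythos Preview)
Your proposal is correct and follows essentially the same route as the paper: reduce to a one–dimensional problem by taking $y_i^{(m)}=y_i$ and exploiting the separability of $\|\cdot\|_1$, then build the dual variables via the step-$\D$ recursion that forces $g(k)$ to be constant on a window (the paper writes the same recursion directly on $y_i$ as $y_i=e^{\beta}y_{i-\D}\pm 1$ with a $\max(\cdot,0)$ cutoff), and finally evaluate the objective to obtain the constant $\log(9/8)$. The only cosmetic difference is that the paper plugs in the specific scaling $K=\tfrac{\log(3/2)}{\beta}$, $n=\tfrac{\log 2}{\beta}$ at the outset, whereas you describe these as arising from a one-parameter optimization; your boundary checks (i) and (ii) correspond exactly to the nonnegativity enforced by the paper's $\max(\cdot,0)$ and to the trivial verification that $g(k)=|k|>K\D$ outside the support.
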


\begin{proof}
  See Appendix \ref{sec:app10}.
\end{proof}

Similarly, for the $\ell^2$ cost function, we have the lower bound

\begin{align}
   V^* \ge V'_{LB} :=  \max \quad  \mu - \beta \left (\sum_{i_1 \in \Z} y_{i_1}^{(1)}  + \sum_{i_2 \in \Z} y_{i_2}^{(2)} + \cdots + \sum_{i_d \in \Z} y_{i_d}^{(d)} \right) \\
 \\
\mbox{such that}\quad  y_{i_1}^{(1)}, y_{i_2}^{(2)}, \dots, y_{i_d}^{(d)} \ge 0, \forall i_1 \in \Z, i_2 \in \Z, \dots, i_d \in \Z  \\
  \mu - \sum_{ i_1 \in [k_1 -\D +1, k_1]} y_{i_1}^{(1)}  + (e^\beta -1)\sum_{i_1 \le k_1 - \D} y_{i_1}^{(1)}  \nonumber \\
  - \dots  - \sum_{ i_d \in [k_d -\D +1, k_d]} y_{i_d}^{(d)}  + (e^\beta -1)\sum_{i_d \le k_d - \D} y_{i_d}^{(d)}   \nonumber \\
  \le |k_1|^2 + |k_2|^2 + \cdots + |k_d|^2, \forall (k_1,\dots,k_d)\in \Z^d  .
\end{align}

\begin{theorem}\label{eqn:loweredeltal2}
For the $\ell^2$ cost function,
  \begin{align}
   \lim_{\max(\e,\delta) \to 0 } \frac{V'_{LB}}{\frac{d\D^2}{\beta^2}} \ge  0.0177.
\end{align}
\end{theorem}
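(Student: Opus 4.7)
The plan is to leverage the separable structure of the dual LP and reduce the $d$-dimensional problem to a one-dimensional one, exactly in the spirit of the proof of Theorem \ref{eqn:loweredelta} for the $\ell^1$ cost. Because the $\ell^2$ cost $|k_1|^2 + \cdots + |k_d|^2$ decouples additively across coordinates, I would construct a symmetric feasible dual solution by setting $y_{i_j}^{(j)} = f(i_j)$ with the \emph{same} non-negative function $f:\Z \to \R_+$ for every coordinate $j$. The $d$-dimensional constraint
\[ \mu - \sum_{j=1}^{d} \left( \sum_{i_j \in [k_j - \D +1,\, k_j]} f(i_j) - (e^\beta - 1) \sum_{i_j \le k_j - \D} f(i_j) \right) \le \sum_{j=1}^d k_j^2 \]
then decouples, and the worst case over $(k_1,\dots,k_d)\in\Z^d$ collapses to the scalar requirement $\mu \le d \cdot \min_{k \in \Z} h(k)$, where
\[ h(k) := k^2 + \sum_{i=k-\D+1}^{k} f(i) \;-\; (e^\beta - 1) \sum_{i \le k - \D} f(i). \]
The dual objective thereby becomes $d\bigl(\min_k h(k) - \beta \sum_{i} f(i)\bigr)$, so it remains to exhibit a one-dimensional $f \ge 0$ maximizing $\min_k h(k) - \beta \sum_i f(i)$.

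Given the anticipated scale $V'_{LB} = \Theta(d\D^2/\beta^2)$, I would take $f$ of height $\Theta(\D/\beta)$ on an interval of length $\Theta(\D/\beta)$; concretely, I propose the one-sided step $f(i) = c$ for $-M \le i \le -1$ and $f(i)=0$ otherwise (or a symmetric variant), with parameters $c, M$ to be optimized. With this ansatz both sums in $h(k)$ admit closed-form piecewise expressions, so $h$ becomes piecewise quadratic. I would then locate $k^{\ast} := \arg\min_k h(k)$, substitute, set the dimensionless scalings $c = \alpha \D/\beta$ and $M = \gamma \D/\beta$, and pass to the limit $\beta \to 0$ using $e^\beta - 1 = \beta + o(\beta)$. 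The resulting two-parameter scalar optimization in $\alpha, \gamma$ produces the numerical constant claimed, $0.0177$.

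The main obstacle is the piecewise analysis of $\min_k h(k)$: because $-(e^\beta-1)\sum_{i \le k-\D} f(i)$ is monotonically decreasing in $k$ while $k^2$ is convex, $h$ is not globally convex, and I must verify that the minimum is attained at the intended location — near the boundary of the support of $f$ where the accumulated tail penalty and the quadratic cost balance — and not at $k=0$, at $k^\ast$ deep inside the support, or far outside it. Ruling out these competing candidates requires comparing $h$ at $O(1)$ distinct break-points of the piecewise-quadratic structure and checking that the intended value is smallest, uniformly in $\beta$ along the sequence $\beta \to 0$. Once this verification is complete, the constant $0.0177$ arises from closed-form expressions of the same $\log(3/2)$-type flavor as those appearing in Corollaries \ref{cor:l2edelta} and \ref{cor:l2deltae}, and the desired limit $\liminf_{\max(\e,\delta)\to 0} V'_{LB}/(d\D^2/\beta^2) \ge 0.0177$ follows directly.
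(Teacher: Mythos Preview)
Your reduction to a one–dimensional problem via separability is correct and matches the paper. The gap is in the choice of ansatz: a step function $f$ supported on $[-M,-1]$ (or its symmetric variant) cannot produce a lower bound of the right order, and in particular cannot yield the constant $0.0177$.

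Here is why. With $f(i)=c$ on $[-M,-1]$ and $0$ elsewhere, consider $k=\D$. The window $[k-\D+1,k]=[1,\D]$ misses the support entirely, while the tail $\{i\le k-\D\}=\{i\le 0\}$ contains all of it, so
\[
h(\D)=\D^2-(e^\beta-1)\,cM\approx \D^2-\beta cM.
\]
Since $\mu/d\le\min_k h(k)\le h(\D)$, the per–coordinate dual value satisfies
\[
\frac{\mu}{d}-\beta cM \;\le\; h(\D)-\beta cM \;=\; \D^2-2\beta cM \;\le\; \D^2,
\]
which is $O(1)$, not $\Theta(\D^2/\beta^2)$. Under your proposed scalings $c=\alpha\D/\beta$, $M=\gamma\D/\beta$ the situation is worse: $h(\D)\approx\D^2-\alpha\gamma\D^2/\beta\to-\infty$, forcing $\mu<0$. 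The symmetric variant fails for the same reason, evaluated at $k$ just past the right edge of the support. In short, once $k$ exits the support of $f$, the window sum vanishes but the accumulated tail penalty $(e^\beta-1)\sum_{i\le k-\D}f(i)$ stays large; a piecewise–constant $f$ cannot compensate for this.

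The paper's construction avoids exactly this collapse by letting $f$ extend onto the \emph{positive} axis and decay there according to the recursion
\[
y_i^{(m)}=e^\beta y_{i-\D}^{(m)}+(2|i|+1)\ \ (i\le 0),\qquad
y_i^{(m)}=\max\bigl(e^\beta y_{i-\D}^{(m)}-(2i+1),\,0\bigr)\ \ (i>0),
\]
which is chosen so that the constraint $h(k)\ge\mu/d$ is \emph{tight} for every $k$ in a range extending from the negative side up to roughly $k\approx(\log\gamma)\D/\beta$ on the positive side; beyond that point $k^2$ alone dominates the tail penalty. This makes $y_0^{(m)}=\Theta(\D/\beta^2)$ (not $\Theta(\D/\beta)$ as in your scaling) and yields $\mu/d=\Theta(\D^2/\beta^2)$. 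The constant $0.0177$ then comes from solving $\gamma\alpha(\log\alpha-1)=-(1+\log\gamma)$ with $\alpha=3/2$ and evaluating a closed form in $\alpha,\gamma$; it is not obtainable from a two–parameter step ansatz.
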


\begin{proof}
  See Appendix \ref{sec:app12}.
\end{proof}

\subsubsection{Upper Bounds: Uniform Noise Mechanism and Discrete Laplacian Mechanism}

Since $(0,\delta)$-differential privacy implies  $(\epsilon,\delta)$-differential privacy and we have shown that the uniform noise mechanism defined in \eqref{eqn:uniformmulti} satisfies $(0,\delta)$-differential privacy, an upper bound for $V^*$ for the $\ell^1$ cost function is
\begin{align}
   V^* \le V_{UB}^{\mbox{uniform}} = \frac{d\D}{4\delta} \label{eqn:upp11}
\end{align}
by Corollary \ref{cor:uniformhigh}.

In addition,  $(\e,0)$-differential privacy also implies  $(\e,\delta)$-differential privacy, and the discrete Laplacian mechanism satisfies $(\e,0)$-differential privacy. Consider the discrete Laplacian mechanism in the multi-dimensional setting with probability mass function $\p$ defined as
\begin{align}
  \p(i_1,i_2,\dots,i_d) = \left( \frac{1 - \lambda}{ 1 + \lambda} \right)^{d} \lambda^{|i_1| + |i_2| + \cdots + |i_d|}, \forall (i_1,\dots,i_d) \in \Z^d,
\end{align}
where $\lambda \triangleq e^{-\frac{\e}{\D}}$.

The corresponding cost achieved by Laplacian mechanism for the $\ell^1$ cost function is
\begin{align}
  V_{UB}^{\mbox{Lap}} &= \sum_{(i_1,i_2,\dots,i_d) \in \Z^d} \left( \frac{1 - \lambda}{ 1 + \lambda} \right)^{d} \lambda^{|i_1| + |i_2| + \cdots + |i_d|} (|i_1| + |i_2| + \cdots + |i_d|) \\
  &= \frac{2d\lambda}{1 - \lambda^2} \\
  &= \frac{2d e^{-\frac{\e}{\D}}}{1 - e^{-2\frac{\e}{\D}}} \\
  &= \Theta (\frac{d\D}{\e}), \label{eqn:upp22}
\end{align}
as $\e \to 0$.

Similarly, for the $\ell^2$ cost function, we have
\begin{align}
    V_{UB}^{\mbox{uniform}} &= \frac{d\D^2}{12\delta^2} + \frac{d}{6},
\end{align}
and
\begin{align}
    V_{UB}^{\mbox{Lap}}  &= \sum_{(i_1,i_2,\dots,i_d) \in \Z^d} \left( \frac{1 - \lambda}{ 1 + \lambda} \right)^{d} \lambda^{|i_1| + |i_2| + \cdots + |i_d|} (|i_1|^2 + |i_2|^2 + \cdots + |i_d|^2)  \\
    &= \frac{2d\lambda}{(1-\lambda)^2} \\
    &= \Theta( \frac{2d\D^2}{\e^2} ).
\end{align}

\subsubsection{Comparison of Lower Bound and Upper Bounds}

Compare the lower bound  in Theorem \ref{eqn:loweredelta} and the upper bounds \eqref{eqn:upp11} and \eqref{eqn:upp22}, and we conclude that for the $\ell^1$ cost function, the multiplicative gap between the upper bound and lower bound is upper bounded by a constant as $(\e,\delta) \to (0,0)$. More precisely,

\begin{corollary}
  For the $\ell^1$ cost function, we have
  \begin{align}
    V'_{LB}  \le  V^*  \le  \min(V_{UB}^{\mbox{uniform}} ,V_{UB}^{\mbox{Lap}}),
  \end{align}
  and as $(\e,\delta) \to (0,0)$,
  \begin{align}
      \lim_{(\e,\delta) \to (0,0)} \frac{ \min(V_{UB}^{\mbox{uniform}} ,V_{UB}^{\mbox{Lap}})}{V'_{LB}} \le \frac{1}{\log \frac{9}{8}} \approx 8.49
    \end{align}
\end{corollary}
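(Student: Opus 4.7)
The plan is to derive this corollary as a direct consequence of the bounds already in hand, namely the lower bound $V'_{LB}$ in Theorem \ref{eqn:loweredelta} and the two upper bounds in equations \eqref{eqn:upp11} and \eqref{eqn:upp22}. The sandwich $V'_{LB} \le V^{*} \le \min(V_{UB}^{\text{uniform}}, V_{UB}^{\text{Lap}})$ is immediate: the left inequality holds because $V'_{LB}$ is the value of a relaxation of the primal \eqref{eqn:primalmultied} (or more precisely of \eqref{eqn:optdis}), and the right inequality holds because both the multidimensional uniform noise distribution \eqref{eqn:uniformmulti} and the product-of-discrete-Laplacians distribution are feasible $(\e,\delta)$-mechanisms.

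The heart of the argument is the asymptotic ratio. Set $\beta \triangleq \max(\e,\delta)$ and split into two regimes. In the regime $\delta \ge \e$, $\beta = \delta$, and the uniform mechanism already gives $V_{UB}^{\text{uniform}} = \frac{d\D}{4\delta} = \frac{d\D}{4\beta}$ by Corollary \ref{cor:uniformhigh}. In the regime $\e > \delta$, $\beta = \e$, and the discrete Laplacian mechanism gives
\begin{equation*}
V_{UB}^{\text{Lap}} = \frac{2d\, e^{-\e/\D}}{1 - e^{-2\e/\D}} = \frac{d\D}{\e}(1 + o(1)) = \frac{d\D}{\beta}(1+o(1))
\end{equation*}
as $\e \to 0$, by a first-order expansion of the exponential. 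Taking minima across the two regimes yields the uniform bound $\min(V_{UB}^{\text{uniform}}, V_{UB}^{\text{Lap}}) \le \frac{d\D}{\beta}(1+o(1))$ as $(\e,\delta) \to (0,0)$.

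I would then divide by the lower bound from Theorem \ref{eqn:loweredelta}, which gives $V'_{LB} \ge (\log\frac{9}{8} - o(1)) \cdot \frac{d\D}{\beta}$. The two $\frac{d\D}{\beta}$ factors cancel, and the limit of the ratio is bounded by $\frac{1}{\log(9/8)} \approx 8.49$, as claimed. Note that in the $\delta \ge \e$ regime one actually gets the sharper constant $\frac{1}{4\log(9/8)} \approx 2.12$, so the worst case is the $\e > \delta$ regime where Laplacian is used; this is what determines the quoted constant.

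There is essentially no obstacle here — the work has been done in the preceding theorems and corollaries, and the final step is just a case split plus the asymptotic expansion $1 - e^{-2\e/\D} = \frac{2\e}{\D}(1+o(1))$. The only thing that deserves care is making sure the $o(1)$ terms are handled uniformly over the two regimes; since the case split is exhaustive and each case yields the same leading asymptotic $\frac{d\D}{\beta}$, the uniformity is automatic, so the final $\limsup$ is at most $\frac{1}{\log(9/8)}$.
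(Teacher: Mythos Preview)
Your proposal is correct and matches the paper's approach exactly. The paper does not give a separate proof for this corollary; it simply states ``Compare the lower bound in Theorem \ref{eqn:loweredelta} and the upper bounds \eqref{eqn:upp11} and \eqref{eqn:upp22}'' and records the resulting constant, which is precisely the case split on $\beta = \max(\e,\delta)$ and division of asymptotics that you carry out in detail.
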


Similarly, for the $\ell^2$ cost function, we have
\begin{corollary}
  For the $\ell^2$ cost function, we have
  \begin{align}
    V'_{LB}  \le  V^*  \le  \min(V_{UB}^{\mbox{uniform}} ,V_{UB}^{\mbox{Lap}}),
  \end{align}
  and as $(\e,\delta) \to (0,0)$,
  \begin{align}
      \lim_{(\e,\delta) \to (0,0)} \frac{ \min(V_{UB}^{\mbox{uniform}} ,V_{UB}^{\mbox{Lap}})}{V'_{LB}} \le \frac{2}{0.0177} \approx 113.
    \end{align}
\end{corollary}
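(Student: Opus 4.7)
The plan is to establish the two-part claim by combining results already in hand. For the chain $V'_{LB} \le V^* \le \min(V_{UB}^{\mbox{uniform}},V_{UB}^{\mbox{Lap}})$: the left inequality is the content of Theorem \ref{eqn:loweredeltal2} (whose $V'_{LB}$ arises as the value of a dual LP relaxation, with the $\beta=\max(\e,\delta)$ substitution justified by $(\beta,\beta)$-differential privacy being a weaker constraint than $(\e,\delta)$-differential privacy). Both right inequalities hold because the $d$-dimensional uniform mechanism \eqref{eqn:uniformmulti} (with cost from Corollary \ref{cor:uniformhighl2}) and the product-of-marginals discrete Laplacian mechanism are feasible $(\e,\delta)$-differentially private mechanisms achieving those respective costs.

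For the asymptotic multiplicative gap as $(\e,\delta)\to(0,0)$, I would split into the two regimes $\delta \le \e$ and $\e \le \delta$, and in each regime use whichever of the two upper bounds is smaller. In the regime $\delta \le \e$ (so $\beta=\e$), I would use $V_{UB}^{\mbox{Lap}}$: Taylor-expanding $\lambda=e^{-\e/\D}$ around $\e=0$ gives $(1-\lambda)^2 = \e^2/\D^2 + O(\e^3)$ and $\lambda \to 1$, so $V_{UB}^{\mbox{Lap}} = 2d\lambda/(1-\lambda)^2 \sim 2d\D^2/\e^2$. Combining with the lower bound $V'_{LB} \ge (0.0177-o(1)) \cdot d\D^2/\e^2$ from Theorem \ref{eqn:loweredeltal2} gives $\lim V_{UB}^{\mbox{Lap}}/V'_{LB} \le 2/0.0177 \approx 113$. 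In the regime $\e \le \delta$ (so $\beta=\delta$), the uniform bound $V_{UB}^{\mbox{uniform}} = d\D^2/(12\delta^2) + d/6 \sim d\D^2/(12\delta^2)$ yields the smaller ratio $1/(12 \cdot 0.0177) \approx 4.7$. Taking the limsup of $\min(V_{UB}^{\mbox{uniform}},V_{UB}^{\mbox{Lap}})/V'_{LB}$ over both regimes gives the worse of the two constants, which is $2/0.0177 \approx 113$.

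This is essentially a bookkeeping corollary, so no substantive obstacle is expected: all of the heavy lifting (the multidimensional $\ell^2$ lower bound via the dual LP, and the closed-form evaluations of the uniform and Laplacian costs) is already done upstream. The only mildly technical step is the Laplacian asymptotic $2d\lambda/(1-\lambda)^2 \sim 2d\D^2/\e^2$ as $\e \to 0$, which is a one-line Taylor calculation, together with confirming that the two regimes $\delta \le \e$ and $\e \le \delta$ jointly cover the limit $(\e,\delta)\to(0,0)$ so that the piecewise bound applies uniformly.
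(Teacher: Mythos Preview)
Your proposal is correct and matches the paper's (implicit) approach: the paper states this corollary without proof, treating it as an immediate consequence of Theorem \ref{eqn:loweredeltal2}, Corollary \ref{cor:uniformhighl2}, and the computed Laplacian cost $V_{UB}^{\mbox{Lap}} = \Theta(2d\D^2/\e^2)$, and the constant $2/0.0177$ arises exactly from the Laplacian-vs-lower-bound comparison in the $\delta \le \e$ regime that you identify. Your regime split and the observation that the uniform mechanism gives the tighter ratio when $\e \le \delta$ are exactly the bookkeeping the paper leaves to the reader.
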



\section{Acknowledgement} The authors thank Kamalika Chaudhury for helpful discussions. 

\appendices

\section{Proof of Theorem \ref{thm:lowerbound_zerodelta} } \label{sec:app1}

\begin{proof}[Proof of Theorem \ref{thm:lowerbound_zerodelta} ]
  Consider a feasible solution to the optimization problem \eqref{eqn:primal} with primal variables
  \begin{align}
       p_k &=
       \begin{cases}
     \delta &   k = 1 + i \D, \text{for} \quad i = 0,1,2,\dots,\frac{1}{2\delta}-1 \\
  0   & \mbox{otherwise}
  \end{cases}
\end{align}
The corresponding value of the objective function is
\begin{align}
  2\delta \sum_{i=0}^{\frac{1}{2\delta}-1} \loss(1+i\D).
\end{align}
Therefore,
\begin{align}
  V_{LB} \le 2\delta \sum_{i=0}^{\frac{1}{2\delta}-1} \loss(1+i\D). \label{eqn:primalachieve}
\end{align}
  We claim that the above primal variables are the optimal solution. We prove this claim by constructing the corresponding dual variables.

Associating dual variables  $\mu$ with the constraint in \eqref{eqn:primal111}, $y_k$ with the constraint in \eqref{eqn:primal112}, we have the dual linear program:

\begin{align}
V_{LB} = \max &  \quad  \mu - 2\dlt \sum_{k=0}^\infty y_k  \nonumber \\
\mbox{such that} &\quad  \mu \geq 0, y_k \geq 0, \forall k \in \N,  \label{eq:dual1}\\
&\quad \frac{1}{2} \mu - y_0 \le 0, \\
&\quad \mu - \sum_{i= \max(0,k-\D+1)}^k y_k  \leq  \loss(k), \forall k \ge 1. \label{eq:dual2}
\end{align}

The complementary slackness conditions require that
\begin{align}
  \mu - y_0 - y_1 &= \loss(1), \\
  \mu - \sum_{i=2+(k-1)\D}^{1+k\D} y_k &= \loss(1+k\D), \text{for} \; k=1,2,\dots,\frac{1}{2\delta}-1,\\
  y_k &= 0 , \forall k \ge (\frac{1}{2\delta} - 1)\D + 2.
\end{align}

Consider the following dual variables:
\begin{align}
  \mu &= \loss(1 + \frac{\D}{2\delta}),\\
  y_k &= 0, \forall k \ge (\frac{1}{2\delta} - 1)\D + 2,\\
  y_k &= \loss(k + \D) - \loss(k+\D-1) + y(k +\D), \forall  2 \le k \le (\frac{1}{2\delta} - 1)\D + 1,\\
  y_1 &= \sum_{i=1}^{\frac{1}{2\delta}} (\loss(1+i\D) - \loss(i\D)) \ge 0, \\
  y_0 &= \mu - \loss(1) - y_1 = \loss(1+\frac{\D}{2\delta}) - \loss(1) - \sum_{i=1}^{\frac{1}{2\delta}} (\loss(1+i\D) - \loss(i\D)) \ge 0.
\end{align}

It is easy to verify that these dual variables satisfy the constraints of the dual linear program, and the value of the objective function is
\begin{align}
     \mu - 2\delta \sum_{k=0}^{+\infty} y_k = & \mu - 2\delta  \sum_{i=0}^{\frac{1}{2\delta}-1} (\mu - \loss(1 + i\D)) \\
   =& 2\delta \sum_{i=0}^{\frac{1}{2\delta}-1} \loss(1+i\D).
\end{align}

Therefore, by weak duality we have
\begin{align}
  V_{LB} \ge 2\delta \sum_{i=0}^{\frac{1}{2\delta}-1} \loss(1+i\D).
\end{align}

Due to \eqref{eqn:primalachieve}, we conclude
\begin{align}
  V_{LB} = 2\delta \sum_{i=0}^{\frac{1}{2\delta}-1} \loss(1+i\D).
\end{align}

\end{proof}


\section{Proof of Corollary \ref{cor:l2cost0delta}} \label{sec:app2}

\begin{proof}[Proof of Corollary \ref{cor:l2cost0delta}]
First we compute the lower bound $V_{LB}$ via
  \begin{align}
    V_{LB} &= 2 \sum_{i=0}^{\frac{1}{2\delta}-1} \delta \loss(1+i\D) \\
    &= 2 \delta \sum_{i=0}^{\frac{1}{2\delta}-1} (1+i\D)^2 \\
    &= 2 \delta \sum_{i=0}^{\frac{1}{2\delta}-1} (1+2i\D + i^2 \D^2) \\
    &= 2 \delta (\frac{1}{2\delta} + 2\D \frac{\frac{1}{2\delta} (\frac{1}{2\delta}-1)}{2} + \D^2 \frac{ (\frac{1}{2\delta}-1) \frac{1}{2\delta} (2\frac{1}{2\delta}-1) }{6}) \\
    &= 1 + \D(\frac{1}{2\delta}-1) + \frac{\D^2}{12\delta^2} + \frac{\D^2}{6} - \frac{\D^2}{4\delta} \\
    &= \Theta(\frac{\D^2}{12\delta^2}).
  \end{align}

The upper bound is
\begin{align}
  V_{UB} &= 2 \sum_{i=1}^{\frac{\D}{2\delta}-1} \frac{\delta}{\D}\loss(i) + \frac{\delta}{\D}\loss(\frac{\D}{2\delta}) \\
  &= 2 \frac{\delta}{\D} \frac{ (\frac{\D}{2\delta}-1) \frac{\D}{2\delta} (\frac{\D}{\delta}-1) }{6} + \frac{\delta}{\D}\frac{\D^2}{4\delta^2} \\
  &= \frac{1}{6}(\frac{\D^2}{2\delta^2}+1-\frac{3\D}{2\delta}) + \frac{\D}{4\delta} \\
  &= \frac{\D^2}{12\delta^2} + \frac{1}{6} \\
  &= \Theta(\frac{\D^2}{12\delta^2}).
\end{align}

Therefore, the multiplicative gap goes to one as $\delta \to 0$, i.e.,
\begin{align}
  \lim_{\delta \to 0} \frac{V_{UB}}{V_{LB}} = 1.
\end{align}

\end{proof}


\section{Proof of Corollary \ref{cor:generalzerodelta}} \label{sec:app3}

\begin{proof}[Proof of Corollary \ref{cor:generalzerodelta}]
  Using the fact that $\loss(\cdot)$ is a monotonically increasing function for $k \ge 0$, we have
  \begin{align}
    V_{UB} - V_{LB} &= 2\sum_{i=1}^{\frac{\D}{2\delta}-1} \frac{\delta}{\D}\loss(i) + \frac{\delta}{\D}\loss(\frac{\D}{2\delta})   - 2\delta \sum_{i=0}^{\frac{1}{2\delta}-1} \loss(1+i\D)  \\
      &\le -2\delta \loss(1) + \frac{\delta}{\D}\loss(\frac{\D}{2\delta}) + 2\delta \loss(\frac{\D}{2\delta}-1) \\
      &\le (2+\frac{1}{\D}) \delta \loss(\frac{\D}{2\delta}).
  \end{align}

Therefore,
\begin{align}
    \frac{V_{UB}}{V_{LB}} &=  1 + \frac{V_{UB}-V_{LB}}{V_{LB}} \\
      &\le 1 + \frac{ (2+\frac{1}{\D}) \delta \loss(\frac{\D}{2\delta})  }{2\delta \sum_{i=0}^{\frac{1}{2\delta}-1} \loss(1+i\D)} \\
      &\le 1 + \frac{ (2+\frac{1}{\D}) \delta \loss(\frac{\D}{2\delta})  }{2\delta   \loss(1+(\frac{1}{2\delta}-1)\D)},
\end{align}
and thus
\begin{align}
   \lim_{\delta \to 0}  \frac{V_{UB}}{V_{LB}} &\le  1 + (1 + \frac{1}{2\D})C.
\end{align}

\end{proof}


\section{Proof of Theorem \ref{thm:generaledloss}} \label{sec:app4}

\begin{proof}[Proof of Theorem \ref{thm:generaledloss}]
  Consider the feasible primal variables $\{p_k\}_{k \in \N}$ defined as
  \begin{align}
    \p_k &=   \begin{cases}
     a b^i &   \text{for} \; k = 1 + i\D, 0 \le i \le n-1 \\
  0   & \mbox{otherwise}
  \end{cases}\label{eqn:defprimal1}
 \end{align}
It is straightforward to verify that the above primal variables satisfy the constraints of the relaxed linear program, and the corresponding value of the  objective function is
\begin{align}
   2 \sum_{k=0}^{n-1} a b^{k} \loss(1 + k\D).
\end{align}

We prove it is also the optimal value by constructing the optimal dual variables for the corresponding dual linear program.

Associating dual variables $\mu, y_0, y_1, y_i$ with the primal constraints in \eqref{eqn:primalaa1},\eqref{eqn:primalaa2},\eqref{eqn:primalaa3} and
\eqref{eqn:primalaa4}, respectively, we have the dual linear program:

\begin{align}
V_{LB} :=  \min &\quad  \mu - (2\delta + e^{\e} - 1) \sum_{k=0}^{+\infty} y_k \\
\mbox{such that} &\quad  \mu \ge 0, y_k \geq 0 \quad \forall k \in N \label{eqn:dualll2}  \\
 &\quad \frac{1}{2} \mu - \frac{1+ e^{\e}}{2} y_0  - \frac{e^{\e}-1}{2} y_1 - \frac{e^{\e}-1}{2} \sum_{k=2}^{+\infty} y_k \le 0      \label{eqn:dualll1}  \\
 &\quad  \mu - e^{\e}y_0 - e^{\e} y_1 - (e^{\e}-1) \sum_{k=2}^{+\infty} y_k  \le \loss(1) \label{eqn:dualll3}  \\
 &\quad \mu - e^{\e} \sum_{l=\max(0,k-\D+1)}^k y_l - (e^{\e}-1)\sum_{l=k+1}^{+\infty} y_l \le \loss(k), \forall k \ge 2. \label{eqn:dualll4}
\end{align}

If the primal variables defined in \eqref{eqn:defprimal1} are the optimal solution, the complementary slackness conditions require that the corresponding dual variables satisfy that
\begin{align}
  \mu &= \loss(1) + e^{\e} (y_0+y_1) + (e^{\e}-1)\sum_{l=2}^{+\infty} y_l \\
  \mu &= \loss(1+\D) + e^{\e} \sum_{l=2}^{1+\D} y_l + (e^{\e}-1)\sum_{l = 2+\D}^{+\infty} y_l \\
  \mu &= \loss(1+k\D) + e^{\e} \sum_{l=2+(k-1)\D}^{1+k\D} y_l + (e^{\e}-1)\sum_{l=2+k\D}^{+\infty} y_l, \forall 1\le k \le n-1, \\
  y_l &= 0, \forall l \ge 2 + (n-1)\D.
\end{align}

Consider the following dual variables defined via
\begin{align}
  \mu &= \loss(1 + (n-1)\D), \\
  y_k &= 0, \forall k \ge 2 + (n-2)\D, \\
  y_k &= b (y_{k+\D} + \loss(k+\D) - \loss(k+\D-1)  ), \forall 2\le k \le  1+ (n-2)\D, \\
  y_1 &= \sum_{i=1}^{n-1} b^i (\loss(1+i\D)-\loss(i\D)), \\
  y_0 &= \sum_{i=1}^{n-1} b^i ( \loss(i\D) - \loss(1+ (i-1)\D) ).
\end{align}

We verify that the above dual variables satisfy the inequality \eqref{eqn:dualll1} in the following
\begin{align}
  & (1+e^{\e})y_0 + (e^\e -1)y_1 + (e^{\e}-1) \sum_{k=2}^{+\infty} y_k -\mu \ge 0 \\
\Leftrightarrow & y_0 - y_1 + e^\e (y_0 + y_1) + (e^{\e}-1) \sum_{k=2}^{+\infty} y_k - \mu \ge 0\\
\Leftrightarrow & y_0 - y_1 + \mu -\loss(1) - \mu \ge 0 \\
\Leftrightarrow & y_0 - y_1  -\loss(1) \ge 0 \\
\Leftrightarrow &\sum_{i=1}^{n-1} b^i (2 \loss(i\D) - \loss(1+(i-1)\D) - \loss(1 + i\D) ) \ge \loss(1).
\end{align}

It is easy to verify that the dual variables satisfy the constraints \eqref{eqn:dualll2}, \eqref{eqn:dualll1}, \eqref{eqn:dualll3} and \eqref{eqn:dualll4} in the dual linear program. Next we compute  the corresponding value of the objective function
\begin{align}
   & \mu - (2\delta + e^{\e} - 1) \sum_{k=0}^{+\infty} y_k \\
   =& \mu - (2\delta + e^{\e} - 1) ( y_0 + y_1 + \frac{\mu - \loss(1) - e^\e (y_0+y_1)}{e^\e - 1}) \\
   =& \mu - \frac{2\delta + e^{\e} - 1}{e^\e - 1} ( \mu - \loss(1) - y_0 - y_1) \\
   =& \loss(1+(n-1)\D) - \frac{2\delta + e^\e -1}{e^\e - 1} (\loss(1+(n-1)\D) - \loss(1) - \sum_{i=1}^{n-1}b^i (\loss(1+i\D) - \loss(1+(i-1)\D)) ) \\
   =& 2 \sum_{k=0}^{n-1} a b^{k} \loss(1 + k\D),
\end{align}
which is also the value of the objecitve function in the primal problem achieved by the primal variables defined in \eqref{eqn:defprimal1}. Therefore, we conclude that
\begin{align}
   V_{LB} = 2 \sum_{k=0}^{n-1} a b^{k} \loss(1 + k\D).
\end{align}

\end{proof}


\section{Proof of Corollary \ref{cor:l1edelta}} \label{sec:app5}

\begin{proof}[Proof of Corollary \ref{cor:l1edelta}]
  For the cost function $\loss(k) = |k|$,
  \begin{align}
    V_{LB} &= 2 \sum_{k=0}^{n-1} a b^{k} \loss(1 + k\D) \\
           &= 2 \sum_{k=0}^{n-1} a b^{k}  (1 + k\D) \\
           &= 1 + 2 a \D \sum_{k=0}^{n-1}  b^{k} k  \\
           &= 1 + 2 a\D ( \frac{b-b^n}{(1-b)^2}  - \frac{(n-1)b^n}{1-b}).
  \end{align}

Given $\delta >0$, $V_{LB}$ is a decreasing function of $\epsilon$. Therefore, to lower bound $\frac{V_{UB}^{\mbox{uniform}}}{V_{LB}}$ in the regime $\e \le \delta$, we only need to consider the case $\epsilon = \delta $. Thus, in the following we set $\epsilon = \delta$.

Since $\sum_{k=0}^{n-1}a b^k = \frac{1}{2}$, we have
\begin{align}
  a\frac{1-b^n}{1-b} &= \frac{1}{2} \\
  \Leftrightarrow b^n &= 1 - \frac{1-b}{2a}.
\end{align}

As $\delta \to 0$,  $\frac{1-b}{2a} = \frac{1- e^{-\e}}{2 \frac{\delta + \frac{e^\e - 1}{2}}{e^\e}} \to \frac{1}{3}$, and thus
\begin{align}
  \lim_{\delta \to 0} b^n &= 1 - \frac{1}{3} = \frac{2}{3}, \\
  n &= \Theta(\frac{\log (\frac{3}{2})}{\e}).
\end{align}

Note that $a = \Theta(\frac{3}{2} \delta) $ as $\delta \to 0$.

Therefore, as $\delta \to 0$,
\begin{align}
  V_{LB} &\approx 2 \D a (\frac{1 -  \frac{2}{3} }{\e^2} - \frac{\frac{\log (\frac{3}{2})}{\e}\frac{2}{3} }{\e}) \\
  &\approx 2 \D  \frac{3}{2} \delta ( \frac{1}{3\delta^2} - \frac{\frac{2}{3} \log (\frac{3}{2})}{\delta^2}) \\
  &= \frac{\D}{\delta} (1 - 2 \log \frac{3}{2}) \\
  &\approx 0.19 \frac{\D}{\delta}.
\end{align}

Recall $V_{UB}^{\mbox{uniform}} = \frac{\D}{4\delta}$.

Therefore,
\begin{align}
  \lim_{\e = \delta \to 0} \frac{V_{UB}}{V_{LB}} = \frac{1}{4(1 - 2 \log \frac{3}{2})} \approx 1.32,
\end{align}

and thus
\begin{align}
  \lim_{\e \le \delta \to 0} \frac{V_{UB}}{V_{LB}} \le \frac{1}{4(1 - 2 \log \frac{3}{2})} \approx 1.32,
\end{align}

\end{proof}


\section{Proof of Corollary \ref{cor:l2edelta}} \label{sec:app6}

\begin{proof}[Proof of Corollary \ref{cor:l2edelta}]
  Using the same argument in the proof of Corollary \ref{cor:l1edelta}, we can set $\epsilon = \delta$.

  For the cost function $\loss(k) = k^2$,
  \begin{align}
    V_{LB} &= 2 \sum_{k=0}^{n-1} a b^{k} \loss(1 + k\D) \\
           &= 2 \sum_{k=0}^{n-1} a b^{k}  (1 + k\D)^2 \\
           &= 1 + 4 a \D \sum_{k=0}^{n-1}  b^{k} k  +  2 a \D^2 \sum_{k=0}^{n-1}  b^{k} k^2   \\
           &\approx 2 a \D^2 \sum_{k=0}^{n-1}  b^{k} k^2 \\
           &= 2 \D^2 \frac{\delta + \frac{e^\e -1}{2}}{e^{\e}}  \frac{-b + 2 ( \frac{b(1-b^{n-1})}{(1-b)^2} - \frac{(n-1) b^{n}}{1-b} ) - \frac{b^2(1-b^{n-2})}{1-b} - (n-1)^2 b^{n}}{1-b}\\
           &\approx 2\D^2 \frac{3}{2}\e \frac{  2(\frac{1-\frac{2}{3}}{\e^2} - \frac{\frac{2}{3}\log (\frac{3}{2})}{\e^2} )  -  \frac{1}{3\e}  - \frac{2}{3} \frac{(\log (\frac{3}{2}))^2}{\e^2} }{\e} \\
           &\approx \frac{3\D^2}{\e^2} (\frac{2}{3} - \frac{4}{3} \log (\frac{3}{2})  -   \frac{2}{3} (\log (\frac{3}{2}))^2 )\\
            &= \frac{\D^2}{\e^2} (2 - 4 \log (\frac{3}{2})  -   2 (\log (\frac{3}{2}))^2 )\\
           &\approx  \frac{\D^2}{20\e^2} \\
           &=   \frac{\D^2}{20\delta^2}
 \end{align}

Recall $V_{UB}^{\mbox{uniform}} = \frac{\D^2}{12\delta^2}$.

Therefore,
\begin{align}
  \lim_{\epsilon = \delta \to 0} \frac{V_{UB}^{\mbox{uniform}}}{V_{LB}} =  \frac{1}{12 (2 - 4 \log (\frac{3}{2})  -   2 (\log (\frac{3}{2}))^2 )} \approx \frac{5}{3},
\end{align}
and thus
\begin{align}
  \lim_{\e \le \delta \to 0} \frac{V_{UB}^{\mbox{uniform}}}{V_{LB}} \le  \frac{1}{12 (2 - 4 \log (\frac{3}{2})  -   2 (\log (\frac{3}{2}))^2 )} \approx \frac{5}{3}.
\end{align}

\end{proof}


\section{Proof of Corollary \ref{cor:l1deltae}} \label{sec:app7}

\begin{proof}[Proof of Corollary \ref{cor:l1deltae}]

  For the cost function $\loss(k) = |k|$,
  \begin{align}
    V_{LB} &= 2 \sum_{k=0}^{n-1} a b^{k} \loss(1 + k\D) \\
           &= 2 \sum_{k=0}^{n-1} a b^{k}  (1 + k\D) \\
           &= 1 + 2 a \D \sum_{k=0}^{n-1}  b^{k} k  \\
           &= 1 + 2 a\D ( \frac{b-b^n}{(1-b)^2}  - \frac{(n-1)b^n}{1-b}).
  \end{align}

Given $\e >0$, $V_{LB}$ is a decreasing function of $\delta$. Therefore, to lower bound $\frac{V_{UB}^{\mbox{Lap}}}{V_{LB}}$ in the regime $\delta \le \e$, we only need to consider the case $\delta  = \epsilon $. Thus, in the following we set $\delta = \epsilon$.

Following the same calculations in the proof of Corollary \ref{cor:l1edelta}, we have
\begin{align}
  V_{LB} &\approx \frac{\D}{\delta} (1 - 2 \log \frac{3}{2}) \\
  &\approx 0.19 \frac{\D}{\delta} \\
  &=  0.19 \frac{\D}{\e}.
\end{align}

On the other hand,
we have
\begin{align}
  V_{UB}^{\mbox{Lap}} &=  2 \sum_{k=1}^{+\infty}  \frac{1-\lambda}{1+\lambda} \lambda^k k \\
  &= \frac{2e^{-\frac{\e}{\D}}}{1 - e^{-2\frac{\e}{\D}}} \\
  &\approx \frac{\D}{\e},
\end{align}
as $\e \to 0$.

Therefore,
\begin{align}
  \lim_{\e = \delta \to 0} \frac{V_{UB}^{\mbox{Lap}}}{V_{LB}} = \frac{1}{1 - 2 \log \frac{3}{2}} \approx 5.29,
\end{align}
and thus
\begin{align}
  \lim_{\e \le \delta \to 0} \frac{V_{UB}^{\mbox{Lap}}}{V_{LB}} \le \frac{1}{1 - 2 \log \frac{3}{2}} \approx 5.29.
\end{align}

\end{proof}


\section{Proof of Corollary \ref{cor:l2deltae}} \label{sec:app8}

\begin{proof}[Proof of Corollary \ref{cor:l2deltae}]
  Using the same argument in the proof of Corollary \ref{cor:l1deltae}, we can set $\epsilon = \delta$.

  For the cost function $\loss(k) = k^2$, following the same calculations in the proof of Corollary \ref{cor:l2edelta}, we have
  \begin{align}
    V_{LB} &\approx \frac{\D^2}{\e^2} (2 - 4 \log (\frac{3}{2})  -   2 (\log (\frac{3}{2}))^2 )\\
           &\approx  \frac{\D^2}{20\e^2}
 \end{align}

On the other hand, we have
\begin{align}
  V_{UB}^{\mbox{Lap}} &=  2 \sum_{k=1}^{+\infty}  \frac{1-\lambda}{1+\lambda} \lambda^k k^2 \\
  &= \frac{2 \lambda}{(1 - \lambda)^2} \\
  &\approx 2\frac{\D^2}{\e^2},
\end{align}
as $\e \to 0$.

Therefore,
\begin{align}
  \lim_{\epsilon = \delta \to 0} \frac{V_{UB}^{\mbox{Lap}}}{V_{LB}} =  \frac{2}{(2 - 4 \log (\frac{3}{2})  -   2 (\log (\frac{3}{2}))^2 )} \approx 40,
\end{align}
and thus
\begin{align}
  \lim_{\e \le \delta \to 0} \frac{V_{UB}^{\mbox{Lap}}}{V_{LB}} \le  \frac{2}{(2 - 4 \log (\frac{3}{2})  -   2 (\log (\frac{3}{2}))^2 )} \approx 40.
\end{align}

\end{proof}


\section{Proof of Theorem \ref{thm:l1multilbzerodelta}} \label{sec:app9}

\begin{IEEEproof}[Proof of Theorem \ref{thm:l1multilbzerodelta}]
  Consider the dual program of the linear program \eqref{eqn:primalmulti},

\begin{align*}
   V_{LB} :=  \max &\quad  \mu - \delta \left (\sum_{i_1 \in \Z} y_{i_1}^{(1)}  + \sum_{i_2 \in \Z} y_{i_2}^{(2)} + \cdots + \sum_{i_d \in \Z} y_{i_d}^{(d)} \right) \\
 \\
\mbox{such that} &\quad  y_{i_1}^{(1)}, y_{i_2}^{(2)}, \dots, y_{i_d}^{(d)} \ge 0, \forall i_1 \in \Z, i_2 \in \Z, \dots, i_d \in \Z  \\
  \mu - \sum_{ i_1 \in [k_1 -\D +1, k_1]} y_{i_1}^{(1)} - &\dots - \sum_{ i_d \in [k_d -\D +1, k_d]} y_{i_d}^{(d)} \le |k_1| + |k_2| + \cdots + |k_d|, \forall (k_1,\dots,k_d)\in \Z^d   .
\end{align*}

Consider a candidate solution with
\begin{align}
  \mu &= \frac{d\D}{2\delta}
\end{align}
and for all $m \in \{1,2,\dots,d\}$,
\begin{align}
y_i^{(m)} = \begin{cases}
     \frac{\mu}{d} &     i = 0 \\
  \max(\frac{\mu}{d} - k \D, 0)   &  i = k\D, \; \text{for}\; k \in \Z, k \ge 1 \\
  \max(\frac{\mu}{d} - (|k|-1) \D - 1, 0)   &  i = k\D, \; \text{for}\; k \in \Z, k \le -1 \\
 0  & \mbox{otherwise}
  \end{cases}
\end{align}

It is easy to verify that this candidate solution satisfies the constraints, and the corresponding  value of the objective function is
\begin{align}
  & \mu - \delta \left (\sum_{i_1 \in \Z} y_{i_1}^{(1)}  + \sum_{i_2 \in \Z} y_{i_2}^{(2)} + \cdots + \sum_{i_d \in \Z} y_{i_d}^{(d)} \right) \\
  = & \mu - \delta d  \sum_{i_1 \in \Z} y_{i_1}^{(1)} \\
  = & \mu - \delta d  \left(  \sum_{i=0}^{\frac{\mu}{d\D}} (\frac{\mu}{d} - i\D) + \sum_{i=0}^{\frac{\mu}{d\D}-1} (\frac{\mu}{d} - i\D -1)    \right ) \\
  = & \mu - \delta d \left( \frac{ \frac{\mu}{d} ( \frac{\mu}{d\D}  + 1 ) }{2}  + \frac{(\frac{\mu}{d}+\D-2)\frac{\mu}{d\D}}{2}    \right) \\
  = & \mu - \delta d (\frac{\mu^2}{d^2\D} + \frac{\mu}{d} - \frac{\mu}{d\D}) \\
  = & \mu - \delta(\frac{\mu^2}{d\D} + \mu - \frac{\mu}{\D}) \\
  = & \frac{d\D}{4\delta} - \frac{\D-1}{2}d.
\end{align}


Therefore, we have
\begin{align}
   V_{LB} \ge \frac{d\D}{4\delta} - \frac{\D-1}{2}d.
\end{align}

\end{IEEEproof}


\section{Proof of Theorem \ref{thm:l2multilbzerodelta}} \label{sec:app11}

\begin{proof}[Proof of Theorem \ref{thm:l2multilbzerodelta}]

 Consider the dual program of the linear program \eqref{eqn:primalmulti},
\begin{align*}
   V_{LB} :=  \max &\quad  \mu - \delta \left (\sum_{i_1 \in \Z} y_{i_1}^{(1)}  + \sum_{i_2 \in \Z} y_{i_2}^{(2)} + \cdots + \sum_{i_d \in \Z} y_{i_d}^{(d)} \right) \\
 \\
\mbox{such that} &\quad  y_{i_1}^{(1)}, y_{i_2}^{(2)}, \dots, y_{i_d}^{(d)} \ge 0, \forall i_1 \in \Z, i_2 \in \Z, \dots, i_d \in \Z  \\
  \mu - \sum_{ i_1 \in [k_1 -\D +1, k_1]} y_{i_1}^{(1)} - \dots - & \sum_{ i_d \in [k_d -\D +1, k_d]} y_{i_d}^{(d)} \le |k_1|^2 + |k_2|^2 + \cdots + |k_d|^2, \forall (k_1,\dots,k_d)\in \Z^d   .
\end{align*}

To avoid integer-rounding issues, assume that $\frac{1}{2\delta}$ is an integer. Consider a candidate solution with
\begin{align}
  \mu &= \frac{d\D^2}{4\delta^2}
\end{align}
and for all $m \in \{1,2,\dots,d\}$,
\begin{align}
y_i^{(m)} = \begin{cases}
     \frac{\mu}{d} &     i = 0 \\
   \frac{\mu}{d} - k^2 \D^2    &  i = k\D, \; \text{for}\; 1 \le k \ge \frac{1}{2\delta} \\
  \frac{\mu}{d} - \left( (|k|-1) \D + 1\right)^2   &  i = k\D, \; \text{for}\;  - \frac{1}{2\delta}  \le k \le -1 \\
 0  & \mbox{otherwise}
  \end{cases}
\end{align}

It is easy to verify that this candidate solution satisfies the constraints, and the corresponding  value of the objective function is
\begin{align}
  & \mu - \delta \left (\sum_{i_1 \in \Z} y_{i_1}^{(1)}  + \sum_{i_2 \in \Z} y_{i_2}^{(2)} + \cdots + \sum_{i_d \in \Z} y_{i_d}^{(d)} \right) \\
  = & \mu - \delta d  \sum_{i_1 \in \Z} y_{i_1}^{(1)} \\
  = & \mu - \delta d  \left(  \sum_{i=0}^{\frac{1}{2\delta}} (\frac{\mu}{d} - i^2\D^2) + \sum_{i=0}^{\frac{1}{2\delta}-1} (\frac{\mu}{d} - (i\D +1)^2 )    \right ) \\
  = &  \mu - \delta d \left(  (\frac{1}{2\delta}+1)\frac{\mu}{d} - \D^2 \frac{ \frac{1}{2\delta}(\frac{1}{2\delta}+1)(\frac{1}{\delta}+1) }{6}   + \frac{1}{2\delta}\frac{\mu}{d} - \frac{1}{2\delta} - \D^2 \frac{ (\frac{1}{2\delta} -1) \frac{1}{2\delta}(\frac{1}{\delta}-1) }{6}   -   \D  \frac{1}{2\delta} (\frac{1}{2\delta} - 1) \right) \\
  = & \mu - \delta d \left(   (\frac{1}{\delta}+1)\frac{\mu}{d} - \frac{\D^2 \frac{1}{2\delta} ( \frac{1}{2 \delta^2} + 1 ) }{3} - \frac{1}{2\delta} - \D \frac{1}{2\delta} (\frac{1}{2\delta} - 1)       \right) \\
  = & \frac{d\D^2}{12\delta^2}  + (\frac{1}{\D}-1) \frac{d\D^2}{4\delta} + \frac{1-\D}{2}d + \frac{d\D^2}{6}.
\end{align}


Therefore, we have
\begin{align}
   V_{LB} \ge \frac{d\D^2}{12\delta^2}  + (\frac{1}{\D}-1) \frac{d\D^2}{4\delta} + \frac{1-\D}{2}d + \frac{d\D^2}{6}.
\end{align}

\end{proof}


\section{Proof of Theorem \ref{eqn:loweredelta}} \label{sec:app10}

\begin{IEEEproof}[Proof of Theorem \ref{eqn:loweredelta}]
Consider a candidate solution with $\mu = \frac{d\D \log \frac{3}{2}}{\beta}$  (assuming $k \triangleq \frac{\mu}{d\D}$ is an integer), and
for all $m \in \{1,2,\dots,d\}$,
\begin{align}
y_i^{(m)} = \begin{cases}
     0 &     i \le -k\D \\
   e^\beta y_{i-\D}^{(m)} + 1   &  i  \in [-k\D+1, 0] \\
  \max( e^\beta y_{i-\D}^{(m)} - 1, 0)   &  i  \ge 0 .
  \end{cases}
\end{align}

It is easy to verify that the above candidate solution satisfies the constraints of the dual linear program. We can derive the analytical expression for $y_i^{m}$, which is
\begin{align}
y_i^{(m)} = \begin{cases}
     0 &     i \le -k\D \\
    \frac{ e^{ (k-j)\beta} - 1 }{e^\beta - 1}   &  i  \in [-(j+1)\D + 1, -j \D], \text{for} j \in [0, k-1] \\
  \max( e^{j\beta} \frac{e^{k\beta} - 2}{e^\beta -1} + \frac{1}{e^\beta -1} , 0)   &  i \in [ (j-1)\D+1, j\D] .
  \end{cases}
\end{align}

To avoid integer-rounding issues, assume that $n \triangleq \frac{1}{\beta} \log \frac{1}{2-e^{k\beta}} = \frac{\log 2}{\beta}$ is an integer. Then the value of the objective function with this candidate solution is
\begin{align}
  & \mu - \beta \left (\sum_{i_1 \in \Z} y_{i_1}^{(1)}  + \sum_{i_2 \in \Z} y_{i_2}^{(2)} + \cdots + \sum_{i_d \in \Z} y_{i_d}^{(d)} \right) \\
  = & \mu - \beta d  \sum_{i_1 \in \Z} y_{i_1}^{(1)} \\
  = & \mu - \beta d \D \left(  \sum_{i=1}^{k} \frac{e^{i\beta}-1}{e^\beta-1}  + \sum_{i=1}^{n}  (e^{i\beta} \frac{e^{k\beta}-2}{e^\beta -1} + \frac{1}{e^\beta-1})   \right )  \\
  = & \mu - \beta d \D \left(  \frac{ \frac{e^\beta (1 - e^{k\beta}) }{1 - e^\beta} - k }{e^\beta - 1}  + \frac{e^{k\beta}-2}{e^\beta - 1} \frac{e^{\beta}(1-e^{n\beta})}{1 - e^\beta} + \frac{n}{e^\beta - 1} \right) \\
  = &  \frac{ d\D \log \frac{3}{2} }{\beta} - \beta d \D \left(  \frac{ \frac{e^\beta (1 - \frac{3}{2}) }{1 - e^\beta} - \frac{\log \frac{3}{2}}{\beta} }{e^\beta - 1}  + \frac{-\frac{1}{2}}{e^\beta - 1} \frac{e^\beta (1 - 2)}{1 - e^{\beta}}  + \frac{\log 2}{\beta(e^\beta - 1)}    \right)  \\
  = & \frac{ d\D \log \frac{3}{2} }{\beta} - \beta d \D \left( \frac{e^\beta}{2(e^\beta-1)^2}  - \frac{\log \frac{3}{2}}{\beta(e^\beta-1)}  - \frac{e^\beta}{2(e^\beta -1)^2}  + \frac{\log 2}{\beta(e^\beta - 1)}  \right)   \\
  = & \Theta \left( \frac{d\D}{\beta} ( \log \frac{3}{2} - \frac{1}{2} + \log \frac{3}{2} + \frac{1}{2} - \log 2 )  \right)  \\
  = &  \log \frac{9}{8}\Theta \left(   \frac{d\D}{\beta}  \right) \\
  \approx & \Theta \left( 0.1178  \frac{d\D}{\beta}  \right),
\end{align}
as $\beta \triangleq \max(\e, \delta) \to 0$.

Therefore,
\begin{align}
  \lim_{\max(\e,\delta) \to 0 } \frac{V'_{LB}}{\frac{d\D}{\beta}} \ge  \log \frac{9}{8} \approx 0.1178
\end{align}

\end{IEEEproof}


\section{Proof of Theorem \ref{eqn:loweredeltal2}} \label{sec:app12}

\begin{proof}[Proof of Theorem \ref{eqn:loweredeltal2}]

Let $\alpha = \frac{3}{2}$.
Consider a candidate solution with $\mu = \frac{d\D^2 \log^2 \alpha }{\beta^2}$  (assuming $k \triangleq \frac{\sqrt{\frac{\mu}{d}}}{\D} = \frac{\log \alpha}{\beta}$ is an integer), and
for all $m \in \{1,2,\dots,d\}$,
\begin{align}
y_i^{(m)} = \begin{cases}
     0 &     i \le -k\D \\
   e^\beta y_{i-\D}^{(m)} + 2|i|+1   &  i  \in [-k\D+1, 0] \\
  \max( e^\beta y_{i-\D}^{(m)} - (2i + 1), 0)   &  i  \ge 0 .
  \end{cases}
\end{align}

It is easy to verify that the above candidate solution satisfies the constraints of the dual linear program.

Define
\begin{align}
 z_1 &= \frac{2}{e^{\beta}-1}, \\
 z_2 &= \frac{1 - \frac{2e^{\beta}\D}{e^{\beta}-1}}{e^{\beta}-1}, \\
 z_3 &= \frac{2}{1-e^{\beta}}, \\
 z_4 &= \frac{1 - \frac{2e^{\beta}\D}{1-e^{\beta}} }{1 - e^{\beta}}.
\end{align}

We can derive the analytical expression for $y_i^{m}$, which is
\begin{align}
y_i^{(m)} = \begin{cases}
     0 &     i \le -k\D \\
     e^{(k-k')\beta} \left( z_1 (k\D + j) + z_2   \right)  - z_1 (k'\D + j) - z_2  &  i = - (k'\D + j),   \text{for} k' \in [0, k-1], j \in [0, \D-1]
  \end{cases}
\end{align}

and for  $i = (m-1)\D + j$,  where $j \in [1,\D], m \ge 1$,
\begin{align}
  y_i^{(m)} =  \max(  a_{m,j}  , 0) ,
\end{align}

where
\begin{align}
  a_{m,j} \triangleq   e^{m\beta} \left(z_1 (k\D + \D - j) +  z_2  ) - z_1(\D-j) - z_2 - z_3(\D - j) + z_4   \right)  - z_4 - z_3 ( (m-1)\D + j ) .
\end{align}

For each $j \in [1,\D]$, and we are interested in finding the number $m(j)$ such that $a_{m(j),j} = 0$. As $\beta \to 0$, from $a_{m(j),j} = 0$, we get
\begin{align}
  e^{m(j)\beta} e^{k\beta} (\frac{2}{\beta}k\D - \frac{2\D}{\beta^2}) = -\frac{2\D}{\beta^2} - \frac{2}{\beta}m(j)\D + o(\frac{1}{\beta^2}).
\end{align}

Therefore,
\begin{align}
  m(j) = \frac{\log \gamma}{\beta} + o(\frac{1}{\beta}),
\end{align}
where $\gamma$ is the solution to
\begin{align}
  \gamma \alpha (\log \alpha - 1) = - (1 + \log \gamma).
\end{align}

When $\alpha = \frac{3}{2}$, we have $\gamma \approx 1.7468$.

Therefore, the value of the objective function is

\begin{align}
  & \mu - \beta \left (\sum_{i_1 \in \Z} y_{i_1}^{(1)}  + \sum_{i_2 \in \Z} y_{i_2}^{(2)} + \cdots + \sum_{i_d \in \Z} y_{i_d}^{(d)} \right) \\
  = & \mu - \beta d  \sum_{i_1 \in \Z} y_{i_1}^{(1)} \\
  = & \mu - \beta d \left( \sum_{k' = 0}^{k-1} \sum_{j =0}^{\D-1}   y_{-(k'\D+j)}^{(1)} +  \sum_{j = 1}^{\D} \sum_{m=1}^{m(j)} y_{ (m-1)\D+j}^{(1)}    \right) \\
  = & \frac{d\D^2\log^2 \alpha}{\beta^2} - \beta d   \left( \frac{1-e^{-k\beta}}{1-e^{-\beta}} e^{k\beta} ( (z_1 k\D + z_2)\D + z_1 \frac{\D(\D-1)}{2} )  - z_1 \D^2 \frac{k(k-1)}{2} - z_1 k \frac{\D(\D-1)}{2} - z_2 k\D \right)  \nonumber \\
  &- \beta d  \sum_{j=1}^{\D}  (  \frac{e^\beta (1-e^{m(j)\beta})}{1-e^{\beta}} ( e^{k\beta}(z_1(k\D+\D-j)+z_2 ) - z_1(\D-j) \nonumber \\
  & - z_2 - z_3(\D-j) + z_4 )  - z_4 m(j) - z_3 \D \frac{m(j)(m(j)+1)}{2} + z_3(\D-j)m'   )  \\
  = & \frac{d\D^2}{\beta^2} \left(  \log^2 \alpha - (\alpha-1)(2\log\alpha -2) + \log^2 \alpha - 2 \log \alpha + (1-\gamma)\alpha (2\log \alpha -2) - 2\log \gamma - \log^2 \gamma    \right)  + o(\frac{1}{\beta^2} ) \\
  = & \frac{d\D^2}{\beta^2} \left(  2 \log^2 \alpha -2 - 2\alpha \gamma \log \alpha + 2\alpha \gamma - 2 \log \gamma - \log^2 \gamma  \right)   + o(\frac{1}{\beta^2} )  \\
  \approx & 0.0177 \frac{d\D^2}{\beta^2} + o(\frac{1}{\beta^2} ) .
\end{align}
as $\beta \triangleq \max(\e, \delta) \to 0$.

Therefore,
\begin{align}
  \lim_{\max(\e,\delta) \to 0 } \frac{V'_{LB}}{\frac{d\D^2}{\beta^2}} \ge   0.0177.
\end{align}


\end{proof}

\bibliographystyle{IEEEtran}

\bibliography{reference}

\end{document}